\newtheorem{theorem}{Theorem}[section]
\newtheorem{lemma}[theorem]{Lemma}
\newtheorem{proposition}[theorem]{Proposition}
\newtheorem{definition}{Definition}[section]
\newtheorem{hypothesis}{Hypothesis}[section]
\numberwithin{equation}{section}
\newcommand{\di}{\mathop{}\!\mathrm{d}}
\newcommand{\ci}{\mathrm{i}}
\newcommand{\tr}{\mathrm{Tr}}
\newcommand{\les}{\leqslant}
\newcommand{\gre}{\geqslant}
\newcommand{\rt}{\mathrm{t}}
\newcommand{\ft}{\bm{\mathrm{t}}}
\newcommand{\emm}{\hat{\mu}_N}
\newcommand{\res}{\mathrm{Res}}
\title{\Large{\textbf{Random Finite Noncommutative Geometries\\
and\\
Topological Recursion}}}
\author{Shahab Azarfar and Masoud Khalkhali}
\affil{Department of Mathematics, University of Western Ontario\\
London, Ontario, Canada\footnote{\emph{Email addresses}: sazarfar@uwo.ca, masoud@uwo.ca}}
\date{}
\begin{document}

\maketitle

\begin{abstract}
In this paper we investigate a model for quantum gravity on finite noncommutative spaces using the theory of blobbed topological recursion. The model is based on a particular class of random finite real spectral triples ${(\mathcal{A}, \mathcal{H}, D , \gamma , J) \,}$, called random matrix geometries of type ${(1,0) \,}$, with a fixed fermion space ${(\mathcal{A}, \mathcal{H}, \gamma , J) \,}$, and a distribution of the form ${e^{- \mathcal{S} (D)} {\mathop{}\!\mathrm{d}} D}$ over the moduli space of Dirac operators. The action functional ${\mathcal{S} (D)}$ is considered to be a sum of terms of the form ${\prod_{i=1}^s \mathrm{Tr} \left( {D^{n_i}} \right)}$ for arbitrary ${s \geqslant 1 \,}$. The Schwinger-Dyson equations satisfied by the connected correlators ${W_n}$ of the corresponding multi-trace formal 1-Hermitian matrix model are derived by a differential geometric approach. It is shown that the coefficients ${W_{g,n}}$ of the large $N$ expansion of ${W_n}$'s enumerate discrete surfaces, called stuffed maps, whose building blocks are of particular topologies. The spectral curve 
${\left( {\Sigma , \omega_{0,1} , \omega_{0,2}} \right)}$ of the model is investigated in detail. In particular, we derive an explicit expression for the fundamental symmetric bidifferential ${\omega_{0,2}}$ in terms of the formal parameters of the model.
\end{abstract}

\tableofcontents

\section{Introduction and basics}

In metric noncommutative geometry, the formalism of \emph{spectral triples} \cite{Connes-1995} encodes, in the commutative case, the data of a  Riemannian metric  on a spin manifold in terms of the Dirac operator. More precisely, by \emph{Connes' reconstruction theorem} \cite{Connes-2013} one knows that a spin Riemannian manifold can be fully constructed if we are given a commutative spectral triple satisfying some natural conditions like reality, and regularity. A simple manifestation of this fact is a distance formula \cite{Connes-1989} according to which one can recover the Riemannian metric from the interaction between the Dirac operator and the algebra of smooth functions on a spin manifold through their actions on the Hilbert space of ${L^2}$-spinors. This naturally leads to the view that spectral triples in general, without commutativity assumption, can be regarded as noncommutative spin Riemannian manifolds. 

The data of a real spectral triple ${(\mathcal{A}, \mathcal{H}, D , \gamma , J)}$ consists of a ${^\ast}$-algebra ${\mathcal{A}}$ together with a ${^\ast}$-representation ${\pi : \mathcal{A} \to \mathcal{L} (\mathcal{H})}$ on a Hilbert space $\mathcal{H}$, a self-adjoint Dirac operator $D$, a ${\mathbb{Z}/2}$-grading $\gamma$, and an anti-linear isometry $J$ acting on $\mathcal{H}$. The above-mentioned operators should satisfy certain (anti)-commutation relations and technical functional analytic conditions (see \cite{Connes-1995}, \cite{Connes-Chamseddine-Marcolli-2007} and \cite{Connes-Marcolli-NCG-QFT} for the detailed axiomatic definition of a real spectral triple).

A spectral triple ${\left( { \mathcal{A}, \mathcal{H} , D} \right)}$ is called \emph{finite} if the Hilbert space $\mathcal{H}$ is finite dimensional, i.e. ${\mathcal{H} \cong \mathbb{C}^n \, }$. The data 
${(\mathcal{A} , \mathcal{H} , \gamma , J)}$ corresponding to a finite real spectral triple is referred to as a \emph{fermion space}. Given a fixed fermion space ${(\mathcal{A} , \mathcal{H} , \gamma , J)}$, the \emph{moduli space} of Dirac operators of the finite real spectral triple 
${(\mathcal{A}, \mathcal{H}, D , \gamma , J)}$ consists of all possible self-adjoint operators $D$ (up to unitary equivalence) which satisfy the axiomatic definition of a real spectral triple \cite{Marcolli-NoncomCosmology}. It is considered as the space of all possible geometries, that is Riemannian metrics, over the noncommutative space ${(\mathcal{A} , \mathcal{H} , \gamma , J)}$.

The theory of spectral triples has been used in constructing geometric models of matter coupled to gravity, using an action functional, called the \emph{spectral action}, which is given in terms of the spectrum of the  Dirac operator (see \cite{Connes-1996}, \cite{Connes-Chamseddine-1997}, \cite{Connes-2006} and \cite{Connes-Chamseddine-Marcolli-2007}; see also \cite{Gesteau-2019} for a recent work in the spirit of matrix models). 

This paper is about a second application of the idea of spectral triples by creating a connection with the recently emerged theory of \emph{topological recursion} \cite{Eynard-Orantin-2007}. 
Roughly speaking, if we understand quantization of gravity as a path integral over the space of metrics, it is natural to consider models of Euclidean quantum gravity over a finite noncommutative space in which one integrates over the moduli space of Dirac operators for a fixed fermion space. Given a fermion space ${(\mathcal{A} , \mathcal{H} , \gamma , J)}$, the distribution over the moduli space of Dirac operators is considered to be of the form
\begin{equation}
e^{- \mathcal{S} (D)} \di D \, ,
\end{equation}
where the action functional ${\mathcal{S} (D)}$ is defined in terms of the spectrum of the Dirac operator $D$.

The investigation of the relation between models of quantum gravity on a certain class of finite noncommutative spaces and (anti)-Hermitian matrix ensembles started in the work of Barrett and Glaser (\cite{Barrett-Glaser-2016}, cf. also \cite{Barrett-2015}), although largely through numerical 
simulations. In this paper we consider a much larger class of models and show that an analytic approach to analyzing these models is possible, using techniques of \emph{topological recursion} 
and \emph{blobbed topological recursion} pioneered by Eynard, Orantin 
\cite{Eynard-Orantin-2007, Eynard-Orantin-2009}, Chekhov \cite{Chekhov-Eynard-Orantin-2006}, and 
Borot \cite{Borot-2014, Borot-Eynard-Orantin-2015, Borot-Shadrin-2017}.  

In the following, we recall the definition of a particular type of finite real spectral triples whose Dirac operators are classified in terms of (anti)-Hermitian matrices in \cite{Barrett-2015}. Denote the real Clifford algebra associated to the vector space ${\mathbb{R}^n}$ and the pseudo-Euclidean metric ${\eta}$ of signature ${(p,q) \,}$, given by 
\begin{equation} 
{\eta} (v,v) = {{v_1}^2} + \cdots + {{v_p}^2} - {{v_{p+1}}^2} - \cdots -  {{v_{p+q}}^2} \,
, \quad v \in \mathbb{R}^n \, ,
\end{equation}
by ${{\mathrm{C \ell}}_{p,q} \,}$.\footnote{Given a quadratic form ${\mathrm{q}}$ on a vector space $V$, we follow the convention to define the Clifford algebra ${\mathrm{C \ell} (V , \mathrm{q})}$ associated to $V$ and $\mathrm{q}$ as ${ {\mathrm{C \ell} (V , \mathrm{q})} \coloneqq \mathcal{T} (V) / \mathcal{I}_{\mathrm{q}} (V)\, }$, where ${\mathcal{T} (V) = \sum_{r=0}^{\infty}  V^{\otimes r} \,}$, and ${\mathcal{I}_{\mathrm{q}} (V)}$ denotes the two-sided ideal generated by elements of the form ${v \otimes v - \mathrm{q}(v) \mathds{1}}$ for ${v \in V}$.} Consider the 
complexification ${ {\mathbb{C} \ell}_n \coloneqq {\mathrm{C \ell}}_{p,q} {\otimes_{\mathbb{R}}} \mathbb{C}\,}$ of ${{\mathrm{C \ell}}_{p,q} \,}$. Let ${ {\{ e_i \}}_{i=1}^n }$ be the standard oriented orthonormal basis, i.e. ${{\eta} (e_i , e_j) = \pm \delta_{ij} \,}$, for ${\mathbb{R}^n}$. The \emph{chirality operator} $\Gamma$ is given by
\begin{equation}
\Gamma = \ci^{\frac{1}{2} s (s+1)} \, e_1 e_2 \cdots e_n \, ,
\end{equation}
where ${s \equiv q-p \pmod{8} \, , \ 0 \les s < 8 \,}$. 
We denote by ${V_{p,q}}$ the unique (up to unitary equivalence) irreducible complex ${{\mathrm{C \ell}}_{p,q} \,}$-module, where, for $n=p+q$ odd, the chirality operator $\Gamma$ acts trivially on ${V_{p,q} \,}$.\footnote{Let ${\rho : {\mathrm{C \ell}}_{p,q} \to \text{Hom}_{\mathbb{C}} (V , V)}$ be an irreducible complex unitary representation of ${{\mathrm{C \ell}}_{p,q} \,}$. It can be 
shown \cite{Lawson-SpinGeo} that: 
\begin{itemize}
\item
If ${n=p+q}$ is even, then the representation $\rho$ is unique up to unitary equivalence;
\item
If ${n=p+q}$ is odd, then either ${\rho (\Gamma) = \mathds{1}_V}$ or 
${\rho (\Gamma) = - \mathds{1}_V}$. Both possibilities can occur, and the corresponding representations are inequivalent.
\end{itemize}}\footnote{We have ${V_{p,q} \cong \mathbb{C}^k}$, where ${k= 2^{n/2}}$ (resp. ${k= 2 ^{(n-1)/2}}$) for $n$ even (resp. odd).} We refer to ${\gamma^i = \rho (e_i) \, , \,}$ ${i=1 , \cdots , n \,}$, i.e. the Clifford multiplication by ${e_i}$'s, as the \emph{gamma matrices}. There exist a Hermitian inner product ${\langle \cdot , \cdot \rangle}$ on ${V_{p,q}}$ such that the gamma matrices act unitarily with respect to it, i.e. ${\langle \gamma^i u , \gamma^i v \rangle = \langle u , v \rangle \, , \,}$ ${i=1 , \cdots , n}$ \cite{Lawson-SpinGeo}.\footnote{For ${i=1,\cdots , p}$ (resp. ${i= p+1 , \cdots , n}$) the gamma matrix ${\gamma^i}$ is Hermitian (resp. anti-Hermitian) with respect to 
${{\langle \cdot , \cdot \rangle} \,}$.} 

Consider the Hilbert space ${\left( { V_{p,q} \, , \, {\langle \cdot , \cdot \rangle}} \right)}$. Let ${C : V_{p,q} \to V_{p,q}}$ be a \emph{real structure} of ${KO}$-dimension 
${s \equiv q-p \pmod{8}}$ (see, e.g., \cite{Marcolli-NoncomCosmology}) on ${V_{p,q}}$ such that
\begin{equation}
\left( { {\mathbb{C} \ell}_n \, , V_{p,q} \, , \Gamma , C} \right)
\end{equation}
satisfies all the axioms of a fermion space.\footnote{In the physics literature, the operator $C$ is called the \emph{charge conjugation} operator.} 

\begin{definition}
A \emph{matrix geometry of type ${(p,q)}$} is a finite real spectral triple $\left( { \mathcal{A}, \mathcal{H} , D , \gamma , J} \right)$, where the corresponding fermion space is given by:
\begin{itemize}
\item
${ \mathcal{A}= {{\mathrm{M}}_N (\mathbb{C})} }$
\item
${ \mathcal{H} = V_{p,q} \otimes {{\mathrm{M}}_N (\mathbb{C})} }$
\item
${ \langle {v \otimes A , u \otimes B} \rangle = \langle v , u \rangle \, \tr \left( A B^\ast \right)
\, , \quad v,u \in V_{p,q} \, , \  A,B \in {{\mathrm{M}}_N (\mathbb{C})} }$
\item
${ \pi (A) (v \otimes B) = v \otimes \left( AB \right)  }$
\item
${ \gamma (v \otimes A) = (\Gamma v) \otimes A }$
\item
${ J (v \otimes A) = (C v) \otimes A^\ast  \,}$.
\end{itemize}
\end{definition}
\noindent
The Dirac operators of type $(p,q)$ matrix geometries are expressed in term of gamma matrices ${\gamma^i \,}$, and commutators or anti-commutators with given Hermitian matrices $H$ and anti-Hermitian matrices $L$ (see \cite{Barrett-2015} and \cite{Barrett-Glaser-2016}). For a recent survey of interactions between fuzzy spectral triples and random matrix theory initiated in this paper we recommend \cite{hessam2022noncommutative}.


\section{Random matrix geometries of type \texorpdfstring{${(1,0)}$}{(1,0)}} 

In this section we describe a model for Euclidean quantum gravity on finite noncommutative spaces corresponding to the random matrix geometries of type ${(1,0) \,}$. The Dirac operator of type ${(1,0)}$ matrix geometries is given by \cite{Barrett-Glaser-2016}:
\begin{equation}
D = \{ H , \cdot \} \, , \quad H \in \mathcal{H}_N \, ,
\end{equation}
where ${\mathcal{H}_N}$ denotes the space of ${N \times N}$ Hermitian matrices. The Dirac operator ${D}$ acts on the Hilbert space ${\mathcal{H}= {\mathrm{M}}_N (\mathbb{C})}$ in the following way
\begin{equation}
D (B)= \{ H , B \} = HB + BH \, , \quad \forall B \in {\mathrm{M}}_N (\mathbb{C}) \,  .
\end{equation} 

The moduli space of Dirac operators is isomorphic to the space of Hermitian matrices 
${\mathcal{H}_N}$. A distribution of the form
\begin{equation}
\di \rho = e^{- \mathcal{S} (D)} \di D 
\end{equation}
is considered over ${\mathcal{H}_N}$, where
\begin{equation}
\di D \coloneqq \di H = \prod_{i=1}^N \di H_{ii} \, \prod_{1 \les i < j \les N} \di (\mathrm{Re} (H_{ij})) \, \di (\mathrm{Im} (H_{ij})) \, ,
\end{equation}
is the canonical Lebesgue measure on ${\mathcal{H}_N}$. Let us describe the action functional 
${\mathcal{S}(D) \,}$. Let
\begin{equation}
{{\mathbb{N}}^n_{\uparrow}} = \left\{ \vec{l} \in \mathbb{Z}^n \, | \, 1 \les l_1 \les l_2 \les \cdots \les l_n \right\}  .
\end{equation}
Suggested by Connes' spectral action, we define the action functional $\mathcal{S} (D)$ of the model by 
\begin{equation} \label{action-func-1}
\mathcal{S} (D) = {\mathcal{S}}_{\scaleto{\mathrm{unstable}}{4pt}} (D)
+ {\mathcal{S}}_{\scaleto{\mathrm{stable}}{4pt}} (D) \, ,
\end{equation}
where
\begin{equation} \label{unst-action-1}
{\mathcal{S}}_{\scaleto{\mathrm{unstable}}{4pt}} (D) = \tr \left( { {\mathcal{V}} (D) } \right) \, ,
\quad
{ {\mathcal{V}} (x) } = \frac{1}{2t} \Big( {\frac{x^2}{2} - \sum_{n =3}^d  {\alpha}_n \frac{x^n}{n} } \Big) \, ,
\end{equation}
and
\begin{equation}  \label{st-action-1}
{\mathcal{S}}_{\scaleto{\mathrm{stable}}{4pt}} (D)
= - \sum_{s=1}^{\mathfrak{g}} {\left( N/t \right)}^{-4s}
\sum_{n_I \in {{\mathbb{N}}^s_{\uparrow}}} {\hat{\alpha}}_{n_I} \,  
\prod_{i=1}^s \tr \left( {D^{n_i}} \right) \, .
\end{equation}
In the definition of the action functional, $t$ is a fixed parameter (``temperature''), the 
${{ \left( {\alpha_n , {\hat{\alpha}}_{n_I}} \right)}_{n , n_I}}$ are formal parameters, and, for each $s$, the summation over ${n_I \in {{\mathbb{N}}^s_{\uparrow}}}$ is a finite sum.\footnote{The formal parameters ${{ \left( {\alpha_n , {\hat{\alpha}}_{n_I}} \right)}_{n , n_I}}$ play the role of coupling constants in physics literature.} Let
\begin{equation}
\hbar \coloneqq \frac{t}{N} \, .
\end{equation}
For large $N$, the term ${{\mathcal{S}}_{\scaleto{\mathrm{stable}}{4pt}} (D)}$ can be considered as higher order terms in $\hbar$-expansion of the action functional ${\mathcal{S}(D) \, }$.

Using
\begin{equation}
D^n = \left( {H \otimes \mathds{1}_{{\left( {{\mathbb{C}}^{N}} \right)}^\ast} + 
\mathds{1}_{{\mathbb{C}}^{N}} \otimes H^{\mathrm{t}}} \right)^n
= \sum_{k=0}^n {\binom{n}{k}} H^{n-k} \otimes \left( {H^k} \right)^{\mathrm{t}} \, ,
\end{equation}
we get
\begin{align} \label{unst-action-2}
{\mathcal{S}}_{\scaleto{\mathrm{unstable}}{4pt}} (D) 
= \, &\frac{N}{t} \, \Big( {\frac{1}{2} \, {\tr \left( {H^2} \right)} - 
\sum_{n=3}^d \frac{\alpha_n}{n}  \, {\tr \left( {H^n} \right)}} \Big)
\nonumber \\
&- \frac{1}{2} \, \Big( {
- \frac{1}{t} \, {\left({ \tr (H)} \right)^2}
+ \sum_{n=3}^d  \frac{ \alpha_n}{nt} \, \sum_{r=1}^{n-1} 
{\binom{n}{r}} \tr{\left( H^{n-r} \right)} \, \tr{\left( {H^r} \right)}
} \Big) \, .
\end{align}
In addition, we have
\begin{align} \label{prod-tr-D}
&\prod_{i=1}^s \tr \left( {D^{n_i}} \right) =
\nonumber \\
&\sum_{m=0}^s {\left(2N\right)}^{s-m}
 {\sum_{\substack {J \subseteq I \\ |J| = m}}  \prod_{i \in I \backslash J} \tr{\left( H^{n_i} \right)}} 
\, \Big( { \sum_{\substack {(r_{j_1} , \cdots , r_{j_m}) \\ 1 \les r_j \les n_j -1 \, , \ j \in J}}
\prod_{j \in J} 
{n_j \choose r_j} \tr{\left( H^{n_j - r_j} \right)} \, \tr{\left( {H^{r_j}} \right)} } \Big) \, ,
\end{align}
where ${I = \{ 1 , \cdots , s \} \,}$.
By substituting \eqref{unst-action-2} and \eqref{prod-tr-D} into \eqref{action-func-1}, we get the expression for the action functional ${\mathcal{S} (D)}$ in terms of the spectrum of the Hermitian matrix 
${H \in \mathcal{H}_N \,}$.


\section{Topological expansion of the action functional} 

In the following, we rewrite the action functional ${\mathcal{S} (D) \,}$, in a succinct form, as a summation over a finite set of (properly defined) equivalence classes of surfaces.  

We start by recalling the notion of a surface with polygonal boundary. A compact, connected, oriented surface $C$ is referred to have $n$ polygonal boundary components of perimeters 
${ {\{ \ell_i \}}_{i=1}^n \, , \, \ell_i \gre 1 \, ,}$ if, for each ${1 \les i \les n \,}$, the $i$-th connected component of ${\partial C}$ is equipped with a cellular decomposition into ${\ell_i}$ 0-cells and ${\ell_i}$ 1-cells. We refer to the 1-cells in each connected component of ${\partial C}$ as the \emph{sides} of that polygon. We define an equivalence relation between surfaces with polygonal boundaries in the following way: 
\begin{definition} \label{elementary-2-cell}
Two compact, connected, oriented surfaces ${C_1 \, , \, C_2}$ with polygonal boundaries are considered equivalent if there exists an orientation-preserving diffeomorphism ${F : C_1 \to C_2}$ which restricts to a \emph{cellular} homeomorphism ${f : \partial C_1 \to \partial C_2}$ whose inverse is also a cellular map.
\end{definition}
The set $\widehat{\mathcal{C}}$ of equivalence classes of compact, connected, oriented surfaces with polygonal boundaries is in bijective correspondence with the set
\begin{equation}
\left\{  { (  g  ; \vec{\ell} \, ) \, \big|  \, g \gre 0 \, , \,
\vec{\ell} \in  {{\mathbb{N}}^n_{\uparrow}} \, , \, n \gre 1}  \right\}  ,
\end{equation}
where $g$ denotes the genus of the corresponding closed surface. Inspired by \cite{Borot-2014}, we refer to the combinatorial data ${ (  g; \vec{\ell} \, ) }$, and its corresponding equivalence class 
${[C] \in {\widehat{\mathcal{C}}}}$ of surfaces with polygonal boundaries, as the \emph{elementary 2-cell} of type ${ (  g; \vec{\ell} \, ) }$. 

We isolate the \emph{free} part of the action functional and denote it by
\begin{equation}
\mathcal{S}_0 (H) = \frac{N}{2 t} \, {\tr \left( {H^2} \right)} \, .
\end{equation}
Let
\begin{equation} \label{P_l}
P_\ell (H) \coloneqq \frac{\tr \left( H^\ell \right)}{\ell} \, , \quad \forall \ell \in \mathbb{N} 
\, , \ H \in \mathcal{H}_N \, .
\end{equation}
Consider the following two sets:
\begin{equation}
{\mathfrak{L}}_{\scaleto{\mathrm{disk}}{4pt}} =
\left\{  \ell \in \mathbb{N} \, | \,  3 \les \ell \les d \right\}  ,
\end{equation}
\begin{equation}
{\mathfrak{L}}_{\scaleto{\mathrm{cylinder}}{4pt}} =
\left\{ { ( \ell_1 , \ell_2 ) \in \mathbb{N}^2_{\uparrow} \, | \,  2 \les \ell_1 + \ell_2 \les d  } \right\} .
\end{equation}
We rewrite ${{\mathcal{S}}_{\scaleto{\mathrm{unstable}}{4pt}} (D) }$ in the following from
\begin{align} \label{unst-action-3}
{\mathcal{S}}_{\scaleto{\mathrm{unstable}}{4pt}} (D) 
= \mathcal{S}_0 (H) &- \frac{N}{t} 
\sum_{  \ell  \in {\mathfrak{L}}_{\scaleto{\mathrm{disk}}{4pt}}}
\rt^{(0)}_\ell \, P_\ell (H)
\nonumber \\
&- \frac{1}{2} \, \sum_{ (\ell_1 , \ell_2) \in {\mathfrak{L}}_{\scaleto{\mathrm{cylinder}}{4pt}}}
\rt^{(0)}_{\ell_1 , \ell_2} \, P_{\ell_1} (H) \, P_{\ell_2} (H) \, ,
\end{align}
where ${\rt^{(0)}_\ell = \alpha_\ell \, , \, 3 \les \ell \les d \,}$; ${\rt^{(0)}_{1,1} = -1/t \,}$, and for 
${ ( \ell_1 , \ell_2 ) \in {\mathfrak{L}}_{\scaleto{\mathrm{cylinder}}{4pt}} \backslash 
\left\{ ( 1,1)  \right\} \,}$, ${\rt^{(0)}_{\ell_1 , \ell_2}}$ is an integral multiple of 
${\alpha_{\left( \ell_1 + \ell_2 \right)} /t \,}$.

For each ${1 \les s \les \mathfrak{g} \,}$, and ${0 \les m \les s \,}$, let
\begin{equation}
\Upsilon_{s,m} =
\bigcup_{\substack{ n_I \, , \, |I| =s \\ J \subseteq I \, , \, |J| = m \\
1 \les r_j \les n_j -1 \, ,\, j \in J }}
\left\{ \left( {r_J \, , \, n_J - r_J \, , \, n_{I \backslash J}} \right) \right\} \big/ \sim \, ,
\end{equation}
where two ${(s+m)}$-tuples are considered equivalent if there exists a permutation 
$\sigma \in {\mathfrak{S}}_{s+m}$ which maps one to the other. Consider the set
\begin{equation}
{\mathfrak{L}}_{s,m} = 
\left\{ \vec{\ell} \in {\mathbb{N}}^{s+m}_{\uparrow} \, \big| \, 
[ \sigma \cdot \vec{\ell} \, ]  \in \Upsilon_{s,m} \  \ \text{for some} \ 
\sigma \in {\mathfrak{S}}_{s+m} \right\}  .
\end{equation}
We rewrite ${{\mathcal{S}}_{\scaleto{\mathrm{stable}}{4pt}} (D) }$ in the following form
\begin{equation} \label{st-action-3}
{\mathcal{S}}_{\scaleto{\mathrm{stable}}{4pt}} (D) =
- \sum_{\substack{ 1 \les s \les \mathfrak{g} \\ 0 \les m \les s}} 
\frac{ {\left( N/t \right)}^{2- 2 (s+1) - (s+m)}}{(s+m)!}
\sum_{ \vec{\ell} \in {\mathfrak{L}}_{s,m}}
{\rt}^{(s+1)}_{\vec{\ell}} \, \prod_{i=1}^{s+m} P_{\ell_i} (H) \, ,
\end{equation}
where, for each ${ \vec{\ell} \in {\mathfrak{L}}_{s,m}} \,$, 
${{\rt}^{(s+1)}_{\vec{\ell}} = t^{s-m} \, \tilde{\alpha}_{\vec{\ell}} \,}$, and ${\tilde{\alpha}_{\vec{\ell}}}$ is a finite linear combination of the formal parameters ${{\hat{\alpha}}_{n_I}}$'s with integral coefficients.

Consider the following two sets of elementary 2-cells:
\begin{equation}
{\mathcal{C}}_{\scaleto{\mathrm{unstable}}{4pt}}
= \left\{ {(0 ; \vec{\ell} \,) \, \big| \, \vec{\ell} \in {{\mathfrak{L}}_{\scaleto{\mathrm{disk}}{4pt}} 
\cup {{\mathfrak{L}}_{\scaleto{\mathrm{cylinder}}{4pt}}}} } \right\}  ,
\end{equation}
\begin{equation}
{\mathcal{C}}_{\scaleto{\mathrm{stable}}{4pt}}
= \bigcup_{\substack{1 \les s \les \mathfrak{g} \\ 0 \les m \les s}}
 \left\{ {(s+1 ; \vec{\ell} \,) \, \big| \, \vec{\ell} \in {{\mathfrak{L}}_{s,m}} } \right\} .
\end{equation}
We identify the set
\begin{equation} \label{ele-2-cells}
\mathcal{C} = {\mathcal{C}}_{\scaleto{\mathrm{unstable}}{4pt}} \cup 
{\mathcal{C}}_{\scaleto{\mathrm{stable}}{4pt}} 
\end{equation}
with the corresponding set of equivalence classes ${[C]}$ of surfaces with polygonal boundaries. We assign a \emph{Boltzmann weight}, equal to ${{\rt}^{(g)}_{\vec{\ell}} }$, to each elementary 2-cell 
${[C] \in \mathcal{C}}$ of type ${(g; \vec{\ell} \,) \,}$. Note that the elementary 2-cells in 
${{\mathcal{C}}_{\scaleto{\mathrm{unstable}}{4pt}}}$ (resp. 
${{\mathcal{C}}_{\scaleto{\mathrm{stable}}{4pt}}}$) are represented by surfaces whose Euler characteristic satisfies ${\chi \gre 0}$ (resp. ${\chi < 0}$). For each elementary 2-cell 
${[C] \in \mathcal{C}}$ of type ${(g; \vec{\ell} \,) }$ with Boltzmann weight 
${{\rt}^{(g)}_{\vec{\ell}} \, , \ \vec{\ell} \in \mathbb{N}^n_{\uparrow} \, }$, let
\begin{equation} \label{T_C-H}
T_{\scaleto{[C]}{6pt}} (H) \coloneqq {\rt}^{(g)}_{\vec{\ell}} \, \prod_{i=1}^n P_{\ell_i} (H) \, , \quad 
H \in \mathcal{H}_N \, ,
\end{equation}
where ${P_{\ell} (H)}$ is defined by \eqref{P_l}. We rewrite 
\eqref{unst-action-3} and \eqref{st-action-3}, respectively, in the following from
\begin{equation}
{\mathcal{S}}_{\scaleto{\mathrm{unstable}}{4pt}} (D)
= \mathcal{S}_0 (H) - \sum_{[C] \in {{\mathcal{C}}_{\scaleto{\mathrm{unstable}}{4pt}}}}
\frac{{\left( N/t \right)}^{\chi (C)}}{\left( { \beta_0 \left( { \partial C} \right)} \right) !} \ 
T_{\scaleto{[C]}{6pt}} (H) \, ,
\end{equation}
\begin{equation}
{\mathcal{S}}_{\scaleto{\mathrm{stable}}{4pt}} (D)
=  - \sum_{[C] \in {{\mathcal{C}}_{\scaleto{\mathrm{stable}}{4pt}}}}
\frac{{\left( N/t \right)}^{\chi (C)}}{\left( { \beta_0 \left( { \partial C} \right)} \right) !} \ 
T_{\scaleto{[C]}{6pt}} (H) \, ,
\end{equation}
where ${ \beta_0 \left( { \partial C} \right)}$ denotes the zeroth Betti number, i.e. the number of connected components, of the boundary ${\partial C}$ of a surface $C \,$. Thus, we get:

\begin{proposition}
The action functional ${\mathcal{S}(D)}$ for the random matrix geometries of type ${(1,0) \,}$, given by \eqref{action-func-1}, can be decomposed in the following form
\begin{equation} \label{action-func-2}
\mathcal{S}(D) = \mathcal{S}_0 (H) + \mathcal{S}_{\mathrm{int}} (H) \, ,
\end{equation}
where 
\begin{equation}
 \mathcal{S}_{\mathrm{int}} (H)
=  - \sum_{[C] \in {\mathcal{C}}}
\frac{{\left( N/t \right)}^{\chi (C)}}{\left( { \beta_0 \left( { \partial C} \right)} \right) !} \ 
T_{\scaleto{[C]}{6pt}} (H) \, ,
\end{equation}
and $\mathcal{C}$ is given by \eqref{ele-2-cells}.
\end{proposition}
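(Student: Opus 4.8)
The plan is to obtain the statement by consolidating the rewritings already carried out in this section, so that the proof amounts to bookkeeping rather than a fresh computation. By \eqref{action-func-1} we have $\mathcal{S}(D) = {\mathcal{S}}_{\scaleto{\mathrm{unstable}}{4pt}}(D) + {\mathcal{S}}_{\scaleto{\mathrm{stable}}{4pt}}(D)$, and by \eqref{unst-action-3} and \eqref{st-action-3} each of these two pieces is already displayed as a polynomial in the $P_\ell(H)$ with coefficients the $\rt$'s (together with the summand $\mathcal{S}_0(H)$ in the unstable case). So the only thing to check is that every monomial $\prod_i P_{\ell_i}(H)$ occurring there, together with its coefficient and numerical prefactor, coincides with exactly one term $\frac{(N/t)^{\chi(C)}}{(\beta_0(\partial C))!}\,T_{\scaleto{[C]}{6pt}}(H)$ attached to a unique elementary $2$-cell $[C]$, where $T_{\scaleto{[C]}{6pt}}(H)$ is defined by \eqref{T_C (H)}.

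Next I would match the power of $N/t$ and the factorial against the topology of the surface representing each $2$-cell. A disk, of type $(0;\vec{\ell}\,)$ with $\vec{\ell}\in{\mathfrak{L}}_{\scaleto{\mathrm{disk}}{4pt}}$, has $\chi=1$ and one boundary component, giving prefactor $N/t$, matching the first sum of \eqref{unst-action-3}; a cylinder, of type $(0;\vec{\ell}\,)$ with $\vec{\ell}\in{\mathfrak{L}}_{\scaleto{\mathrm{cylinder}}{4pt}}$, has $\chi=0$ and two boundary components, giving prefactor $1/2$, matching the second sum. For the stable part, a $2$-cell of type $(s+1;\vec{\ell}\,)$ with $\vec{\ell}\in{\mathfrak{L}}_{s,m}\subseteq{\mathbb{N}}^{s+m}_{\uparrow}$ represents a genus-$(s+1)$ surface with $s+m$ polygonal boundaries, so $\chi=2-2(s+1)-(s+m)$ and $\beta_0(\partial C)=s+m$; substituting these into $\frac{(N/t)^{\chi(C)}}{(\beta_0(\partial C))!}$ reproduces exactly the prefactor of \eqref{st-action-3}. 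Invoking the bijection between $\widehat{\mathcal{C}}$ and the tuples $(g;\vec{\ell}\,)$ recalled earlier, this identifies ${\mathcal{C}}_{\scaleto{\mathrm{unstable}}{4pt}}$ and ${\mathcal{C}}_{\scaleto{\mathrm{stable}}{4pt}}$ with honest sets of equivalence classes of surfaces with polygonal boundaries, and shows that their union $\mathcal{C}$ indexes all interaction monomials exactly once — the two sets being disjoint precisely because their members have Euler characteristic of opposite sign ($\chi\gre0$ versus $\chi<0$).

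Summing the two resulting identities and using $\mathcal{C}={\mathcal{C}}_{\scaleto{\mathrm{unstable}}{4pt}}\cup{\mathcal{C}}_{\scaleto{\mathrm{stable}}{4pt}}$ then yields \eqref{action-func-2} with $\mathcal{S}_{\mathrm{int}}(H)$ as stated. The one point that needs genuine care, rather than pure bookkeeping, is that the passage from the ordered multi-index sums in \eqref{unst-action-2} and \eqref{prod-tr-D} to the unordered sums over ${\mathfrak{L}}_{\scaleto{\mathrm{cylinder}}{4pt}}$ and ${\mathfrak{L}}_{s,m}$ is exactly compensated by the symmetry factors $1/2$ and $1/(s+m)!$; concretely, one must verify that the definitions of $\Upsilon_{s,m}$, ${\mathfrak{L}}_{s,m}$ and the coefficients $\rt^{(s+1)}_{\vec{\ell}}=t^{s-m}\,\tilde{\alpha}_{\vec{\ell}}$ have already absorbed the sizes of the relevant $\mathfrak{S}_{s+m}$-orbits, and similarly for the quadratic terms. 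This orbit-counting is the main (if modest) obstacle, and I expect it to be dispatched by a direct inspection of how many ordered tuples map to a given class in $\Upsilon_{s,m}$.
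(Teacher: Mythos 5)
Your proposal is correct and follows essentially the same route as the paper, which proves the proposition simply by assembling the rewritings \eqref{unst-action-3} and \eqref{st-action-3} and matching the prefactors $\left(N/t\right)^{\chi(C)}/\left(\beta_0(\partial C)\right)!$ against the Euler characteristic and number of boundary components of each elementary $2$-cell, exactly as you do. The orbit-counting issue you flag is real but is absorbed by construction into the definitions of $\Upsilon_{s,m}$, ${\mathfrak{L}}_{s,m}$ and the coefficients ${\rt}^{(g)}_{\vec{\ell}}$, which is also how the paper handles it.
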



\section{The corresponding 1-Hermitian matrix model}
From now on, we consider the multi-trace 1-Hermitian matrix model corresponding to the random matrix geometries of type ${(1,0)}$ with the distribution ${\di \rho = e^{- \mathcal{S} (D)} \di D \,}$, in the sense of \emph{formal} matrix integrals. In other words, we treat the term ${\mathcal{S}_{\mathrm{int}} (H)}$ in \eqref{action-func-2} as a perturbation of ${\mathcal{S}_0 (H) \,}$. 

Consider the normalized Gaussian measure 
\begin{equation} \label{Gaussian}
{\di {\rho}_0 } = c \, e^{- \mathcal{S}_0 (H)} \di H 
= c \, \exp \Big( -\frac{N}{2 t} \, \tr  \left( H^2 \right) \Big) \di H 
\end{equation}
over ${\mathcal{H}_N}$ with total mass one. Here
\begin{equation}
c ={\left[ 2^{N} \, {\left( {\frac{\pi t}{N}} \right)}^{{N^2}} \, \right]}^{- 1/2}\, .
\end{equation}
Denote by $\ft$ the sequence of Boltzmann weights ${\rt^{(g)}_{\vec{\ell}}}$ in 
${\mathcal{S}_{\mathrm{int}} (H) \,}$. We consider
\begin{equation}
\Phi (H) = \exp \left( {- \mathcal{S}_{\mathrm{int}} (H)} \right)
\end{equation}
as a formal power series in $\ft \, $, i.e. an exponential generating function. The \emph{partition function} $Z_N$ of the model is defined by
\begin{equation}
Z_N = \rho_0 \left[ \Phi (H) \right] \
{\mathrel{\overset{\makebox[0pt]{\mbox{\normalfont\tiny\sffamily formal}}}{=}}} \ \
c  \int_{\mathcal{H}_N} \Phi (H) \, e^{- \mathcal{S}_0 (H)} \di H  \, ,
\end{equation}
where the second integral is understood in the sense that we expand ${\Phi (H)}$ as a power series in $\ft \,$, and \emph{interchange} the integration with the summation. 
 
The \emph{disconnected $n$-point correlators} ${\widehat{W}_n (x_1 , \cdots , x_n) \,}$, ${n\gre 1 \,}$, of the model are defined as the joint moments of 
\begin{equation}
\mathrm{X}_j = \tr \left( {({x_j}{\mathds{1}_N} -H)^{-1}} \right) \, ,
\quad {x_j \in \mathbb{C}\backslash \text{Spec}(H) \,} ,
\end{equation}
i.e.
\begin{align} \label{discon-cor-1}
\widehat{W}_n (x_1 , \cdots , x_n) &=
\frac{1}{Z_N} \, \rho_0 
\Big[ { \Phi (H) \prod_{j=1}^n \tr \left( {({x_j}{\mathds{1}_N} -H)^{-1}} \right)} \Big]
\nonumber \\
&\ {\mathrel{\overset{\makebox[0pt]{\mbox{\normalfont\tiny\sffamily formal}}}{=}}} \ \ 
\mathbb{E} \Big[  \,{\prod_{j=1}^n \tr \left( {({x_j}{\mathds{1}_N} -H)^{-1}} \right)} \Big] \, ,
\end{align}
where ${{({x_j}{\mathds{1}_N} -H)^{-1}}}$ denotes the resolvent of $H \,$.\footnote{Strictly speaking, in the context of formal matrix integrals, one works with the formal series
\begin{equation} \label{discon-cor-2}
{\widetilde{\mathrm{X}}}_j = \sum_{\ell=0}^{\infty} \frac{\tr (H^\ell)}{x_j^{\ell+1}}
\end{equation}
instead of ${\tr \left( {({x_j}{\mathds{1}_N} -H)^{-1}} \right) \,}$.} The \emph{connected $n$-point correlators} ${{W}_n (x_1 , \cdots , x_n) \,}$, ${n\gre 1 \,}$, of the model are defined as the joint cumulants of ${\mathrm{X}_j}$'s, i.e.
\begin{equation} \label{connected-cor-1}
W_n (x_1 , \cdots , x_n) = \sum_{K \vdash \llbracket 1,n \rrbracket} (-1)^{[K] -1} \, ({[K] -1}) ! \, \prod_{i=1}^{[K]} 
\widehat{W}_{| {K_i} |} \left( {x_{K_i}} \right) \, .
\end{equation} 
In \eqref{connected-cor-1}, the sum runs over partitions of 
${\llbracket 1, n \rrbracket \coloneqq \{ 1,2,3, \cdots , n \} \,}$, the number of subsets in a partition $K$ is denoted by $[K] \,$, and 
\begin{equation}
\widehat{W}_{| {K_i} |} \left( {x_{K_i}} \right) \equiv
\widehat{W}_{| {K_i} |} \left( {(x_j)}_{j \in K_i} \right) \, .
\end{equation}


\subsection{Topological expansion of the correlators}
\begin{proposition}
The connected $n$-point correlators ${{W}_n (x_1 , \cdots , x_n)}$ of the random matrix geometries of type ${(1,0)}$ with the distribution ${\di \rho = e^{- \mathcal{S} (D)} \di D }$ have a 
large $N$ expansion of \emph{topological type}, given by
\begin{equation} \label{topological-expansion-1}
W_n (x_1 , \cdots , x_n) = \sum_{g \gre 0} { \left( {N/t} \right)}^{2 - 2g -n} \
W_{g,n} (x_1 , \cdots , x_n) \, ,
\end{equation}
where ${W_{g,n} (x_1 , \cdots , x_n)}$ is defined, in the following, by \eqref{W_g,n-1}.\footnote{One should \emph{not} misinterpret \eqref{topological-expansion-1} as the \emph{asymptotic expansion} of the connected correlators as ${N \to \infty}$ 
(see \cite{Eynard-CountingSurfaces}, \cite{Borot-Guionnet-Kozlowski-2015}).}
\end{proposition}
\begin{proof}
We use the \emph{Wick's Theorem} and the techniques of \cite{Brezin-Itzykson-Parisi-Zuber-1987} to relate the formal matrix integrals in our model to the combinatorics of \emph{stuffed maps} \cite{Borot-2014}, \cite{Borot-Shadrin-2017}. Considering \eqref{discon-cor-1} and \eqref{discon-cor-2}, the computation of ${{\widehat{W}}_n (x_1 , \cdots , x_n)}$ leads to Gaussian integrals of the following form
\begin{equation} \label{stuffed-map-1}
\rho_0 \Bigg[ \, { \prod_{j=1}^n \frac{\tr (H^{\ell_j})}{x_j^{\ell_j +1}} \,
\prod_{i=1}^{| \mathcal{C} |} \frac{1}{k_i !} \,
\Big( { \frac{{\left( N/t \right)}^{\chi (C_i)}}{\left( { \beta_0 \left( { \partial C_i} \right)} \right) !} \ 
T_{\scaleto{[C_i]}{6pt}} (H) } \Big)^{k_i} } \Bigg] \, .
\end{equation}
To compute \eqref{stuffed-map-1} using Wick's Theorem, we represent each term of the form\\ ${{\tr (H^{\ell_j})}/{x_j^{\ell_j +1}}}$, in \eqref{stuffed-map-1}, by a marked face of perimeter ${\ell_j}$ with Boltzmann weight ${x_j^{- (\ell_j + 1)} \,}$, ${\, j=1 , \cdots , n \,}$.\footnote{A polygon of perimeter 
$\ell$ is an oriented 2-dimensional CW complex consisting of a 2-cell, homeomorphic to a disk, whose boundary is equipped with a cellular decomposition into $\ell$ 0-cells and $\ell$ 1-cells. A polygon is called rooted if we distinguish one of the 0-cells and its incident 1-cell on the boundary. We refer to a labeled rooted polygon of perimeter $\ell$ as a marked face of perimeter $\ell \,$.}  Also, we represent each term of the form
\begin{equation}
{ \frac{{\left( N/t \right)}^{\chi (C_i)}}{\left( { \beta_0 \left( { \partial C_i} \right)} \right) !} \ 
T_{\scaleto{[C_i]}{6pt}} (H) } \, , 
\end{equation}
in \eqref{stuffed-map-1}, by a surface ${C_i}$ of Euler characteristic ${{\chi (C_i)}}$ and Boltzmann weight ${\rt^{(g_i)}_{\vec{\ell_i}}}$ representing the corresponding elementary 2-cell ${[C_i] \in \mathcal{C}}$ of type ${(g_i ; \vec{\ell_i}) \,}$, ${i=1 , \cdots ,  | \mathcal{C} | \, }$. The orientation on each ${C_i}$ induces an orientation on ${\partial C_i \,}$. 

Let $\Xi$ be the collection of all surfaces representing the terms in \eqref{stuffed-map-1} in the above-mentioned way. Consider a pairing $\sigma$ on the set of sides of the connected components of the boundary of surfaces in $\Xi \,$. We glue the surfaces in $\Xi$ along the sides of their boundary, according to $\sigma$, such that the gluing map reverses the orientation. The resulting stuffed map 
${M=(S,G)}$ consists of an oriented, not necessarily connected surface $S \,$, and a graph $G$ embedded into $S \,$.\footnote{If each connected component of ${S \backslash G}$ is homeomorphic to an open disk, then ${M=(S,G)}$ is called a \emph{map}.} We denote by $\tilde{S}$ the surface which one gets by deleting the marked faces from $S \,$.

It can be shown that each vertex (resp. edge) of $G$ contributes a weight $N$ 
(resp. ${t/N}$) \cite{Brezin-Itzykson-Parisi-Zuber-1987}. In addition, each unmarked connected component $\mathfrak{U}$ of ${S \backslash G}$ contributes a weight 
${(N /t)^{ \chi (\mathfrak{U})} \,}$. Hence, the exponent of $N \,$, in the total contribution corresponding to ${M= (S,G) \,}$, equals ${\chi (\tilde{S}) \,}$.\footnote{This fact about the exponent of $N \,$, in the case of maps (or, equivalently, \emph{ribbon graphs}), was first noticed by Gerard 't Hooft in \cite{tHooft-1974} (see, e.g., \cite{Eynard-CountingSurfaces}).} 

In addition to the pre-mentioned Boltzmann weights assigned to the 2-cells of $M$, we assign a Boltzmann weight equal to $t$ to each vertex of $G \,$. The total Boltzmann weight of the isomorphism class ${[M]}$ of a stuffed map $M \,$, denoted by ${\mathfrak{Bw} ([M])}$, is defined to be the product of all Boltzmann weights assigned to the cells in $M$ divided by the order ${{| {\text{Aut} (M)} |}}$ of the automorphism group of $M$. The contribution of the isomorphism class ${[M]}$ of a Boltzmann-weighted not necessarily connected stuffed map ${M=(S,G)}$ to ${{\widehat{W}}_n (x_1 , \cdots , x_n)}$ is given by
\begin{equation} \label{discon-stuf-map-1}
{\left( {N/t} \right)}^{\chi (\tilde{S})} \ {\mathfrak{Bw} ([M])} \, .
\end{equation}

Let ${{\mathbb{M}}_{g,n} \left( {\mathcal{C}} \right)}$ be the set of isomorphism classes of the Boltzmann-weighted connected closed stuffed maps ${M=(S,G)}$ of genus $g$ with $n$ marked faces, such that the equivalence class of each unmarked connected component of ${S \backslash G}$ (in the sense of Definition \ref{elementary-2-cell}) is in $\mathcal{C} \,$. Considering \eqref{discon-stuf-map-1}, and
\begin{equation} \label{discon-con-correlator}
{{\widehat{W}}_n (x_1 , \cdots , x_n)} = \sum_{K \vdash \llbracket 1,n \rrbracket}
\prod_{i=1}^{[K]} 
{W}_{| {K_i} |} \left( {x_{K_i}} \right) \, ,
\end{equation}
we have
\begin{equation}
{{{W}}_n (x_1 , \cdots , x_n)} = \sum_{g \gre 0} \ 
\sum_{[M] \in {{\mathbb{M}}_{g,n} \left( {\mathcal{C}} \right)}}
{\left( {N/t} \right)}^{\chi (\tilde{S})} \ {\mathfrak{Bw} ([M])} \, .
\end{equation}
Since, for a connected closed stuffed map ${M=(S,G)}$ of genus $g$ with $n$ marked faces,
\begin{equation}
{\chi (\tilde{S})} = 2 - 2g -n \, ,
 \end{equation} 
we get \eqref{topological-expansion-1}, where 
\begin{equation} \label{W_g,n-1}
{{{W}}_{g,n} (x_1 , \cdots , x_n)} = 
\sum_{[M] \in {{\mathbb{M}}_{g,n} \left( {\mathcal{C}} \right)}}
{\mathfrak{Bw} ([M])} \ \in \mathbb{Q}[\ft][[t]][[{(x_j^{-1})}_j]] \, .
\end{equation}
\end{proof}

Let ${{{\mathbb{M}}_{g,n; \vec{\ell}} \left( {\mathcal{C}} \right)} \, ,}$ 
${\vec{\ell} = (\ell_1 , \cdots , \ell_n) \in \mathbb{N}^n \,}$, be the set of isomorphism classes 
$[M] \in {\mathbb{M}}_{g,n} \left( {\mathcal{C}} \right)$ of the stuffed maps with $n$ marked faces of perimeters ${\ell_j \, ,}$ ${j=1 , \cdots , n \, }$. By \eqref{W_g,n-1}, the generating series 
${Q_{g ; \vec{\ell}} \in \mathbb{Q}[\ft][[t]]}$ of the stuffed maps, corresponding to our model, of genus $g$ with $n$ polygonal boundaries of perimeters ${\ell_j \, ,}$ ${j=1 , \cdots , n \, }$, satisfies 
\begin{align} \label{disk-stuffed-map-2}
{{{W}}_{g,n} (x_1 , \cdots , x_n)} &= \delta_{g,0} \, \delta_{n,1} \, \frac{t}{x_1} +
\sum_{\vec{\ell} \in \mathbb{N}^n} \
\sum_{[M] \in {{{\mathbb{M}}_{g,n; \vec{\ell}} \left( {\mathcal{C}} \right)}}}
{\mathfrak{Bw} ([M])}
\nonumber \\
&= \delta_{g,0} \, \delta_{n,1} \, \frac{t}{x_1} + \sum_{\vec{\ell} \in \mathbb{N}^n}
{Q_{g ; \vec{\ell}}} \ { \prod_{j=1}^n x_j^{-(\ell_j +1)}}   \, .
\end{align}


\subsection{Large-\texorpdfstring{$N$}{N} spectral distribution}
Using \eqref{disk-stuffed-map-2}, we see that the generating series ${Q_{0; \ell} \, ,}$ ${\ell \in \mathbb{N} \,}$, of the rooted planar stuffed maps with topology of a disk and perimeter $\ell $, is given by
\begin{equation} \label{disk-stuffed-map}
Q_{0; \ell} = 
\sum_{[M] \in {{{\mathbb{M}}_{0,1; \ell} 
\left( {{\mathcal{C}}_{\scaleto{\mathrm{unstable}}{4pt}}} \right)}}}
{\mathfrak{Bw} ([M])} \, x^\ell \ \in \mathbb{Q}[{\mathfrak{t}^{(0)}_{\vec{\ell}}}][[t]] \, ,
\end{equation}
where ${\mathfrak{t}^{(0)}_{\vec{\ell}}}$ denotes the sequence of Boltzmann weights 
${\rt^{(0)}_{\vec{\ell}} \, ,}$ ${\vec{\ell} \in {{\mathfrak{L}}_{\scaleto{\mathrm{disk}}{4pt}} \cup {{\mathfrak{L}}_{\scaleto{\mathrm{cylinder}}{4pt}}}} \,}$. 

If the Boltzmann weights ${\mathfrak{t}^{(0)}_{\vec{\ell}}}$ (or, equivalently, the formal parameters ${\alpha_n}$, ${3 \les n \les d}$) have given values, then there exists a critical temperature ${t_c >0}$ such that, for any ${|t| < t_c \,}$, we have ${Q_{0; \ell} < \infty \, , \ \forall \ell \in \mathbb{N}}$ \cite{Borot-2014}. From now on, we restrict ourselves to the case ${0< t < t_c \,}$, where ${t_c}$ is specified according to each set of given values to ${\alpha_n}$'s. Hence, we have the following \emph{one-cut Lemma} \cite{Borot-2014}, \cite{Borot-Bouttier-Guitter-2012}:
\begin{lemma} \label{one-cut lemma}
For given values to the Boltzmann weights ${\mathfrak{t}^{(0)}_{\vec{\ell}} \,}$, and ${0< t < t_c \,}$, the series
\begin{equation} \label{Laurent-exp-W01}
W_{0,1} (x) = \frac{t}{x} + \sum_{\ell=1}^ \infty \frac{Q_{0; \ell}}{x^{\ell+1}} \, ,
\end{equation}
is the Laurent expansion at ${x = \infty}$ of a holomorphic function, denoted by ${W_{0,1} (x) \,}$, on 
${\mathbb{C} \backslash \Gamma \,}$, where ${\Gamma = [ \mathfrak{a} ,\mathfrak{b}] \subset \mathbb{R}}$ depends on ${\mathfrak{t}^{(0)}_{\vec{\ell}} \, , t \,}$. The limits 
${\lim_{\epsilon \to 0^+} W_{0,1} (s \pm \ci \epsilon) \, ,}$ ${\forall s \in \Gamma^{\mathrm{o}}}$ exist, and the jump discontinuity
\begin{equation} \label{jump-discon-1}
\varphi(s)= \frac{1}{2 \pi \ci}   \lim_{\epsilon \to 0^+} \left( {W_{0,1} (s - \ci \epsilon)}
- {W_{0,1} (s + \ci \epsilon)} \right) 
\end{equation}
assumes positive values on the interior ${ \Gamma^{\mathrm{o}}}$ of the discontinuity locus $\Gamma$, and vanishes at ${\partial \Gamma }$.
\end{lemma}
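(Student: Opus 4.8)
The plan is to establish this one-cut lemma by adapting the standard planar one-cut analysis of formal Hermitian matrix models (à la Brézin–Itzykson–Parisi–Zuber, and in the multi-trace/stuffed-map setting, Borot) to the specific potential arising from $\mathcal{S}(D)$. First I would observe that $W_{0,1}(x)$ is the generating series of rooted planar disk stuffed maps, so by \eqref{disk-stuffed-map-2} and \eqref{disk-stuffed-map} its coefficients $Q_{0;\ell}$ are, for $|t|<t_c$, finite (this is exactly the convergence statement cited from \cite{Borot-2014}), and moreover one has quantitative bounds $|Q_{0;\ell}| \les C\,\rho^{\ell}$ for some $\rho>0$ depending on the $\mathfrak{t}^{(0)}_{\vec\ell}$ and $t$. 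This geometric growth is what guarantees that the Laurent series \eqref{Laurent-exp-W01} converges for $|x|>\rho$ and hence defines a holomorphic function in a neighbourhood of $\infty$; the content of the lemma is that this function extends holomorphically to all of $\mathbb{C}$ minus a single real segment.

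The heart of the argument is the planar Schwinger–Dyson (loop) equation for $W_{0,1}$. At leading order in $N/t$, the multi-trace terms in $\mathcal{S}_{\mathrm{int}}$ that couple two or more traces contribute, in the planar sector, only through their ``derivative'' evaluated against the expectation, so the equation closes into a quadratic functional equation of the familiar form $W_{0,1}(x)^2 = V'_{\mathrm{eff}}(x)\,W_{0,1}(x) - P(x)$, where $V'_{\mathrm{eff}}$ is an effective potential built from $\mathcal{V}$ together with a self-consistent renormalization coming from the stable and cylinder terms (this is precisely the mechanism by which multi-trace models reduce, at planar level, to an effective single-trace model — see Borot). Solving the quadratic gives
\begin{equation}
W_{0,1}(x) = \frac{1}{2}\Big( V'_{\mathrm{eff}}(x) - \sqrt{(V'_{\mathrm{eff}}(x))^2 - 4 P(x)} \, \Big),
\end{equation}
and the one-cut structure is the statement that, for $t$ small, the polynomial under the square root has exactly two real simple roots $\mathfrak{a}<\mathfrak{b}$ (with the remaining factor a perfect square), so that $W_{0,1}$ is holomorphic on $\mathbb{C}\setminus[\mathfrak{a},\mathfrak{b}]$, with a square-root branch cut. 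Choosing the branch that matches the prescribed behaviour $W_{0,1}(x) \sim t/x$ as $x\to\infty$ fixes the sign, and the boundary values from above and below differ precisely by the square-root discontinuity, which is purely imaginary on $\Gamma^{\mathrm{o}}$; this yields \eqref{jump-discon-1} with $\varphi(s) = \frac{1}{2\pi}\sqrt{4P(s) - (V'_{\mathrm{eff}}(s))^2}$, manifestly positive on the interior and vanishing like a square root at the endpoints $\mathfrak{a},\mathfrak{b}$.

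To make the one-cut claim rigorous for small $t$ rather than merely formal, I would argue by continuity/perturbation from the Gaussian point: at $t=0^+$ the model is exactly Gaussian, $V_{\mathrm{eff}}(x)=\frac{x^2}{4t}$ (suitably normalized), and $W_{0,1}$ is the Cauchy transform of the semicircle law on a single interval, which is the textbook one-cut situation with all the stated properties holding strictly. Turning on $t>0$ perturbs $V'_{\mathrm{eff}}$ and $P$ analytically (the coefficients $Q_{0;\ell}$, hence the self-consistent corrections, being power series in $t$ with the convergence from \cite{Borot-2014}), and the simplicity of the two branch points together with positivity of $\varphi$ on $\Gamma^{\mathrm{o}}$ is an open condition, so it persists for $0<t<t_c$ after possibly shrinking $t_c$; one then identifies $t_c$ with the first value at which one-cut-ness degenerates (an endpoint collision, a new cut opening, or $\varphi$ developing a zero in the interior). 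The main obstacle I anticipate is precisely controlling this self-consistency: because the stable and cylinder terms feed moments of $H$ back into the effective potential, one must verify that the fixed-point problem determining $V'_{\mathrm{eff}}$ has a unique solution analytic in $t$ near $0$ and that this solution does not itself spoil the one-cut hypothesis — i.e. that the feedback is a genuinely small perturbation on the relevant interval. This is a contraction-mapping estimate in an appropriate Banach space of analytic functions, and getting the constants to line up with the $t_c$ coming from the disk generating series is the delicate point; everything else is a routine transcription of the classical one-cut analysis.
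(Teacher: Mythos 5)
The paper does not actually prove this lemma: it is imported verbatim from the cited references (Borot \cite{Borot-2014} and Borot--Bouttier--Guitter \cite{Borot-Bouttier-Guitter-2012}), with the only in-text justification being the finiteness of the disk generating series $Q_{0;\ell}$ for $|t|<t_c$. Your proposal is therefore not competing with an argument in the paper but reconstructing the one in the cited sources, and it does so faithfully: the reduction of the multi-trace model, at planar order, to a quadratic loop equation with a self-consistent effective potential is exactly the structure the paper itself derives later in Proposition \ref{Theorem-quadratic eq} (your $V'_{\mathrm{eff}}$ is $-Q(x)$ of \eqref{one-SDE-Q(x)}, with the feedback of the moments $\mathfrak{m}_\ell$ appearing in both \eqref{one-SDE-Q(x)} and \eqref{one-SDE-P(x)}), and the representation \eqref{W01-sqrt} with $M\in\mathscr{O}^\ast(U)$ is precisely your ``two simple real roots, remaining factor a perfect square'' statement. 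You also correctly identify the two points that carry all the analytic weight: (i) the geometric bound $|Q_{0;\ell}|\les C\rho^\ell$, without which \eqref{Laurent-exp-W01} would be merely formal and the lemma meaningless, and (ii) the fixed-point/contraction argument showing that the self-consistent determination of the effective potential is analytic in $t$ near $0$ and does not destroy the one-cut property --- these are exactly the contents of the tameness and one-cut analysis in \cite{Borot-2014}, which is why the paper cites rather than proves them. Two small cautions: the positivity of $\varphi$ on $\Gamma^{\mathrm{o}}$ requires not just that the discontinuity is a square root but that $M(s)>0$ there, which in your scheme comes from the perturbation off the semicircle (support $[-2\sqrt{t},2\sqrt{t}]$ at the Gaussian point) and should be stated as part of the open condition you invoke; and your phrase ``after possibly shrinking $t_c$'' is doing real work, since the $t_c$ of the finiteness statement and the $t_c$ up to which one-cut-ness persists need not coincide a priori --- the paper sidesteps this by simply declaring that $t_c$ is ``specified according to each set of given values,'' i.e.\ by taking the smaller of the two.
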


Consider the measure ${\mu = \varphi (s) \di s}$ on $\mathbb{R \,}$, where ${\di s}$ denotes the Lebesgue measure, and ${\varphi (s)}$ is given by \eqref{jump-discon-1}. By the \emph{Sokhotski-Plemelj Theorem}, the function ${W_{0,1} (x)}$ is, indeed, the \emph{Stieltjes transform} 
${\mathscr{S} [\mu ] (x)}$ of $\mu \,$, i.e.
\begin{equation} \label{Stielt-eq-measure-1}
{W_{0,1} (x)} = {\mathscr{S} [\mu ] (x)} = \int \frac{\varphi (s)}{x-s} \di s \, , \quad x \in \mathbb{C}\backslash \text{supp}(\mu) \, .
\end{equation}
In addition, consider the \emph{empirical spectral distribution (empirical measure)}
\begin{equation}
{\mu_N =\frac{1}{N} \sum_{i=1}^N \delta_{\lambda_i}}
\end{equation}
on $\mathbb{R} \,$, where ${{\{ \lambda_i \}}_{i=1}^{N} \, }$, ${\lambda_i \in \mathbb{R} \,}$, denotes the eigenvalues of the random Hermitian matrix ${H \in \mathcal{H}_N}$ corresponding to the random matrix geometries of type ${(1,0)}$ with the distribution ${\di \rho = e^{- \mathcal{S} (D)} \di D \, }$.\footnote{The Dirac measure at ${\lambda \in \mathbb{R}}$ is denoted by ${\delta_{\lambda} \,}$.} By \eqref{discon-cor-1}, we have
\begin{equation} \label{Stielt-eq-measure-2}
\frac{1}{N} \, W_1 (x) = \mathbb{E} \left[  \mathscr{S} \left[ \mu_N \right] (x) \right]  \, .
\end{equation}

Motivated by \eqref{topological-expansion-1}, we assume that
\begin{equation}
\lim_{N \to \infty} \frac{1}{N} \, W_1 (x) = {W_{0,1} (x)}  \, .
\end{equation}
Therefore, considering \eqref{Stielt-eq-measure-1} and \eqref{Stielt-eq-measure-2}, the expected distribution of the eigenvalues ${{\{ \lambda_i \}}_{i=1}^{N} }$ is given by 
\begin{equation}
{\mu = \varphi (s) \di s \,} ,
\end{equation}
up to terms with exponential decay, as ${N \to \infty \,}$.\footnote{The measure ${\mu = \varphi (s) \di s}$ plays the same role as what is called, in the context of \emph{convergent} matrix integrals, the \emph{equilibrium measure} (see, e.g., \cite{Anderson-Guionnet-Zeitouni-RMT}).} We refer to the measure ${\mu = \varphi (s) \di s}$ as the \emph{large-$N$ spectral distribution}. In addition, from now on, we assume that the sequence of Boltzmann weights $\ft \,$, and the parameter $t \,$, are \emph{tame}, in the sense of \cite{Borot-2014}, Definition 4.1. Hence, each ${{{W}}_{g,n} (x_1 , \cdots , x_n) \,}$, a priori defined as the generating series of stuffed maps, upgrades to a holomorphic function on 
${{ \left( {\mathbb{C} \backslash \Gamma} \right)}^n}$ which has a jump discontinuity when one of $x_i$'s crosses $\Gamma \,$.


\section{Schwinger-Dyson equations}

Our main tool for analyzing the ${{{W}}_{g,n} (x_1 , \cdots , x_n)}$'s is an infinite system of equations, called the \emph{Schwinger-Dyson equations} (SDEs), satisfied by the $n$-point correlators of the model. In the matrix model framework, they were introduced by Migdal \cite{Migdal-1983}, and referred to as the \emph{loop equations}. There are several versions of SDEs for matrix models (and some other closely related models in statistical physics, e.g., the \emph{$\beta$-ensembles}) in the literature 
(see, e.g., \cite{Eynard-CountingSurfaces}, \cite{Borot-Guionnet-2013}). However, the root of all of them is the invariance of the integral of a top degree differential form under a 1-parameter family of orientation-preserving diffeomorphisms on a manifold. 

To put the above-mentioned differential geometric fact in a precise form, consider an oriented connected Riemannian $n$-manifold $M$ with the Riemannian volume form $\omega \,$. Let $V$ be a smooth vector field on $M$ with a local flow ${\phi_t : M \to M \,}$. Consider a smooth function ${\Psi : M \to \mathbb{R} \,}$. Let ${\Omega \subset M}$ be a compact $n$-dimensional submanifold of $M$. Since
\begin{equation} 
\int_{\phi_t (\Omega)}  {\Psi \omega}  = \int_{\Omega} {\phi_t}^\ast  \left( {\Psi \omega} \right)
\, , \quad \forall t \in (-\epsilon , \epsilon) \, ,
\end{equation}
using Cartan's magic formula and Stokes' theorem, we get
\begin{equation} \label{Integral invariance-1}
 \int_{\Omega}  \left( {d \Psi (V) + \Psi \, div(V)} \right) \omega
 - \int_{\partial \Omega} \Psi  (\iota_V \omega) = 0 \, .
\end{equation}
In \eqref{Integral invariance-1}, the \emph{exterior derivative}, the \emph{interior product} by $V$, and
the \emph{divergence} of $V$ are denoted by $d \,$, ${\iota_V \,}$, and ${div(V) \,}$, respectively.  

The Schwinger-Dyson equations for the multi-trace 1-Hermitian matrix models are derived in \cite{Borot-2014}, using \emph{Tutte's decomposition} applied to the stuffed maps. In this section, we give a proof of them based on the above-mentioned differential geometric fact.\footnote{In general, there are, at least, two approaches in the literature for deriving the SDEs for matrix models: one is usually referred to as ``integration by parts'' or ``invariance under change of variable'' which is basically the above-mentioned differential geometric approach; and the other one is combinatorial and based on Tutte's 
decomposition. The differential geometric approach has the advantage that it can be applied to a slightly more general class of models, e.g. the $\beta$-ensembles for arbitrary ${\beta >0  \,}$, which do not necessarily have a combinatorial interpretation.}   

Consider the action of the unitary group ${{\mathrm{U}}_N}$ on ${\mathcal{H}_N}$ by conjugation, i.e. $u \cdot H \coloneqq u H u^{-1}$ for ${u \in {\mathrm{U}}_N}$ and ${H \in {\mathcal{H}_N}}$. Since the action functional ${\mathcal{S} (D) : \mathcal{H}_N \to \mathbb{R}}$ is invariant under the pre-mentioned action of ${{\mathrm{U}}_N}$ on ${\mathcal{H}_N \,}$, we can rewrite ${\mathcal{S} (D)}$ as a function of the eigenvalues ${{\{ \lambda_i \}}_{i=1}^{N} \, }$, ${\lambda_i \in \mathbb{R} \,}$, of the random Hermitian matrix ${H \in \mathcal{H}_N \,}$. 

Let
\begin{equation}
(\emm)^k \coloneqq \underbrace{\emm \times \cdots \times \emm}_{k\text{-times}}
\end{equation}
be the product measure on ${\mathbb{R}^k}$ corresponding to the unnormalized empirical measure
\begin{equation}
\emm = \sum_{i=1}^N \delta_{\lambda_i} 
\end{equation}
on $\mathbb{R} \,$. For each elementary 2-cell 
${[C] \in \mathcal{C}}$ of type ${(g; \vec{\ell} \,) \, ,}$ ${\vec{\ell} \in \mathbb{N}^k_{\uparrow} \,}$, we rewrite ${T_{\scaleto{[C]}{6pt}} (H) \,}$, defined by \eqref{T_C-H}, in the following from
\begin{align}
T_{\scaleto{[C]}{6pt}} (H) &= {\rt}^{(g)}_{\vec{\ell}} \, \prod_{j=1}^k \frac{\tr (H^{\ell_j})}{\ell_j}  
\nonumber \\[0.75em]
&= \frac{{\rt}^{(g)}_{\vec{\ell}}}{\prod_{j=1}^k \ell_j} \
(\emm)^k \left[ s_1^{\ell_1} \, s_2^{\ell_2} \cdots s_k^{\ell_k} \right] \, ,
\end{align}
where 
\begin{equation}
(\emm)^k \left[ s_1^{\ell_1} \, s_2^{\ell_2} \cdots s_k^{\ell_k} \right]
= \sum_{i_1 , \cdots , i_k =1}^N \lambda_{i_1}^{\ell_1} \,  \lambda_{i_2}^{\ell_2} \cdots
 \lambda_{i_k}^{\ell_k} \, .
\end{equation}
For each ${1 \les k \les 2 \mathfrak{g} \,}$, let 
\begin{equation}
T_k (H) = - \, \delta_{k,1} \,  \frac{N}{2t} \, \tr (H^2) + 
\sum_{\substack{[C] \in \mathcal{C} \\[0.2em] \beta_0 (\partial C) =k}}
(N/t)^{\chi (C)} \, T_{\scaleto{[C]}{6pt}} (H) \, .
\end{equation}
Let ${\tilde{T}_k (s_1 , \cdots , s_k)}$ be the polynomial in ${{ \{ s_j \}}_{j=1}^{k}}$ satisfying 
\begin{equation} \label{k-point-inter-1}
(\emm)^k \left[ {\tilde{T}_k (s_1 , \cdots , s_k)} \right] = T_k (H) \, .
\end{equation}
Because of the symmetry of $(\emm)^k \,$, we can replace ${\tilde{T}_k (s_1 , \cdots , s_k)}$ with its symmetrization ${T_k (s_1 , \cdots , s_k) \,}$, i.e.
\begin{equation}
{T_k (s_1 , \cdots , s_k) } = \frac{1}{k!} \sum_{\sigma \in \mathfrak{S}_k} 
{\tilde{T}_k (s_{\sigma(1)} , \cdots , s_{\sigma(k)})} \, ,
\end{equation}
in \eqref{k-point-inter-1}. We refer to the symmetric polynomials ${T_k (s_1 , \cdots , s_k) \,}$, 
${1 \les k \les 2 \mathfrak{g} \,}$, as the \emph{$k$-point interactions}. We will need the following technical lemma on derivatives of ${T_k}$ in the proof of SDEs:

\begin{lemma} \label{derivative-lemma}
For each fixed element $\hat{\lambda}$ of ${E= { \{ \lambda_i \}}_{i=1}^{N} \,}$, we have
\begin{equation}
{\partial_{\hat{\lambda}}} \left( {(\emm)^k \left[ {T_k (s_1 , s_2 , \cdots , s_k)} \right]} \right)
= k \, (\emm)^{k-1} \left[ {T_k^{(1)} (\hat{\lambda} , s_2 , \cdots , s_k)} \right] \, ,
\end{equation}
where
\begin{equation} \label{derivative-lemma-2}
T_k^{(1)} (s_1 , s_2 , \cdots , s_k) \coloneqq {\partial_{s_1}} \, {T_k (s_1 , s_2 , \cdots , s_k)} \, .
\end{equation}
\end{lemma}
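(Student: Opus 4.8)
The plan is to unwind the definition of the pushed-forward product measure and differentiate the resulting finite multi-sum term by term. Recall that for any polynomial $f$ on $\mathbb{R}^k$ one has ${(\emm)^k[f] = \sum_{i_1,\ldots,i_k=1}^N f(\lambda_{i_1},\ldots,\lambda_{i_k})}$, so that $(\emm)^k[T_k(s_1,\ldots,s_k)]$ is a genuine polynomial function of the eigenvalues $\lambda_1,\ldots,\lambda_N$, and $\partial_{\hat{\lambda}}$ is to be read as $\partial/\partial\lambda_{i_0}$ for the index $i_0$ with $\lambda_{i_0}=\hat{\lambda}$.

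First I would apply the ordinary Leibniz rule. For a fixed tuple $(i_1,\ldots,i_k)$, the function $\lambda_{i_0}\mapsto T_k(\lambda_{i_1},\ldots,\lambda_{i_k})$ depends on $\lambda_{i_0}$ precisely through those slots $a\in\{1,\ldots,k\}$ with $i_a=i_0$, and its derivative with respect to $\lambda_{i_0}$ is ${\sum_{a:\,i_a=i_0}(\partial_{s_a}T_k)(\lambda_{i_1},\ldots,\lambda_{i_k})}$. Summing over all tuples and interchanging the two finite sums gives
\[
\partial_{\hat{\lambda}}\bigl((\emm)^k[T_k]\bigr)=\sum_{a=1}^k\ \sum_{\substack{i_1,\ldots,i_k=1\\ i_a=i_0}}^N(\partial_{s_a}T_k)(\lambda_{i_1},\ldots,\lambda_{i_k}).
\]
Coincidences among the $i_b$'s cause no trouble here: a monomial in which $\lambda_{i_0}$ occupies several slots is simply differentiated once per slot, which is exactly what the inner sum records, so there is no over- or under-counting.

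Next I would use the symmetry of $T_k$ under $\mathfrak{S}_k$: for each $a$ we have $\partial_{s_a}T_k(s_1,\ldots,s_k)=T_k^{(1)}(s_a,s_1,\ldots,\widehat{s_a},\ldots,s_k)$, where $T_k^{(1)}=\partial_{s_1}T_k$ is itself symmetric in its last $k-1$ arguments, so the order in which the remaining variables are listed is immaterial. Substituting $\lambda_{i_a}=\hat{\lambda}$ and relabelling the $k-1$ free indices, each of the $k$ inner sums becomes ${\sum_{i_2,\ldots,i_k=1}^N T_k^{(1)}(\hat{\lambda},\lambda_{i_2},\ldots,\lambda_{i_k})=(\emm)^{k-1}[T_k^{(1)}(\hat{\lambda},s_2,\ldots,s_k)]}$, independently of $a$; collecting the $k$ identical contributions yields the claimed identity \eqref{derivative-lemma-2}.

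I do not expect a genuine obstacle — the statement is essentially the product rule repackaged in the language of empirical measures — but the one point demanding care is the index bookkeeping: one must verify that $\partial/\partial\lambda_{i_0}$ hits every occurrence of the index $i_0$ among $i_1,\ldots,i_k$, and that it is the symmetry of $T_k$ (not the false symmetry of $\partial_{s_1}T_k$ in all $k$ variables) that is invoked to rewrite $\partial_{s_a}T_k$ as $T_k^{(1)}$ with the distinguished variable $\hat{\lambda}$ placed in the first slot.
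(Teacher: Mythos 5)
Your proof is correct, and it reaches the identity by a genuinely different organization of the bookkeeping than the paper's. The paper stratifies the sum ${\sum_{\lambda_I \in E^k} T_k(\lambda_I)}$ according to the number $r$ of slots occupied by $\hat{\lambda}$: it introduces the sets $\hat{\Lambda}_r$ of tuples having $\hat{\lambda}$ in the first slot and in exactly $r-1$ further slots, uses the symmetry of $T_k$ to write the full sum as ${\sum_{\lambda_I \in \hat{E}^k} T_k(\lambda_I) + \sum_{r=1}^k \frac{k}{r}\sum_{\lambda_I \in \hat{\Lambda}_r} T_k(\lambda_I)}$, observes that ${\partial_{\hat{\lambda}} T_k(\lambda_I) = r\, T_k^{(1)}(\lambda_I)}$ on $\hat{\Lambda}_r$, and lets the factors $k/r$ and $r$ cancel so that the strata reassemble into ${\{\hat{\lambda}\}\times E^{k-1}}$. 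You instead differentiate term by term with the chain rule, interchange the sum over tuples with the sum over the slot $a$ that is hit, and then invoke the $\mathfrak{S}_k$-invariance of $T_k$ to move the distinguished variable into the first argument of $T_k^{(1)}$. Both arguments rest on exactly the same two ingredients --- one derivative per occurrence of the index $i_0$, and the symmetry of $T_k$ (together with the symmetry of $T_k^{(1)}$ in its last $k-1$ arguments) --- but your Fubini-style swap makes the repeated-index case transparent and dispenses with the multiplicity stratification and the cancellation of combinatorial factors. Your version is the more economical of the two; the paper's version makes the multiplicity structure of the tuples explicit, which is not needed for this lemma.
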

\begin{proof}
Let ${\hat{E} = E \backslash \{ \hat{\lambda} \} \,}$. Let
\begin{equation}
\hat{\Lambda}_1 = \left\{ { ( \hat{\lambda} , a_2 , \cdots , a_k ) \, \big| \,
a_l \in \hat{E} \, , \  \forall l } \right\}  ,
\end{equation}
and, for each ${2 \les r \les k \,}$, let
\begin{equation}
\hat{\Lambda}_r =
\bigcup_{2 \les m_2 < \cdots < m_r \les k} \left\{ { ( \hat{\lambda} , a_2 , \cdots , a_k ) \, \big| \,
a_{m_i} =  \hat{\lambda} \, , \ \forall i \, , \quad  \text{and} \quad   a_l \in \hat{E} \, , \ l \neq m_i
} \right\}  .
\end{equation}
We denote ${{(\emm)^k \left[ {T_k (s_1 ,  \cdots , s_k)} \right]} =
\sum_{i_1 , \cdots , i_k =1}^N T_k ( \lambda_{i_1} , \cdots , \lambda_{i_k} ) }$ by 
\begin{equation}
\sum_{\lambda_I \in E^k} T_k (\lambda_I) \, .
\end{equation}
Since ${T_k}$ is a symmetric polynomial, we have
\begin{equation}
\sum_{\lambda_I \in E^k} T_k (\lambda_I) = \sum_{\lambda_I \in {\hat{E}}^k} T_k (\lambda_I)
+ \sum_{r=1}^k \frac{k}{r} \sum_{\lambda_I \in \hat{\Lambda}_r} T_k (\lambda_I) \, ,
\end{equation}
and
\begin{equation}
{\partial_{\hat{\lambda}}} \left( {T_k (\lambda_I)} \right)
= r \, T_k^{(1)} (\lambda_I) \, , \quad \forall {\lambda_I \in \hat{\Lambda}_r} \, .
\end{equation}
Thus, we get
\begin{align}
{\partial_{\hat{\lambda}}}   { \sum_{\lambda_I \in E^k} T_k (\lambda_I)} 
&= k \, \sum_{r=1}^k \sum_{\lambda_I \in \hat{\Lambda}_r} T_k^{(1)} (\lambda_I)
\nonumber \\[0.2em]
&= k  \sum_{\lambda_J \in E^{k-1}} T_k^{(1)} ( \hat{\lambda} , \lambda_J) \, .
\end{align}
\end{proof}

We rewrite the measure ${\di \rho = e^{- \mathcal{S} (D)} \di D }$ over ${\mathcal{H}_N}$ in the following from
\begin{equation}
\di \rho = e^{- \mathcal{S} (D)} \di D = 
\exp \Big( { \sum_{k=1}^{2 \mathfrak{g}} \frac{1}{k!} \ (\emm)^k \left[ {T_k (s_1 , \cdots , s_k)} \right] } \Big) \di H \, .
\end{equation}
By the \emph{Weyl integration formula}, the measure ${\di \rho }$ induces a measure ${\di \tilde{\varrho} \,}$, given by
\begin{equation} \label{induced-distri-1}
\di \tilde{\varrho}= c_N \, \frac{\text{Vol} ({\mathrm{U}}_N)}{N! \, ({2 \pi})^N } \,  \Delta(\bm{\lambda})^2 \, 
\exp \Big( { \sum_{k=1}^{2 \mathfrak{g}} \frac{1}{k!} \ (\emm)^k \left[ {T_k (s_1 , \cdots , s_k)} \right] } \Big) {\prod_{i=1}^N \di \lambda_i} \, ,
\end{equation}
on the space ${\mathbb{R}^N}$ of eigenvalues of Hermitian matrices. In \eqref{induced-distri-1}, 
${\Delta (\bm{\lambda})}$ denotes the Vandermonde determinant, i.e.
\begin{equation}
{\Delta (\bm{\lambda}) = \prod_{ 1 \les i < j \les N } | \lambda_j - \lambda_i |} \, , 
\end{equation}
and ${c_N = {2}^{\frac{N - N^2}{2}} \,}$.\footnote{In addition, the term ${\text{Vol} ({\mathrm{U}}_N)}$ is the volume of the unitary group ${\mathrm{U}}_N$ with respect to the Riemannian volume form corresponding to the induced metric on ${{\mathrm{U}}_N}$ from the inner product ${\langle A , B \rangle = \tr (A B^\ast)}$ on the ambient vector space ${{\mathrm{M}}_N (\mathbb{C}) \,}$.} 

Let 
\begin{equation} \label{compact-omega}
{\Omega = {\hat{\Gamma}}^N \subset {\mathbb{R}^N} \,} ,
\end{equation}
where ${\hat{\Gamma} \subset \mathbb{R}}$ is a strict $\epsilon$-enlargement of the support $\Gamma$ of the large-$N$ spectral distribution $\mu \,$, i.e. 
${\Gamma \subset {\hat{\Gamma}}^{\mathrm{o}} \,}$, and ${\hat{\Gamma} \backslash \Gamma}$ has small Lebesgue measure. We assume that if we replace 
${\di \tilde{\varrho}}$ with 
\begin{equation}
\di \varrho = \mathds{1}_{\Omega} \ {\di \tilde{\varrho}}
\end{equation} 
in the definition of the partition function and the correlators, then they get modified by terms of exponential decay as ${N \to \infty \,}$.

\begin{theorem}
For any ${x , (x_i)_{i \in I} \in \mathbb{C} \backslash \hat{\Gamma} }$, the rank $n$ Schwinger-Dyson equation for the connected correlators of the model (up to the boundary term) is given by
\begin{align} \label{SDE}
&{W_{n+1} (x,x, x_I) + 
\sum_{J \subseteq I} W_{|J| +1} (x , x_J) \, W_{n-|J|} (x, x_{I \backslash J}) }
\nonumber \\[0.2em]
+ \, &\sum_{i \in I} \oint_{C_{\hat{\Gamma}}} \frac{\di \xi}{2 \pi \ci } \,
\frac{W_{n-1} (\xi , x_{I \backslash \{ i \}})}{({x- \xi}) {(x_i - \xi)^2}} 
\nonumber \\[0.2em]
+ \, &\sum_{k=1}^{2 \mathfrak{g}} \sum_{\substack{K \vdash \llbracket 1,k \rrbracket  \\ 
{J_1 \sqcup \cdots \sqcup J_{[K]} = I} }}
{\oint_{C_{\hat{\Gamma}}} \Big[ \, {\prod_{r=1}^k  \frac{\di \xi_r}{2 \pi \ci }} \, \Big] }
\, \frac{ \partial_{\xi_1} T_k (\xi_1 , \cdots , \xi_k)}{(k-1)! \, (x-\xi_1)}
\, \prod_{i=1}^{[K]} {W}_{| {K_i} | + |J_i|} \left( {\xi_{K_i} , x_{J_i}} \right) 
\nonumber \\
= \, &0 \, ,
\end{align}
where ${I = \{ 1, 2, \cdots , n-1 \} \,}$, ${n \gre 1 \,}$, and ${C_{\hat{\Gamma}}}$ is a closed counter-clockwise-oriented contour in an $\epsilon$-tubular neighborhood of $\hat{\Gamma}$ which encloses $\hat{\Gamma} \,$.
\end{theorem}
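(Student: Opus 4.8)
The plan is to apply the integral-invariance identity \eqref{Integral invariance-1} on the eigenvalue space $\Omega = \hat{\Gamma}^N$ with the measure $\di\varrho$, choosing the vector field and the test function so that the resulting identity, after resummation, becomes the Schwinger-Dyson equation \eqref{SDE}. First I would take $M = \mathbb{R}^N$ with its standard volume form, and fix the vector field $V = \sum_{a=1}^N \frac{1}{x - \lambda_a}\,\partial_{\lambda_a}$ for a spectral parameter $x \in \mathbb{C}\setminus\hat\Gamma$; the function $\Psi$ is taken to be the density of $\di\varrho$ against Lebesgue measure times the observable $\prod_{i\in I}\mathrm{X}_i = \prod_{i\in I}\sum_a (x_i-\lambda_a)^{-1}$, i.e. $\Psi$ carries the Vandermonde factor $\Delta(\bm\lambda)^2$, the exponential of $\sum_k \frac{1}{k!}(\hat\mu_N)^k[T_k]$, and the product of resolvents. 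With $\Omega$ compact and $\partial\Omega$ pushed slightly outside $\hat\Gamma$, the boundary term $\int_{\partial\Omega}\Psi\,\iota_V\omega$ is exponentially small in $N$ (this is exactly the role of the $\epsilon$-enlargement), so \eqref{Integral invariance-1} reduces to $\int_\Omega (d\Psi(V) + \Psi\,\mathrm{div}(V))\,\omega = 0$ up to the advertised boundary term.

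Next I would unpack $d\Psi(V) + \Psi\,\mathrm{div}(V) = \frac{1}{\Psi^{-1}}\big(V(\log\Psi) + \mathrm{div}(V)\big)\Psi$ and compute each piece. The Vandermonde contributes $V(\log\Delta^2) = 2\sum_{a\neq b}\frac{1}{(x-\lambda_a)(\lambda_a-\lambda_b)}$, which by the standard symmetrization trick equals $\big(\sum_a\frac{1}{x-\lambda_a}\big)^2 - \sum_a\frac{1}{(x-\lambda_a)^2}$, producing the $W_{n+1}(x,x,x_I)$ and $\sum_J W_{|J|+1}W_{n-|J|}$ structure once one takes $\rho_0$-expectations and passes from disconnected to connected correlators via \eqref{connected-cor-1}. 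The divergence $\mathrm{div}(V) = \sum_a\frac{1}{(x-\lambda_a)^2}$ cancels the unwanted diagonal term. The action contributes $V\big(\sum_k\frac{1}{k!}(\hat\mu_N)^k[T_k]\big)$; here Lemma \ref{derivative-lemma} is the essential input: differentiating $(\hat\mu_N)^k[T_k]$ by the flow along $V$ gives $\sum_a \frac{1}{x-\lambda_a}\,\partial_{\lambda_a}(\hat\mu_N)^k[T_k] = k\sum_a\frac{1}{x-\lambda_a}(\hat\mu_N)^{k-1}[T_k^{(1)}(\lambda_a,\cdot)]$, which after dividing by $k!$ and recognizing the extra sum over $a$ as a $(\hat\mu_N)$-integral becomes $\frac{1}{(k-1)!}(\hat\mu_N)^k\big[\frac{\partial_{s_1}T_k(s_1,\dots,s_k)}{x-s_1}\big]$. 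Finally, the differentiation of the resolvent observable $\prod_{i\in I}\mathrm{X}_i$ along $V$ generates the term $\sum_{i\in I}\sum_a \frac{1}{(x-\lambda_a)(x_i-\lambda_a)^2}$ times the remaining resolvents, which is the source of the second line of \eqref{SDE}.

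The last step is the translation from power-sum / $\hat\mu_N$-language to the contour-integral form: each occurrence of $(\hat\mu_N)[f(s)\,\cdot]$ with $f$ a generating function in $x^{-1}$ or a polynomial becomes $\oint_{C_{\hat\Gamma}}\frac{\di\xi}{2\pi\ci}\,f(\xi)\,\tr\big((\xi-H)^{-1}\big)\,\cdot$ by Cauchy's formula, since $\tr(\xi-H)^{-1} = \sum_a(\xi-\lambda_a)^{-1}$ has its poles inside $C_{\hat\Gamma}$; doing this for the $k$ arguments of $T_k$ produces the $k$-fold contour integral and the factor $\frac{\partial_{\xi_1}T_k(\xi_1,\dots,\xi_k)}{(k-1)!\,(x-\xi_1)}$, while the partition sum over $K\vdash\llbracket 1,k\rrbracket$ together with the distribution $J_1\sqcup\cdots\sqcup J_{[K]} = I$ of the external labels is exactly what arises when one expresses the joint moment of the $\xi_r$- and $x_i$-resolvents in terms of connected cumulants. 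The main obstacle I anticipate is bookkeeping rather than conceptual: carefully tracking the passage from disconnected correlators $\widehat W$ to connected ones $W$ through \eqref{connected-cor-1}, verifying that the combinatorial factors $(-1)^{[K]-1}([K]-1)!$ recombine correctly with the $\frac{1}{(k-1)!}$ and the partition sums so that \emph{exactly} the three displayed lines of \eqref{SDE} survive with no residual terms, and confirming that every intermediate manipulation is legitimate at the level of formal power series in $\ft$, $t$, and $x_j^{-1}$ (so that the "interchange integration and summation" caveat from the formal-integral setup is respected throughout, and the one-cut Lemma \ref{one-cut lemma} plus tameness justify reading the $\hat\mu_N$-integrals as contour integrals around $\hat\Gamma$).
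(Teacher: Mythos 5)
Your proposal is correct in substance and follows the same differential-geometric strategy as the paper: both start from the integral-invariance identity \eqref{Integral invariance-1} on $\Omega=\hat{\Gamma}^N$ with the vector field $V=\sum_a \frac{1}{x-\lambda_a}\,\partial_{\lambda_a}$, both use Lemma \ref{derivative-lemma} to handle the multi-trace interaction terms, both exploit the cancellation between $\mathrm{div}(V)$ and the diagonal part of the Vandermonde contribution, and both convert $\emm$-integrals against $\sum_a(\xi-\lambda_a)^{-1}$ into contour integrals around $\hat{\Gamma}$. The one genuine divergence is how the $n-1$ external insertions are introduced. You multiply $\Psi$ by the product of resolvents $\prod_{i\in I}\mathrm{X}_i$ directly, which produces \emph{disconnected} correlators $\widehat{W}$ everywhere and forces you to recover the connected equation by M\"obius inversion through \eqref{connected-cor-1}; as you yourself note, verifying that the factors $(-1)^{[K]-1}([K]-1)!$ recombine so that exactly the three displayed lines survive is nontrivial — it requires subtracting products of lower-point correlators times lower-rank SDEs, not a term-by-term substitution. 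The paper instead inserts the observable as an exponential tilting $\exp\big(\sum_{j}\tau_j\mathrm{X}_j\big)$, derives a single rank-one equation for the tilted cumulants $W_r^{\vec{\tau}}$, and then obtains \eqref{SDE} by applying $\partial_{\vec{\tau}}\big|_{\vec{\tau}=0}$, using the Leibniz-type identity
\begin{equation*}
\partial_{\vec{\tau}} \Big|_{\vec{\tau} =0} \Big( \prod_{i=1}^{\ell} W_{|L_i|}^{\vec{\tau}} (\xi_{L_i}) \Big) = \sum_{J_1 \sqcup \cdots \sqcup J_{\ell} = I} \ \prod_{i=1}^{\ell} W_{|L_i| + |J_i|} (\xi_{L_i} , x_{J_i}) \, .
\end{equation*}
This tilting device is precisely what eliminates the moment-to-cumulant bookkeeping you flag as your main obstacle: the connected correlators and the partition sums $J_1\sqcup\cdots\sqcup J_{[K]}=I$ appear automatically from the $\tau$-derivative. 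Your route is the classical one and does close, but if you carry it out you should either perform the subtraction scheme explicitly or adopt the generating-function trick; as written, the hardest combinatorial step of your argument is identified but deferred rather than executed.
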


\begin{proof}
Let ${{ \{ \tau_j \} }_{j=1}^{n-1}}$ be a sequence of parameters. Consider the random variables 
${{\mathrm{X}}_j \,}$, ${ j=1 , \cdots , n-1 \,}$, given by 
\begin{equation} 
{\mathrm{X}}_j =  \sum_{i=1}^N \frac{1}{x_j - \lambda_i} \  , \quad 
 x_j \in \mathbb{C} \backslash \hat{\Gamma} \, .
\end{equation}
Let ${\mathrm{P}^{\vec{\tau}}}$ be the probability measure on ${\mathbb{R}^N}$ defined by
\begin{equation}
\mathbb{E}_{\mathrm{P}^{\vec{\tau}}} [f] \coloneqq \frac{1}
{\mathbb{E}_\mathrm{P} \left[ { \exp \big( \sum_{j=1}^{n-1} \tau_j \, \mathrm{X}_j \big) } \right]} \ \mathbb{E}_\mathrm{P} \Big[ {f \, { \exp \Big( \sum_{j=1}^{n-1} \tau_j \, \mathrm{X}_j \Big) }} \Big] \, ,
\end{equation}
where ${f \in {C_0 (\mathbb{R}^N)} \,}$, and
\begin{equation}
\di \mathrm{P} = \frac{1}{\varrho [1]} \, \di \varrho \, .
\end{equation}
We denote the joint cumulants (resp. moments) of ${{ \{ {\mathrm{X}}_j \} }_{j=1}^{r}}$ with respect to the probability measure ${\mathrm{P}^{\vec{\tau}}}$ by 
${W_r^{\vec{\tau}} (x_1 , \cdots , x_r)}$ (resp. 
${{\widehat{W}}_r^{\vec{\tau}} (x_1 , \cdots , x_r) \,}$).

Consider the Riemannian manifold ${\mathbb{R}^N}$ with the Euclidean metric, and the Riemannian volume form
\begin{equation}
 \di \bm{\lambda} \coloneqq \prod_{i=1}^N { \di \lambda_i} \, .
\end{equation}
Let $\Omega$ be the compact subset of ${\mathbb{R}^N}$ given by \eqref{compact-omega}. Consider the smooth vector field 
\begin{equation} 
V = \sum_{i=1}^N \frac{1}{x- \lambda_i} \, \vec{e_i}
\ , \quad  x \in \mathbb{C} \backslash \hat{\Gamma} \, ,
\end{equation}
where ${\vec{e_i} \,}$, ${i=1 , \cdots , N \,}$, denote the standard constant unit vector fields on ${\mathbb{R}^N}$. One gets the rank $n$ Schwinger-Dyson equation for the connected correlators 
${W_n (x , x_1 , \cdots , x_{n-1})}$, by considering the invariance of 
\begin{equation}
Z_N^{\vec{\tau}} =  
{\mathbb{E}_\mathrm{P} \Big[ { \exp \Big( \sum_{j=1}^{n-1} \tau_j \, \mathrm{X}_j \Big) } \Big]}
=\int { \exp \Big( {\sum_{j=1}^{n-1} \tau_j \,
\Big( {\sum_{i=1}^N \frac{1}{x_j - \lambda_i}} \Big) } \Big)} \di \mathrm{P} \, , 
\end{equation}
under the flow of the vector field $V$. 

We rewrite ${Z_N^{\vec{\tau}}}$ in the following form
\begin{equation}
Z_N^{\vec{\tau}} = \frac{1}{\varrho [1]} \int_\Omega \Psi (\bm{\lambda}) \, \di \bm{\lambda} \, ,
\end{equation}
where ${\Psi (\bm{\lambda}) = \prod_{m=1}^3 \psi_m (\bm{\lambda}) }$ is given by
\begin{equation} 
{\psi}_1 (\bm{\lambda}) = { \exp \Big( {\sum_{j=1}^{n-1} \tau_j \,
\Big( {\sum_{i=1}^N \frac{1}{x_j - \lambda_i}} \Big) } \Big)} \, ,
\end{equation}
\begin{equation}
\psi_2 (\bm{\lambda}) =
\exp \Big( { \sum_{k=1}^{2 \mathfrak{g}} \frac{1}{k!} \,
\sum_{i_1 , \cdots , i_k =1}^N T_k \left( { \lambda_{i_1} , \cdots , \lambda_{i_k}} \right)
} \Big) \, ,
\end{equation}
and
\begin{equation}  
\psi_3 (\bm{\lambda}) = \prod_{ 1 \les i \neq j \les N } { | \lambda_j - \lambda_i | } \, .
\end{equation}
By \eqref{Integral invariance-1}, the invariance of ${Z_N^{\vec{\tau}}}$ under the flow of $V$ is equivalent to
\begin{equation} \label{Integral invariance-2}
\mathbb{E}_{\mathrm{P}^{\vec{\tau}}} \Big[ {{\sum_{m=1}^3 \frac{d \psi_m (V)}{\psi_m}} + div(V)  } \Big] = 0 \, ,
\end{equation}
up to the boundary term.\footnote{Since we are considering the case in which ${\Gamma \subset {\hat{\Gamma}}^{\mathrm{o}} \,}$, i.e. both edges of $\hat{\Gamma}$ are \emph{soft}, the boundary term is of exponential decay as ${N \to \infty}$ (see \cite{Borot-Guionnet-2013}).}

Using the Cauchy integral formula, the term ${{d{\psi}_1 (V)}/{{\psi}_1}}$ can be expressed in the following way:\footnote{Strictly speaking, we should assume that $x$ and ${x_j}$'s are not in an $\epsilon$-tubular neighborhood of ${\hat{\Gamma}}$, so that the function 
${1/ \left( (x- \xi) (x_j - \xi)^2  \right) }$ is holomorphic on that neighborhood of ${\hat{\Gamma}}$.}
\begin{align} 
\frac{d{\psi}_1 (V)}{{\psi}_1}
&= \sum_{j=1}^{n-1}  \tau_j \, \sum_{i=1}^N 
\frac{1}{ (x-\lambda_i) (x_j - \lambda_i)^2} 
\nonumber \\
&=\sum_{j=1}^{n-1}  \tau_j \oint_{C_{\hat{\Gamma}}} \frac{\di \xi}{2 \pi \ci } \,
\Big( { \frac{1}{ (x- \xi) (x_j - \xi)^2} } \Big)
\Big( {\sum_{i=1}^N \frac{1}{\xi - \lambda_i}} \Big)  \, .
\end{align}
By interchanging the integration on ${\mathbb{R}^N}$ with the contour integral, we get
\begin{equation} \label{expec-psi-1}
\mathbb{E}_{\mathrm{P}^{\vec{\tau}}} \left[ {\frac{d{\psi}_1 (V)}{{\psi}_1}} \right]
= \sum_{j=1}^{n-1} \tau_j \oint_{C_{\hat{\Gamma}}} \frac{\di \xi}{2 \pi \ci } \,
{ \frac{W_1^{\vec{\tau}} (\xi)}{ (x- \xi) (x_j - \xi)^2} }  \, .
\end{equation}
Considering Lemma \ref{derivative-lemma}, we have
\begin{align}
&\quad \Big( { d \sum_{i_1 , \cdots , i_k =1}^N T_k \left( { \lambda_{i_1} , \cdots , \lambda_{i_k}} 
 \right) } \Big) (V) 
\nonumber \\[0.3em]
&= k \, \sum_{i=1}^N 
\frac{1}{x- \lambda_i} \, \sum_{j_1 , \cdots , j_{k-1} =1}^N T_k^{(1)} \left( { \lambda_i , \lambda_{j_1} , \cdots , \lambda_{j_{k-1}} } \right) 
\nonumber \\[0.3em]
&=k 
{\oint_{C_{\hat{\Gamma}}} \Big[ \, {\prod_{r=1}^k  \frac{\di \xi_r}{2 \pi \ci }} \, \Big] }
\, \frac{ \partial_{\xi_1} T_k (\xi_1 , \cdots , \xi_k)}{x-\xi_1}
\, \prod_{r=1}^k   \Big( { \sum_{i=1}^N \frac{1}{\xi_r - \lambda_i}} \Big)    \ ,
\end{align}
where the integration is a $k$-times iterated contour integral along ${C_{\hat{\Gamma}} \,}$. Thus, using \eqref{discon-con-correlator}, we have
\begin{align} \label{expec-psi-2}
\mathbb{E}_{\mathrm{P}^{\vec{\tau}}} \left[ {\frac{d{\psi}_2 (V)}{{\psi}_2}} \right]
&= \sum_{k=1}^{2 \mathfrak{g}} \frac{1}{(k-1)!} \,
{\oint_{C_{\hat{\Gamma}}} \Big[ \, {\prod_{r=1}^k  \frac{\di \xi_r}{2 \pi \ci }} \, \Big] }
\, \frac{ \partial_{\xi_1} T_k (\xi_1 , \cdots , \xi_k)}{x-\xi_1}
\ {{\widehat{W}}_k^{\vec{\tau}} (\xi_1 , \cdots , \xi_k) \,}
\nonumber \\[0.4em]
&= \sum_{k=1}^{2 \mathfrak{g}} \sum_{K \vdash \llbracket 1,k \rrbracket}
{\oint_{C_{\hat{\Gamma}}} \Big[ \, {\prod_{r=1}^k  \frac{\di \xi_r}{2 \pi \ci }} \, \Big] }
\, \frac{ \partial_{\xi_1} T_k (\xi_1 , \cdots , \xi_k)}{(k-1)! \, (x-\xi_1)}
\, \prod_{i=1}^{[K]} {W}_{| {K_i} |}^{\vec{\tau}} \left( {\xi_{K_i}} \right) \, .
\end{align}
By similar steps, we get
\begin{equation} \label{expec-psi-3}
\mathbb{E}_{\mathrm{P}^{\vec{\tau}}} \left[ {\frac{d \psi_3 (V)}{\psi_3}} \right]
=   {W_2^{\vec{\tau}} (x,x) + {\left( {W_1^{\vec{\tau}} (x)} \right)}^2}  + 
{\partial_x {\left( {W_1^{\vec{\tau}} (x)} \right)}}  \, ,
\end{equation}
and 
\begin{equation} \label{expec-divV}
\mathbb{E}_{\mathrm{P}^{\vec{\tau}}} \left[ {div (V)} \right] = - \partial_x \left( W_1^{\vec{\tau}} (x) \right) \, .
\end{equation}

By substituting \eqref{expec-psi-1}, \eqref{expec-psi-2}, \eqref{expec-psi-3}, and \eqref{expec-divV} into \eqref{Integral invariance-2}, we get
\begin{align} \label{Integral invariance-3}
&{W_2^{\vec{\tau}} (x,x) + {\left( {W_1^{\vec{\tau}} (x)} \right)}^2}
\nonumber \\[0.2em]
+\, &\sum_{i \in I} \tau_i \oint_{C_{\hat{\Gamma}}} \frac{\di \xi}{2 \pi \ci } \,
{ \frac{W_1^{\vec{\tau}} (\xi)}{ (x- \xi) (x_i - \xi)^2} }
\nonumber \\[0.2em]
+\, &\sum_{k=1}^{2 \mathfrak{g}} \sum_{K \vdash \llbracket 1,k \rrbracket}
{\oint_{C_{\hat{\Gamma}}} \Big[ \, {\prod_{r=1}^k  \frac{\di \xi_r}{2 \pi \ci }} \, \Big] }
\, \frac{ \partial_{\xi_1} T_k (\xi_1 , \cdots , \xi_k)}{(k-1)! \, (x-\xi_1)}
\, \prod_{i=1}^{[K]} {W}_{| {K_i} |}^{\vec{\tau}} \left( {\xi_{K_i}} \right) 
\nonumber \\
= \, &0 \, ,
\end{align}
where ${I = \{ 1, 2, \cdots , n-1 \} \,}$. Considering the definition of joint cumulant, for given finite subsets ${L_i \subset \mathbb{N} \, ,}$ ${i=1 , \cdots , \ell \,}$, of $\mathbb{N} \,$, we have
\begin{equation}
\partial_{\vec{\tau}} \Big|_{\vec{\tau} =0} \Big( \prod_{i=1}^{\ell} W_{|L_i|}^{\vec{\tau}} (\xi_{L_i}) 
\Big) = \sum_{J_1 \sqcup \cdots \sqcup J_{\ell} = I} \
\prod_{i=1}^{\ell} W_{|L_i| + |J_i|} (\xi_{L_i} , x_{J_i}) \, .
\end{equation}
Therefore, by taking the derivative ${\partial_{\vec{\tau}} \Big|_{\vec{\tau} =0}}$ of each term of \eqref{Integral invariance-3}, we get \eqref{SDE}.

\end{proof}

The system of Schwinger-Dyson equations is not closed in the sense that, for each ${n \gre 1 \,}$, the rank $n$ SDE gives an expression for ${W_n (x_1 , \cdots , x_n)}$ in terms of\\ ${W_r (x_1 , \cdots , x_r)}$'s with ${1 \les r \les \max \, \{n+1 \, , \, n+ 2 \mathfrak{g} -1 \} \,}$. However, we will see that they are ``asymptotically'' closed as ${N \to \infty \,}$, and we can solve them to find the coefficients 
${W_{g,n} (x_1 , \cdots , x_n)}$ of the large $N$ expansion of the correlators.

For each ${1 \les k \les 2 \mathfrak{g} \,}$, and ${h \gre 0 \,}$, let ${T_{h,k} (s_1 , \cdots , s_k)}$ be the symmetric polynomial in ${{ \{ s_j \}}_{j=1}^{k}}$ satisfying
\begin{equation}
(\emm)^k \left[ {T_{h,k} (s_1 , \cdots , s_k)} \right] = 
- \, \delta_{h,0} \, \delta_{k,1} \,  \frac{\tr (H^2)}{2} \,  + 
\sum_{\substack{[C] \in \mathcal{C} \\[0.2em] \beta_0 (\partial C) =k \, , \ g (C) = h}}
T_{\scaleto{[C]}{6pt}} (H) \, ,
\end{equation}
where ${g(C)}$ denotes the genus of a surface $C \,$. The $k$-point interactions ${T_k (s_1 , \cdots , s_k)}$ of the model can be rewritten in the following form
\begin{equation} \label{k-interact-N-exp}
{T_k (s_1 , \cdots , s_k)} = \sum_{h \gre 0}  (N/t)^{2-2h-k} \ {T_{h,k} (s_1 , \cdots , s_k)} \, ,
\end{equation}
where the summation includes finitely many terms. 
Considering \eqref{k-interact-N-exp} and \eqref{topological-expansion-1},
for each ${n \gre 1 \,}$, ${g \gre 0 \,}$,  the rank $n$ Schwinger-Dyson equation to order ${N^{3-2g-n}}$ gives:
\begin{align} \label{rank n SDE-N exp}
&\quad \ W_{g-1, \,n+1} (x,x, x_I) + 
\sum_{J \subseteq I \, , \ 0 \les f \les g} W_{f, \, |J| +1} (x , x_J) \, W_{g-f, \,n-|J|} (x, x_{I \backslash J}) 
\nonumber \\[1em]
&+ \, \sum_{i \in I} \oint_{C_{\hat{\Gamma}}} \frac{\di \xi}{2 \pi \ci } \,
\frac{W_{g, \,n-1} (\xi , x_{I \backslash \{ i \}})}{({x- \xi}) {(x_i - \xi)^2}} 
\nonumber \\
&+ \,  \sum_{\substack{1 \les k \les 2 \mathfrak{g} \\[0.1em] 0 \les h}} \,
\sum_{\substack{K \vdash \llbracket 1,k \rrbracket  \\[0.1em] {J_1 \sqcup \cdots \sqcup J_{[K]} = I} }}
\ \sum_{\substack{0 \les f_1 , \cdots , f_{[K]}  \\[0.1em] h+k-[K] + \sum_{i} f_i =g }}
\oint_{C_{\hat{\Gamma}}} \Bigg\{ \Big[ \, {\prod_{r=1}^k  \frac{\di \xi_r}{2 \pi \ci }} \, \Big]  \times
\nonumber \\[1em]
&\hspace{3.5cm} \times {\Big[ \, \frac{ \partial_{\xi_1} T_{h,k} (\xi_1 , \cdots , \xi_k)}{(k-1)! \, (x-\xi_1)} \, \prod_{i=1}^{[K]} {W}_{f_i , \, | {K_i} | + |J_i|} \left( {\xi_{K_i} , x_{J_i}} \right) \Big]}
\Bigg\}
\nonumber \\
&= \, 0 \, .
\end{align}

Before continuing, we introduce the following notations which are used in this article.
Let $V \subset X$ be an open subset of a Riemann surface $X$. We denote by 
$\mathscr{O}$, $\mathscr{O}^\ast$, $\mathscr{M}$, $\Omega$, and $\mathscr{Q}$, the sheaves on $X$ defined by
\begin{itemize}
\item
${\mathscr{O} (V) =}$ holomorphic functions on $V$
\item
${\mathscr{O}^\ast (V) =}$ multiplicative group of nonzero holomorphic functions on $V$
\item
${\mathscr{M} (V) =}$ meromorphic functions on $V$
\item
${\Omega (V) =}$ holomorphic 1-forms on $V$
\item
${\mathscr{Q} (V) =}$ meromorphic 1-forms on $V$,
\end{itemize}
respectively.


\section{Spectral curve}

We start by analyzing the rank one Schwinger-Dyson equation to leading order in $N \,$, i.e. the 
Equation \eqref{rank n SDE-N exp} for ${n=1}$ and ${g=0 \,}$, given by
\begin{equation} \label{leading-one-SDE-1}
{\left( {W_{0,1} (x)} \right)}^2 + 
\sum_{k=1}^2 
{\oint_{C_{\Gamma}} \Big[ \, {\prod_{r=1}^k  \frac{\di \xi_r}{2 \pi \ci }} \, \Big] }
\, \frac{ \partial_{\xi_1} T_{0,k} (\xi_1 , \cdots , \xi_k)}{x-\xi_1}
\, \prod_{r=1}^k W_{0,1} (\xi_r) = 0 \, .
\end{equation}
Recall that
\begin{equation}
T_{0,1} (\xi) = -2 \, t \mathcal{V} (\xi) = - \frac{1}{2} \, \xi^2 + 
\sum_{  \ell  \in {\mathfrak{L}}_{\scaleto{\mathrm{disk}}{4pt}}} \frac{\rt^{(0)}_\ell}{\ell} \, \xi^{\ell} \, ,
\end{equation}
where ${\mathcal{V} (\xi)}$ is given by \eqref{unst-action-1}, and
\begin{equation} \label{T_0,2}
T_{0,2} (\xi , \eta) = \frac{1}{2} \, 
\sum_{ (\ell_1 , \ell_2) \in {\mathfrak{L}}_{\scaleto{\mathrm{cylinder}}{4pt}}}
\frac{\rt^{(0)}_{\ell_1 , \ell_2}}{\ell_1  \ell_2}
\left( \xi^{\ell_1} \eta^{\ell_2} + \xi^{\ell_2} \eta^{\ell_1} \right) \, .
\end{equation}
We fix a simply-connected open neighborhood ${U \subset \mathbb{C}}$ such that 
${\Gamma \subset U \,}$. Consider the following integral operator, called the \emph{master operator} \cite{Borot-2014},
\begin{equation}
\mathcal{O} f (\xi) = 
{  \frac{1}{2 \pi \ci }} \oint_{C_{\Gamma}}    
R(\xi , \eta) \, f(\eta) \di \eta \, , \quad
{\mathcal{O}: \mathscr{O} (U \backslash \Gamma) \to \mathscr{O}(U)} \, ,
\end{equation}
with the kernel
\begin{equation}
R(\xi , \eta) = \partial_{\xi} \, T_{0,2} (\xi , \eta) \, .
\end{equation}
We rewrite \eqref{leading-one-SDE-1} in the following form
\begin{equation} \label{leading-one-SDE-2}
{\left( {W_{0,1} (x)} \right)}^2 +
{\oint_{C_{\Gamma}}  {  \frac{\di \xi}{2 \pi \ci }}  } \, 
\frac{Q(\xi)}{x- \xi} \, W_{0,1} (\xi) = 0 \, ,
\end{equation}
where
\begin{equation} 
Q(\xi) = \left( {\partial_{\xi} \, T_{0,1} (\xi)} \right) + \mathcal{O} W_{0,1} (\xi) \, .
\end{equation}

Since ${W_{0,1} (x) \in \mathscr{O} (\mathbb{C} \backslash \Gamma) \,}$, using \eqref{jump-discon-1} and \eqref{Laurent-exp-W01}, we have
\begin{equation}
{\oint_{C_{\Gamma}}  {  \frac{\di \xi}{2 \pi \ci }}  } \, \xi^{\ell} \, W_{0,1} (\xi) = \mathfrak{m}_\ell
= {\oint_{\xi = \infty}  {  \frac{\di \xi}{2 \pi \ci }}  } \, \xi^{\ell} \, W_{0,1} (\xi)
= Q_{0; \ell} \, ,
\end{equation}
where 
\begin{equation}
\mathfrak{m}_\ell = \int s^{\ell} \varphi (s) \di s \, , \quad \ell \gre 1 \, ,
\end{equation}
denotes the moments of the large-$N$ spectral distribution ${\mu = \varphi (s) \di s \,}$, and ${Q_{0; \ell}}$ is given by \eqref{disk-stuffed-map}. Thus, the polynomial ${Q (\xi)}$ can be expressed in terms of ${\mathfrak{m}_\ell}$'s in the following way:
\begin{align} \label{one-SDE-Qx}
Q(\xi) = - \xi &+ \sum_{  \ell  \in {\mathfrak{L}}_{\scaleto{\mathrm{disk}}{4pt}}} {\rt^{(0)}_\ell} \, \xi^{\ell -1} 
\nonumber \\[0.3em]
&+ \frac{1}{2} \, \sum_{ (\ell_1 , \ell_2) \in {\mathfrak{L}}_{\scaleto{\mathrm{cylinder}}{4pt}}}
{\rt^{(0)}_{\ell_1 , \ell_2}}
\bigg[ \frac{\mathfrak{m}_{\ell_2}}{\ell_2} \, \xi^{\ell_1 -1}  + 
\frac{\mathfrak{m}_{\ell_1}}{\ell_1} \, \xi^{\ell_2 -1}  \bigg]  \, .
\end{align}
Considering ${W_{0,1} (\xi) = O (1/ \xi)}$ as ${\xi \to \infty \,}$, we rewrite the contour integral in \eqref{leading-one-SDE-2} as:
\begin{align}
{\oint_{C_{\Gamma}}  {  \frac{\di \xi}{2 \pi \ci }}  } \, 
\frac{Q(\xi)}{x- \xi} \, W_{0,1} (\xi) 
&= {\oint_{C_{\Gamma}}  {  \frac{\di \xi}{2 \pi \ci }}  } \, 
\frac{ Q(x) - \left( { Q(x) - Q(\xi) }\right)}{x- \xi} \, W_{0,1} (\xi) 
\nonumber \\[0.3em]
&= Q(x) \, W_{0,1} (x) - 
{\oint_{C_{\Gamma}}  {  \frac{\di \xi}{2 \pi \ci }}  } \
\Delta [Q] (x, \xi) \, W_{0,1} (\xi) \, ,
\end{align}
where
\begin{equation}
\Delta [Q] (x, \xi) = \frac{Q(x) - Q(\xi)}{x - \xi}
\end{equation}
denotes the \emph{noncommutative derivative}, aka ``finite difference quotient'', of ${Q(\xi) \,}$. The polynomial 
\begin{equation}
P(x) = - {\oint_{C_{\Gamma}}  {  \frac{\di \xi}{2 \pi \ci }}  } \
\Delta [Q] (x, \xi) \, W_{0,1} (\xi)
\end{equation}
has the following expression in terms of ${\mathfrak{m}_\ell}$'s:
\begin{align} \label{one-SDE-Px}
P(x) = \, 1 &- \sum_{  \ell  \in {\mathfrak{L}}_{\scaleto{\mathrm{disk}}{4pt}}} {\rt^{(0)}_\ell} \, \sum_{n=0}^{\ell-2} \mathfrak{m}_{\ell -2 -n} \ x^{n}
\nonumber \\[0.3em]
&- \frac{1}{2} \, \sum_{ (\ell_1 , \ell_2) \in {\mathfrak{L}}_{\scaleto{\mathrm{cylinder}}{4pt}}}
{\rt^{(0)}_{\ell_1 , \ell_2}}
\bigg[ \frac{\mathfrak{m}_{\ell_2}}{\ell_2} \, 
{\Big( {\sum_{n =0}^{\ell_1 -2} \mathfrak{m}_{\ell_1  -2 -n} \ x^{n}} \Big)}  +
\nonumber \\[0.3em]
&\hspace{4cm}  + \frac{\mathfrak{m}_{\ell_1}}{\ell_1} \,
{\Big( {\sum_{n =0}^{\ell_2 -2} \mathfrak{m}_{\ell_2  -2 -n} \ x^{n}} \Big)}  \bigg]  \, .
\end{align}
Therefore, we have proved:

\begin{proposition} \label{Theorem-quadratic eq}
For the random matrix geometries of type ${(1,0)}$ with the distribution ${\di \rho = e^{- \mathcal{S} (D)} \di D \,}$, the Stieltjes transform ${W_{0,1} (x)}$ of the large-$N$ spectral distribution ${\mu = \varphi (s) \di s}$ of the model satisfies the following quadratic algebraic equation 
\begin{equation} 
y^2 + Q(x) \, y + P(x) = 0 \, ,
\end{equation}
where the polynomials ${Q(x)}$ and ${P(x)}$ are given by \eqref{one-SDE-Qx} and 
\eqref{one-SDE-Px}, respectively. The coefficients of ${Q(x)}$ and ${P(x)}$ depend on the Boltzmann weights ${{\mathfrak{t}}^{(0)}_{\vec{\ell}} \,}$, and the moments 
${\mathfrak{m}_\ell}$ of the large-$N$ spectral distribution $\mu \,$.
\end{proposition}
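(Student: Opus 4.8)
The plan is to convert the leading-order rank-one Schwinger--Dyson equation \eqref{leading-one-SDE-2}, which reads $\left( W_{0,1}(x) \right)^2 + \oint_{C_{\Gamma}} \frac{\di \xi}{2\pi \ci}\, \frac{Q(\xi)}{x-\xi}\, W_{0,1}(\xi) = 0$ with $Q(\xi) = \left( \partial_\xi T_{0,1}(\xi) \right) + \mathcal{O} W_{0,1}(\xi)$, into a polynomial identity in $x$. The first ingredient is that, by the one-cut Lemma \ref{one-cut lemma} together with the Sokhotski--Plemelj identification \eqref{Stielt-eq-measure-1} of $W_{0,1}$ with the Stieltjes transform of $\mu = \varphi(s)\,\di s$ and the Laurent expansion \eqref{Laurent-exp-W01} at infinity, one may collapse the contour $C_{\Gamma}$ onto a large circle and read off $\oint_{C_{\Gamma}} \frac{\di \xi}{2\pi \ci}\, \xi^{\ell}\, W_{0,1}(\xi) = \mathfrak{m}_\ell = Q_{0;\ell}$, the $\ell$-th moment of the large-$N$ spectral distribution. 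Since the kernel $R(\xi,\eta) = \partial_\xi T_{0,2}(\xi,\eta)$ of the master operator is polynomial in $\eta$, applying this to each monomial lets me compute $\mathcal{O} W_{0,1}(\xi)$ explicitly; inserting the known forms of $T_{0,1}$ and $T_{0,2}$ then gives the closed expression \eqref{one-SDE-Q(x)} for $Q(\xi)$ in terms of the Boltzmann weights $\mathfrak{t}^{(0)}_{\vec{\ell}}$ and the moments $\mathfrak{m}_\ell$.

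Next I would carry out the familiar ``subtract the pole'' step on the remaining integral. Writing $\frac{Q(\xi)}{x-\xi} = \frac{Q(x)}{x-\xi} - \Delta[Q](x,\xi)$ with $\Delta[Q](x,\xi) = \frac{Q(x)-Q(\xi)}{x-\xi}$ --- a genuine polynomial in both variables because $Q$ is a polynomial --- the first term contributes $Q(x)\, W_{0,1}(x)$: deforming $C_{\Gamma}$ outward, the only singularity encountered is the simple pole at $\xi = x$, and the arc at infinity drops because $W_{0,1}(\xi) = O(1/\xi)$. For the second term, $-\oint_{C_{\Gamma}} \frac{\di \xi}{2\pi\ci}\, \Delta[Q](x,\xi)\, W_{0,1}(\xi)$, I again use $\oint_{C_{\Gamma}} \frac{\di\xi}{2\pi\ci}\,\xi^\ell W_{0,1}(\xi) = \mathfrak{m}_\ell$: since $\Delta[Q](x,\xi)$ is a finite sum of monomials in $x$ and $\xi$, this integral is a polynomial $P(x)$ whose coefficients are linear in the $\mathfrak{m}_\ell$, and separating the contributions of the disk- and cylinder-type terms of $Q$ reproduces \eqref{one-SDE-P(x)}. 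Assembling the three pieces yields $\left( W_{0,1}(x) \right)^2 + Q(x)\, W_{0,1}(x) + P(x) = 0$, i.e. the claimed quadratic equation with $y = W_{0,1}(x)$.

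The one genuinely delicate point, on which everything else rests, is the identity $\oint_{C_{\Gamma}} \frac{\di\xi}{2\pi\ci}\,\xi^\ell W_{0,1}(\xi) = \mathfrak{m}_\ell = Q_{0;\ell}$: it requires knowing that $W_{0,1}$ is honestly holomorphic on $\mathbb{C}\backslash\Gamma$ with the prescribed behavior at infinity, that its boundary values are integrable against the jump $\varphi$, and that term-by-term integration of the Laurent series is legitimate --- which is precisely what the tameness hypothesis on $(\ft, t)$ and Lemma \ref{one-cut lemma} secure. It should be stressed that the resulting equation is not yet an explicit spectral curve: the coefficients of $Q$ and $P$ still involve the unknown moments $\mathfrak{m}_\ell$, so the quadratic relation must ultimately be solved simultaneously with the determination of the $\mathfrak{m}_\ell$; this self-consistent character is already flagged in the statement and is harmless for the derivation itself.
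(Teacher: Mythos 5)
Your proposal is correct and follows essentially the same route as the paper: rewriting the leading-order rank-one Schwinger--Dyson equation via the master operator, using $\oint_{C_{\Gamma}} \frac{\di\xi}{2\pi\ci}\,\xi^\ell W_{0,1}(\xi) = \mathfrak{m}_\ell = Q_{0;\ell}$ to evaluate $\mathcal{O}W_{0,1}$, and then performing the pole-subtraction with the finite difference quotient $\Delta[Q](x,\xi)$ to extract $Q(x)W_{0,1}(x)$ and the moment-polynomial $P(x)$. Your added remarks on the analytic justification of the moment identity and on the self-consistent role of the $\mathfrak{m}_\ell$ are accurate and consistent with the paper's setup.
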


Let the interval ${\Gamma = [ \mathfrak{a} ,  \mathfrak{b}] \subset \mathbb{R} \,}$, and the open neighborhood ${U \subset \mathbb{C} \,}$, ${\Gamma \subset U }$, be the same as the above-mentioned ones. We recall the \emph{Joukowski map} 
${x: \mathbb{C}\backslash \{ 0 \} \to \mathbb{C}}$ given by
\begin{equation} 
x(z) = \frac{\mathfrak{a}+\mathfrak{b}}{2} + \frac{\mathfrak{b}-\mathfrak{a}}{4} \, \big( {z+ \frac{1}{z}} \big) \, .
\end{equation}
Denote by $\mathbb{T}$ and $\mathbb{D}$ the unit circle and the open unit disk in $\mathbb{C} \,$, respectively. The preimage ${x^{-1} (U \backslash \Gamma)}$ of ${U \backslash \Gamma}$ under the Joukowski map $x$ has two connected components 
${V^e \subset \mathbb{C} \backslash \overline{\mathbb{D}} }$ and 
${V^i \subset \mathbb{D} \,}$, whose common boundary is the unit circle 
$\mathbb{T}$ (see Figure \ref{fig1}). The exterior neighborhood ${V^e} \,$ is mapped to the interior neighborhood ${V^i}$ under ${\iota : z \mapsto 1/z \,}$.

\begin{figure}[t]
\centering 
\includegraphics[width=.4\textwidth]{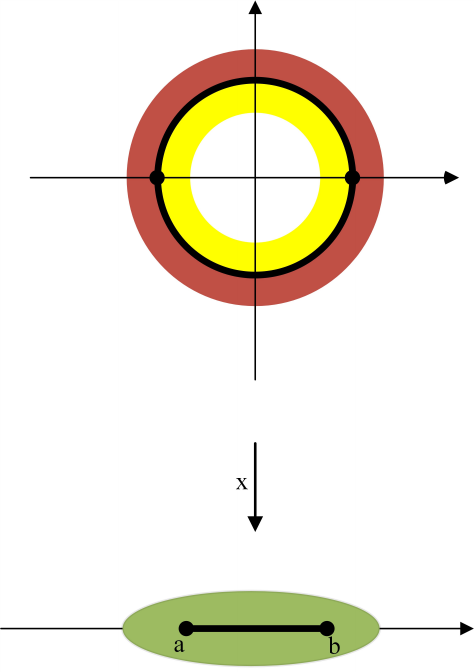}
\caption{Illustration of the Joukowski map in the case where the boundary of $U$ is an ellipse with the foci ${x=\mathfrak{a}}$ and ${x=\mathfrak{b} \,}$. The open neighborhoods $U$, $V^e$, and $V^i$ are colored green, brown and yellow, respectively.}\label{fig1}
\end{figure}

By Proposition \ref{Theorem-quadratic eq} and Lemma \ref{one-cut lemma}, the function 
${W_{0,1} (x) \in \mathscr{O} (U \backslash \Gamma)}$ can be\\ expressed in the following way
\begin{equation} \label{W01-sqrt}
2 \, W_{0,1} (x) = - Q(x) + M(x) \sqrt{(x-\mathfrak{a}) (x-\mathfrak{b})} \, ,
\end{equation}
where ${M(x) \in \mathscr{O}^\ast (U) \,}$, and ${\Gamma = [ \mathfrak{a} , \mathfrak{b} ] \subset \mathbb{R}}$ is the support of the large-$N$ spectral distribution $\mu \,$. Let ${W_{0,1} (x(z))}$ be the pullback of 
${W_{0,1} (x) \in \mathscr{O} (U \backslash \Gamma)}$ under the biholomorphism 
${x |_{V^e} : {V^e} \to U \backslash \Gamma \,}$. Since
\begin{equation}
\sqrt{ {\left( x(z) - \mathfrak{a} \right)} {\left( x(z) - \mathfrak{b} \right)}} = 
\frac{\mathfrak{b}-\mathfrak{a}}{4} \, \big( {z - \frac{1}{z}} \big) \, ,
\end{equation}
considering \eqref{W01-sqrt}, the function ${W_{0,1} (x(z))}$ has an analytic continuation 
to\\ $V = x^{-1} (U) = V^e \cup V^i \cup \mathbb{T} \,$, which is denoted by 
${W_{0,1} (x(z)) \in \mathscr{O} (V) \,}$.

Consider the open neighborhood ${\Upsilon = \mathbb{D} \backslash \overline{V^i}}$ of the point ${z=0 \,}$. The Riemann surface
\begin{equation}
\Sigma = \mathbb{C} \mathbb{P}^1 \backslash \Upsilon \, ,
\end{equation}
which is homeomorphic to a disk, is called the \emph{spectral curve} of the model. Let
\begin{equation}
x: \Sigma \to \mathbb{C} \mathbb{P}^1 \, , \quad x \in \mathscr{M}(\Sigma) \, ,
\end{equation}
be the restriction of the Joukowski map to $\Sigma \,$. The map $x |_V : V \subset \Sigma \to U$ is a two-sheeted ramified covering map with the ramification points 
\begin{equation}
{\mathfrak{R} = \{ z= 1 \, , \, z= -1 \} \,} ,
\end{equation}
and the branch points ${x=\mathfrak{a} \, }$, ${x=\mathfrak{b} \,}$. In addition, the spectral curve $\Sigma$ is equipped with a \emph{local} biholomorphic involution 
\begin{align}
\iota : V &\to V \nonumber \\
z &\mapsto 1/z \ ,
\end{align}
which satisfies ${x \circ \iota = x \,}$.  

Using the Schwinger-Dyson equations \eqref{rank n SDE-N exp} recursively, it can be shown \cite{Eynard-CountingSurfaces} that each ${W_{g,n} (x, x_I)}$ for fixed ${{(x_i)}_{i \in I} \in \mathbb{C} \backslash \Gamma \,}$, initially defined as a holomorphic function on 
${\mathbb{C} \backslash \Gamma \,}$, has a meromorphic continuation ${W_{g,n} (x (z_1), x_I) \in \mathscr{M}(\Sigma) }$ to $\Sigma \,$. By doing the same process for the other arguments ${x_i \,}$, ${i \in I \,}$, of ${W_{g,n} (x(z_1) , x_1 , \cdots , x_{n-1})}$, one gets a meromorphic function 
${W_{g,n} \left( { x(z_1) , \cdots , x(z_n)} \right) }$ on ${\Sigma^n}$.

Let ${K_{\Sigma}}\to {\Sigma}$ be the canonical line bundle, i.e. the holomorphic cotangent bundle, on the spectral curve ${\Sigma}$. Denote by ${\pi_i : {\Sigma}^n \to {\Sigma}}$ the projection map onto the $i$-th component. Let 
\begin{equation}
K_{\Sigma} ^{\boxtimes n} \coloneqq \left({ \pi_1^\ast K_{\Sigma} } \right) \otimes \cdots \otimes 
\left({ \pi_n^\ast K_{\Sigma} } \right)
\end{equation}
be the $n$-times external tensor product of ${K_{\Sigma} \,}$, where ${ \pi_i^\ast K_{\Sigma} }$ denotes the pullback of ${K_{\Sigma}}$ under ${\pi_i \,}$. The sections of the holomorphic line bundle 
${K_{\Sigma} ^{\boxtimes n} \to {\Sigma}^n}$ are referred to as the 
\emph{differentials of degree $n$} over ${{\Sigma}^n \,}$. A differential of degree $n$ is called \emph{symmetric} if it is invariant under the natural action of the symmetric group ${\mathfrak{S}_n}$ on the line bundle ${K_{\Sigma} ^{\boxtimes n} \to {\Sigma}^n \,}$.

In the theory of (blobbed) topological recursion \cite{Eynard-Orantin-2007, Borot-Eynard-Orantin-2015, Borot-2014}, one constructs meromorphic symmetric differentials 
\begin{align} \label{omega-gn-definition}
{\omega_{g,n} (z_1 , \cdots , z_n)} = \, &W_{g,n} \left( { x(z_1) , \cdots , x(z_n)} \right) 
{ dx(z_1) \, dx(z_2) \cdots  dx(z_n )} \nonumber \\[0.2em]
 &+ \, \delta_{g,0} \, \delta_{n,2}  \, \hat{B}_0  (z_1 , z_2 ) 
\end{align}
of degree $n$ from the meromorphic functions ${W_{g,n} \left( { x(z_1) , \cdots , x(z_n)} \right) \,}$, where ${d x(z_i)}$ denotes the pullback ${\pi_i^\ast (dx)}$ of the 1-form ${dx}$ under ${\pi_i : \Sigma^n \to \Sigma \,}$. In \eqref{omega-gn-definition}, the bidifferential, i.e. differential of degree two, 
\begin{equation} \label{B-hat}
\hat{B}_0  (z_1 , z_2 )  = \frac{d x (z_1)\, d x (z_2)}{{[x (z_1) - x (z_2)]}^2}
\end{equation}
is the pullback of the \emph{fundamental symmetric bidifferential of the second kind with biresidue 1} over ${\mathbb{C} \mathbb{P}^1 \times \mathbb{C} \mathbb{P}^1 \,}$,\footnote{For a Torelli marked closed Riemann surface $X$ of genus ${g \gre 0}$, the fundamental symmetric bidifferential $B (p,q)$ of the second kind with biresidue 1 over ${X^2}$ (also referred to as the \emph{Bergman kernel}, in some references) is uniquely characterized by the following conditions (see, e.g., \cite{Fay-ThetaFun-RiemannSurfaces}, \cite{Tyurin-1978} and \cite{Takhtajan-2001} for details):
\begin{itemize}
\item
${B \in H^0 { \big( {X^2 , K_{X} ^{\boxtimes 2} (2 \Delta)} \big) }^{\mathfrak{S}_2}}$, 
and ${\text{Bires}|_{\Delta} \, B =1 \,}$, where ${\Delta \subset X^2 }$ denotes the diagonal divisor.
\item
For each ${p \in X}$, the 1-form ${B(p, \cdot)}$ has vanishing $a_i$-periods, where 
${{\{ a_i , b_i \}}_{i=1}^g}$ is the symplectic basis of ${H_1 (X , \mathbb{Z})}$ specifying the Torelli making of $X$. 
\end{itemize}
} i.e.
\begin{equation} 
B_0 (x_1 , x_2) = \frac{d x_1\, d x_2}{{(x_1 - x_2)}^2} \ ,
\end{equation}
under the map ${(x,x) : \Sigma \times \Sigma \to \mathbb{C} \mathbb{P}^1 \times \mathbb{C} \mathbb{P}^1 \,}$. A couple ${(g,n)}$ is called \emph{stable} if\\ $2-2g-n <0$, i.e. ${(g,n) \neq (0,1) , (0,2) \,}$.

Before continuing, we introduce several operators in the following which are used in our investigation of the differentials ${\omega_{g,n} (z_1 , \cdots , z_n) \,}$. Denote by ${\mathcal{P}_+}$ and ${\mathcal{P}_-}$ the orthogonal idempotents corresponding to the involution ${\iota : V \to V \,}$, given by
\begin{equation} \label{operator-P-1}
\mathcal{P}_+ = \frac{1}{2} (\mathds{1} + \iota^*) \, , \quad \text{and} \quad 
\mathcal{P}_- = \frac{1}{2} (\mathds{1} - \iota^*) \, ,
\end{equation}
respectively, where $\iota^*$ denotes the pullback under $\iota \,$. Let
\begin{equation} \label{operator-P-2}
{\hat{\mathcal{P}}_{\pm}} = 2 \, \mathcal{P}_{\pm} \, .
\end{equation}
The domain of the operators ${\mathcal{P}_{\pm}}$ can be ${\mathscr{M} (V)}$ or 
${\mathscr{Q} (V) \,}$, depending on the context which they are used. For a fixed ${\epsilon>0 \,}$, consider the closed counter-clockwise-oriented contour 
\begin{equation}
{\gamma = \left\{ z \in \mathbb{C} \, \big| \, |z| = 1+ \epsilon \right\} } \subset V^e \, .
\end{equation}
Denote by ${\hat{\mathrm{F}}  \subset \mathscr{Q} (V)}$ the subspace of meromorphic 
1-forms on ${V \subset \Sigma}$ which do not have poles on $\gamma \,$. 
Denote by ${\Omega_\mathrm{inv} (V)}$ (resp. ${\mathscr{O}_{\mathrm{inv}} (V)}$) the subspace of holomorphic 1-forms (resp. functions) on ${V \subset \Sigma}$ which are invariant under the involution $\iota^* \,$. Let ${T_{0,2} (x(z) , x(\zeta)) \,}$, ${z , \zeta \in V \,}$, be the pullback of ${T_{0,2} (\xi , \eta) }$ under the map\\ ${(x,x) : V \times V \to U \times U \,}$. 
Consider the following integral operator
\begin{equation} \label{operator-O-hat}
\hat{\mathcal{O}} \phi (z) = 
{  \frac{1}{2 \pi \ci }} \oint_{\gamma}    
\hat{R} (z, \zeta)  \, \phi(\zeta) \, , \quad
{\hat{\mathcal{O}}: \hat{\mathrm{F}} \to {\Omega_\mathrm{inv} (V)}} \, ,
\end{equation}
with the kernel
\begin{equation}
\hat{R} (z, \zeta) = d_z \, T_{0,2} (x(z) , x(\zeta)) \ \in 
\Gamma \left( { V \times V, \Omega \boxtimes \mathscr{O}} \right) \, ,
\end{equation}
where ${d_z}$ denotes the exterior derivative operator acting on the first argument. The operator 
${\hat{\mathcal{O}}}$ is closely related to the master operator $\mathcal{O}$ in the sense that, for each fixed ${z_I \in \Sigma^{n-1} \,}$, we have
\begin{equation}
\mathcal{O} W_{g,n} (x(z) , x_I) \, dx = {\hat{\mathcal{O}}} \omega_{g,n} (z , z_I) \, ,
\end{equation}
where 
${\mathcal{O}{W_{g,n} (x(z) , \cdot)} \in {\mathscr{O}_{\mathrm{inv}} (V)}}$ denotes the pullback of ${\mathcal{O}{W_{g,n} (x , \cdot)}}$ under ${x: V \to U \,}$.


\section{Fundamental bidifferential \texorpdfstring{${\omega_{0,2} (z , \zeta)}$}{omega 0,2}}

As the next step, we investigate the \emph{large-$N$ spectral covariance} ${W_{0,2} (x_1 , x_2)}$ of the model. Fix ${\zeta \in \Sigma \backslash \overline{V^i} \,}$. Denote by 
${W_{0,2} (x(z) , x(\zeta)) \,}$, ${z  \in \Sigma \,}$, the meromorphic continuation of the pullback of ${W_{0,2} (x_1 , x_2)}$ under the biholomorphism ${x |_{V^e} : {V^e} \to U \backslash \Gamma}$ to 
$\Sigma \,$. By definition of ${W_{0,2} (x_1 , x_2) \,}$, the possible singularities of the function 
${W_{0,2} (x(z) , x(\zeta))}$ can only occur in ${\mathfrak{R} \cup V^i \subset \Sigma \,}$. 
To investigate poles of ${W_{0,2} (x(z) , x(\zeta)) \,}$, we consider the rank two Schwinger-Dyson equation to leading order in $N \,$, i.e. the Equation \eqref{rank n SDE-N exp} for ${(g,n)=(0,2) \,}$, given by 
\begin{align} \label{leading-two-SDE-1}
&\quad \  2 \, W_{0,1} (x_1) W_{0,2} (x_1,x_2)  
+ \oint_{C_{\Gamma}} \frac{\di \xi}{2 \pi \ci } \,
\frac{W_{0,1} (\xi )}{({x_1 - \xi}) {(x_2 - \xi )^2}} 
\nonumber \\[0.5em] 
 &+ \oint_{C_{\Gamma}} \frac{\di \xi}{2 \pi \ci } \,
 \frac{1}{x_1- \xi}
\Big( { \left[ { {\partial_\xi T_{0,1} (\xi)} + \mathcal{O} W_{0,1} ( \xi) } \right] W_{0,2} (\xi , x_2)
+  \left[ \mathcal{O} W_{0,2} (\xi , x_2) \right] W_{0,1} (\xi)} \Big) 
\nonumber \\ 
&= \, 0 \, ,
\end{align}
for ${x_1 \in U \backslash \Gamma \,}$, and fixed ${x_2 \in \mathbb{C} \backslash \Gamma \,}$. We rewrite \eqref{leading-two-SDE-1} in the following form
\begin{align} \label{leading-two-SDE-2}
& -\big[ { 2 \, W_{0,1} (x_1) + {\partial_{x_1} T_{0,1} (x_1)} + \mathcal{O} W_{0,1} ( x_1) } \big] W_{0,2} (x_1 , x_2)
\nonumber \\[0.5em]
= \, &\left[ \mathcal{O} W_{0,2} (x_1 , x_2) \right] W_{0,1} (x_1) 
+ \partial_{x_2} \Big( \frac{W_{0,1} (x_1) - W_{0,1} (x_2)}{x_1 - x_2} \Big)
+ P_{0,2} (x_1 , x_2) \, ,
\end{align}
where ${P_{0,2} (x_1 , \cdot) \in \mathscr{O} (U) \,}$. By \eqref{W01-sqrt}, the function
\begin{equation}
{ 2 \, W_{0,1} (x(z)) + {\partial_{x} T_{0,1} (x(z))} + \mathcal{O} W_{0,1} (x(z)) }
\end{equation}
has a simple zero at the ramification points $\mathfrak{R} \,$. Therefore, considering \eqref{leading-two-SDE-2},\\ ${W_{0,2} (x(z) , x(\zeta))}$ has a simple pole at $\mathfrak{R} \,$.

By considering \eqref{leading-two-SDE-2} as ${x_1 \to s \pm \ci \epsilon \,}$, 
${s \in \Gamma^{\mathrm{o}} \,}$, it can be shown \cite{Borot-2014} that
\begin{equation} \label{leading-two-SDE-jump-1}
\lim_{\epsilon \to 0^+} \big[ W_{0,2} (s+ \ci \epsilon , x_2) 
+ W_{0,2} (s- \ci \epsilon , x_2) \big]
+ \, \mathcal{O} W_{0,2} (s, x_2) + \frac{1}{(s - x_2)^2} = 0 \, , \quad \forall s 
 \in \Gamma^{\mathrm{o}} \, .
\end{equation}
Using the \emph{identity theorem for holomorphic functions}, and the \emph{Riemann's removable singularities theorem}, we deduce from \eqref{leading-two-SDE-jump-1} that 
\begin{equation} \label{monodromy-W_0,2}
\hat{\mathcal{P}}_+  {W_{0,2} (x(z) , x(\zeta))} 
+ \mathcal{O}{W_{0,2} (x(z) , x(\zeta))}  
+ \frac{1}{[ x(z) - x(\zeta)]^2} = 0 
\end{equation}
for all ${z \in V \subset \Sigma \,}$. Considering \eqref{monodromy-W_0,2}, we get the following equation
\begin{equation} \label{monodromy-omega_0,2-1}
\hat{\mathcal{P}}_+  \tilde{\omega}_{0,2} (z , \zeta) 
+ \hat{\mathcal{O}} \tilde{\omega}_{0,2} (z , \zeta)
+ \hat{B}_0  (z , \zeta ) = 0 \, , 
\end{equation}
for the bidifferential
\begin{equation}
\tilde{\omega}_{0,2} (z , \zeta) = {W_{0,2} (x(z) , x(\zeta))} \, dx(z) \, dx (\zeta) \, ,
\end{equation}
where ${\hat{B}_0  (z , \zeta )}$ is given by \eqref{B-hat}.

Consider the function
\begin{equation}
F_\zeta (z) = W_{0,2} (x(z) , x(\zeta)) \, x' (z) \, x'(\zeta)  \, ,
\end{equation} 
where ${x'(z) \equiv {\partial_z \, x(z) \,}}$. We have
\begin{equation}
\tilde{\omega}_{0,2} (z , \zeta) = {F_\zeta (z)} \, dz \, d\zeta \, .
\end{equation}
In the remaining part of this section, we consider the following set of assumptions, and derive an explicit expression for ${F_\zeta (z) \,}$.
\begin{hypothesis} \label{Hypo-1}
\leavevmode
\begin{enumerate}[(i)]
\item
For each fixed ${\zeta \in \mathbb{C} \backslash \overline{\mathbb{D}} \,}$, the function ${F_\zeta (z)}$ has a meromorphic continuation to the whole ${\mathbb{C} \mathbb{P}^1 }$;
\item
The support $\Gamma$ of the large-$N$ spectral distribution $\mu$ of the model is of the form 
${\Gamma = [ -\mathfrak{b} ,\mathfrak{b}] \subset \mathbb{R} \,}$.\footnote{Our approach for analyzing the function ${F_\zeta (z)}$ has a straightforward extension to the general case of 
${\Gamma = [ \mathfrak{a} ,\mathfrak{b}] \subset \mathbb{R} \,}$. However, the computational part becomes more cumbersome.} 
\end{enumerate}
\end{hypothesis}

To get an explicit expression for the meromorphic function ${F_\zeta (z) \,}$, ${z \in {\mathbb{C} \mathbb{P}^1 \,}}$, we find the principal part of the germ of ${F_\zeta (z)}$ at its poles on ${\mathbb{C} \mathbb{P}^1 \,}$, and then analyze the corresponding \emph{Mittag-Leffler problem}. More precisely, we consider the following long exact sequence of \v Cech cohomology groups
\begin{equation}
0 \to H^0 ( {\mathbb{C} \mathbb{P}^1} , \mathscr{O} )
\to H^0 ( {\mathbb{C} \mathbb{P}^1} , \mathscr{M} )
\to H^0 ( {\mathbb{C} \mathbb{P}^1} , \mathscr{M}/ \mathscr{O} )
\to H^1 ( {\mathbb{C} \mathbb{P}^1} , \mathscr{O} )
\to \cdots \ .
\end{equation}
It is well known that
\begin{equation}
H^0 ( {\mathbb{C} \mathbb{P}^1} , \mathscr{O} ) = \mathbb{C} \, , \quad \text{and} \quad 
H^1 ( {\mathbb{C} \mathbb{P}^1} , \mathscr{O} ) = 0 \, .
\end{equation}
Therefore, given a section ${\tilde{f} \in H^0 ( {\mathbb{C} \mathbb{P}^1} , \mathscr{M}/ \mathscr{O} ) \,}$, there exists a meromorphic function 
$f  \in H^0 ( {\mathbb{C} \mathbb{P}^1} , \mathscr{M} ) \,$, unique up to  a constant, whose local singular behavior is given by $\tilde{f} \,$.

The meromorphic function ${F_\zeta (z) }$ does not have any poles in 
${\mathbb{C} \mathbb{P}^1 \backslash \mathbb{D}}$ because the simple zeroes of ${x'(z)}$ at the ramification points $\mathfrak{R}$ cancels out the simple poles of ${W_{0,2} (x(z) , x(\zeta)) \,}$. We use \eqref{monodromy-omega_0,2-1} to analyze the possible poles of ${F_\zeta (z) }$ in 
$\mathbb{D} \,$. Rewrite the function ${T_{0,2} (\xi , \eta) \,}$, given by \eqref{T_0,2}, in the following form
\begin{equation} \label{T_0,2-2}
T_{0,2} (\xi , \eta) = \sum_{k=1}^{d-1} \xi^k \Big( {\sum_{m=1}^{d-k} \nu_{k,m} \, \eta^m} \Big) \, ,
\end{equation}
where each ${\nu_{k,m}}$ is a linear combination of the Boltzmann weights ${\rt^{(0)}_{\ell_1 , \ell_2} \,}$, ${ (\ell_1 , \ell_2) \in {\mathfrak{L}}_{\scaleto{\mathrm{cylinder}}{4pt}} \,}$. We have
\begin{equation} \label{monodromy-omega_0,2-2}
\hat{\mathcal{O}} \tilde{\omega}_{0,2} (z , \zeta)
= {\big( 1 - \frac{1}{z^2} \big)} \sum_{k=1}^{d-1} b_k (\zeta) {\Big( z+ \frac{1}{z} \Big)}^{k-1} 
dz \, d\zeta \, ,
\end{equation}
where, for each ${1 \les k \les {d-1} \,}$, 
\begin{equation}
b_k (\zeta) = \frac{k}{2 \pi \ci} \, \sum_{m=1}^{d-k} \nu_{k,m} (\mathfrak{b}/2)^{k+m}
\oint_\gamma {\Big( \tau+ \frac{1}{\tau} \Big)}^{m} F_\zeta (\tau) \di \tau \, .
\end{equation}
Considering \eqref{monodromy-omega_0,2-1} and \eqref{monodromy-omega_0,2-2}, we get the following equation satisfied by ${F_\zeta (z)}$ on ${\mathbb{C} \backslash \{0\} \,}$:
\begin{equation} \label{monodromy-F-1}
{F_\zeta (z)} = \frac{1}{z^2} \, {F_\zeta (\iota(z))} 
-  \bigg[ {\big( 1 - \frac{1}{z^2} \big)} \sum_{k=1}^{d-1} b_k (\zeta) {\Big( z+ \frac{1}{z} \Big)}^{k-1} \bigg]
- \frac{x'(z) \, x'(\zeta)}{[ x(z) - x(\zeta)]^2} \, .
\end{equation}

We recall the following basic fact from the theory of \emph{projective connections} on Riemann surfaces 
(see, e.g., \cite{Tyurin-1978}). Let ${D \subset \mathbb{C}}$ be an open neighborhood of a point ${p \in \mathbb{C} \,}$. Let $f$ be a biholomorphism on $D \,$. Consider the holomorphic function 
${E_f (v,w)}$ on ${D \times D}$ given by
\begin{equation}
{E_f (v,w)} =
\frac{f'(v) \, f'(w)}{[ f(v) - f(w)]^2} \, , \quad v,w \in D \, .
\end{equation}
Let ${\hat{v} = v -p \,}$, and ${\hat{w} = w -p \,}$. By expanding ${E_f (v,w)}$ as a series in $\hat{v}$ and $\hat{w} \,$, one gets
\begin{equation} \label{Schwarzian}
\frac{f'(v) \, f'(w)}{[ f(v) - f(w)]^2} 
= \frac{1}{(v-w)^2} + \frac{1}{6} \, (Sf) (p) + H(\hat{v} , \hat{w}) \, ,
\end{equation}
as $\hat{v} , \hat{w} \to 0 \,$, where 
\begin{equation}
(Sf) (p) = \frac{f''' (p)}{f'(p)} - \frac{3}{2}  \left( {\frac{f'' (p)}{f'(p)}} \right)^2
\end{equation}
is called the \emph{Schwarzian derivative} of $f \,$, and ${H(\hat{v} , \hat{w})}$ is a sum of terms in $\hat{v}$ and $\hat{w}$ of strictly positive degree.

Considering \eqref{monodromy-F-1}, the function ${F_\zeta (z)}$ has a pole of order two at 
${z= \iota(\zeta) \,}$. Using \eqref{Schwarzian}, we have
\begin{align}
- \, \frac{x'(z) \, x'(\zeta)}{[ x(z) - x(\zeta)]^2} &= - \, \frac{x'(\zeta)}{x' (\iota (\zeta))} \, \cdot \,
\frac{x'(z) \, x'(\iota(\zeta))}{[ x(z) - x(\iota(\zeta))]^2}
\nonumber \\[0.5em]
&= - \, \frac{x'(\zeta)}{x' (\iota (\zeta))}
\Big( \frac{1}{{[ z- \iota(\zeta)]}^2} \,+ \,O (1) \Big) \, ,
\end{align}
as ${z \to \iota(\zeta) \,}$. Thus, the principal part of the germ of ${F_\zeta (z)}$ at ${z= \iota(\zeta) }$ is given by
\begin{equation} \label{function-Q1-z1,z2}
Q_1 (z , \zeta) = - \, \frac{x'(\zeta)}{x' (\iota (\zeta))} \cdot
\frac{1}{{[ z- \iota(\zeta)]}^2}= \frac{1}{{\left( {\zeta z - 1} \right)}^2} \, .
\end{equation}

In addition, by \eqref{monodromy-F-1}, the function ${F_\zeta (z)}$ has a pole at ${z=0 \,}$. Since 
${W_{0,2} (x(z) , x(\zeta))}$ has a zero of order 2 at ${z=\infty \,}$, the term 
${F_\zeta (\iota (z)) / {z^2} \,}$, in \eqref{monodromy-F-1}, is regular at ${z=0 \,}$. Therefore, the principal part of the germ of ${F_\zeta (z)}$ at ${z=0}$ is the same as the principal part of the Laurent polynomial
\begin{equation}
-  \bigg[ {\big( 1 - \frac{1}{z^2} \big)} \sum_{k=1}^{d-1} b_k (\zeta) {\Big( z+ \frac{1}{z} \Big)}^{k-1} \bigg] \, ,
\end{equation}
and is given by
\begin{equation}
Q_2 (z , \zeta) = - \sum_{k=1}^{d-1} \, \frac{b_k (\zeta)}{z^{k+1}} \, \sum_{r=0}^{[\frac{k-1}{2}]} \, 
{\mathrm{c}}_{k,r} \, z^{2r} \, ,
\end{equation}
where
\begin{equation}
{\mathrm{c}}_{k,r} = \frac{2r-k}{k} \, {k \choose r} \, .
\end{equation}

By the Mittag-Leffler theorem, we have
\begin{equation}
F_\zeta (z) = Q_1 (z , \zeta) + Q_2 (z , \zeta) + c(\zeta) \, , \quad c(\zeta) \in H^0 ( {\mathbb{C} \mathbb{P}^1} , \mathscr{O} ) \, .
\end{equation}
Since the function $F_\zeta (z)$ has a zero of order two at ${z= \infty \,}$, the constant function ${c(\zeta)}$ should be equal to zero. To get an explicit expression for ${F_\zeta (z)}$ in terms of ${z , \zeta \,}$, it suffices to find ${b_k (\zeta)}$'s, as a function of $\zeta \,$, explicitly. Note that each ${b_k (\zeta)}$ is a linear combination of the Fourier coefficients
\begin{equation}
a_n (\zeta) = \frac{1}{2 \pi \ci} \oint_\gamma \frac{F_\zeta (z)}{z^{n+1}} \, \di z \, , 
\quad n \in \mathbb{Z} \, ,
\end{equation}
of the restriction of ${F_\zeta (z)}$ to $\gamma \,$. 
Considering the degree of the Laurent polynomial ${Q_2 (z , \zeta)}$, it suffices to find only ${\mathrm{x}_i = a_{-i-1} (\zeta) \,}$, ${1 \les i \les {d-1} \,}$. For each ${1 \les i \les {d-1} \,}$, the identity
\begin{equation}
a_{-i-1} (\zeta) = \frac{1}{2 \pi \ci} \oint_\gamma \big[ {Q_1 (z , \zeta) + Q_2 (z , \zeta)} \big] \, z^i 
\di z
\end{equation}
leads to a linear equation of the following form satisfied by the ${\mathrm{x}_i}$'s:
\begin{equation}
\sum_{j=1}^{d-1} \mathrm{C}_{ij} \, \mathrm{x}_j = \frac{i}{\zeta^{i+1}} \, ,
\end{equation}
where the coefficients ${\mathrm{C}_{ij}}$ depend only on $\mathfrak{b} \,$, and ${\rt^{(0)}_{\ell_1 , \ell_2} \,}$, ${ (\ell_1 , \ell_2) \in {\mathfrak{L}}_{\scaleto{\mathrm{cylinder}}{4pt}} \,}$. 

We assume that the given values to the Boltzmann weights ${\mathfrak{t}^{(0)}_{\vec{\ell}}}$ (or, equivalently, the formal parameters ${\alpha_n \, ,}$ ${3 \les n \les d}$) of the model are such that the above-mentioned matrix ${\left( \mathrm{C}_{ij} \right)}$ is invertible. Hence, we get the explicit expression of 
\begin{equation} \label{function-Fz1,z2-1}
F (z , \zeta) = Q_1 (z , \zeta) + Q_2 (z , \zeta) 
\end{equation}
in terms of ${z, \zeta \,}$, which depends on the support 
${\Gamma = [ - \mathfrak{b}, \mathfrak{b}]}$ of the large-$N$ spectral distribution $\mu \,$, and the Boltzmann weights ${\rt^{(0)}_{\ell_1 , \ell_2} \,}$, ${ (\ell_1 , \ell_2) \in {\mathfrak{L}}_{\scaleto{\mathrm{cylinder}}{4pt}} \,}$. Since the function
\begin{equation} \label{function-Fz1,z2-2}
W_{0,2} (x(z) , x(\zeta)) \, x' (z) \, x'(\zeta)
\end{equation}
is symmetric on its initial domain of definition 
${{\big( {\Sigma \backslash \overline{V^i}} \,  \big)} \times 
{\big( {\Sigma \backslash \overline{V^i}} \,  \big)} \,}$, the function ${F(z , \zeta) \,}$, given by \eqref{function-Fz1,z2-1}, gives, indeed, the presumed meromorphic continuation of 
\eqref{function-Fz1,z2-2} to \\
${\mathbb{C} \mathbb{P}^1 \times \mathbb{C} \mathbb{P}^1 \,}$.\footnote{For the classical formal 1-Hermitian matrix models, in the one-cut regime, the meromorphic continuation of \eqref{function-Fz1,z2-2} to ${\mathbb{C} \mathbb{P}^1 \times \mathbb{C} \mathbb{P}^1}$ is given by ${Q_1 (z , \zeta) \,}$, and it is universal in the sense that it does not depend on the formal parameters of the model (see \cite{Eynard-CountingSurfaces}, Section 3.2.1).} From now on, we consider the restriction of ${F(z , \zeta)}$ to ${\Sigma \times \Sigma \,}$. 

The bidifferential 
\begin{equation}
{\tilde{\omega}}_{0,2} (z ,\zeta) = {F(z , \zeta)} \, dz  \, d \zeta
\end{equation}
over ${\Sigma \times \Sigma}$ has only a pole of order 2 at
\begin{equation}
\tilde{\Delta} = \big\{ { (p, \iota (p)) \, \big| \, p \in V \subset \Sigma} \big\} \, \subset 
\Sigma \times \Sigma \, .
\end{equation}
Considering \eqref{function-Q1-z1,z2}, the singular behavior of ${{\tilde{\omega}}_{0,2} (z ,\zeta)}$ at 
$\tilde{\Delta}$ is as follows:
\begin{equation}
{\tilde{\omega}}_{0,2} (z ,\zeta) = 
- \, \frac{x'(\zeta)}{x' (\iota (\zeta))}
\Big( \frac{1}{{[ z- \iota(\zeta)]}^2} \,+ \,O (1) \Big) \, dz  \, d \zeta \, , 
\quad \text{as} \ z \to \iota (\zeta) \, .
\end{equation}
In addition, the polar divisor of the bidifferential 
\begin{equation} 
\hat{B}_0  (z , \zeta )  = \frac{d x (z)\, d x (\zeta)}{{[x (z) - x (\zeta)]}^2}
\end{equation}
over ${\Sigma \times \Sigma}$ is given by ${ 2 \, {\Delta} + 2 \, \tilde{\Delta} \,}$, where
\begin{equation}
{\Delta} = \big\{ { (p, p) \, \big| \, p \in \Sigma} \big\} \, \subset 
\Sigma \times \Sigma 
\end{equation}
denotes the diagonal divisor. The singular behavior of ${\hat{B}_0  (z , \zeta )}$ at $\Delta$ and 
$\tilde{\Delta}$ is given by
\begin{equation}
\hat{B}_0  (z , \zeta ) = \Big( \frac{1}{{[ z- \zeta]}^2} \,+ \,O (1) \Big) \, dz  \, d \zeta \, , 
\quad \text{as} \ z \to \zeta \, ,
\end{equation}
and
\begin{equation}
\hat{B}_0  (z , \zeta ) =  \frac{\iota^* (d \zeta)}{d \zeta} \, \Big( \frac{1}{{[ z- \iota (\zeta)]}^2} \,+ \,O (1) \Big) \, dz  \, d \zeta \, , 
\quad \text{as} \ z \to \iota (\zeta) \, ,
\end{equation}
respectively. Therefore, the symmetric bidifferential 
\begin{equation} \label{bidiff-omega-0,2-1}
\omega_{0,2} (z , \zeta ) = {\tilde{\omega}}_{0,2} (z ,\zeta) + \hat{B}_0  (z , \zeta )
\end{equation}
has only a pole of order 2 at $\Delta \,$, and its singular behavior is given by
\begin{equation} \label{omega_0,2-near diagonal}
\omega_{0,2} (z , \zeta ) = \Big( \frac{1}{{[ z- \zeta]}^2} \,+ \,O (1) \Big) \, dz  \, d \zeta \, , 
\quad \text{as} \ z \to \zeta \, .
\end{equation}
Succinctly, 
\begin{equation} \label{bidiff-omega-0,2-2}
{\omega_{0,2} (z , \zeta ) \in H^0 { \left( {\Sigma^2 , K_{\Sigma} ^{\boxtimes 2} (2 \Delta)} \right) }^{\mathfrak{S}_2}} \, , 
\end{equation}
and
\begin{equation} \label{bidiff-omega-0,2-3}
{\text{Bires}|_{\Delta} \, \omega_{0,2} (z , \zeta ) =1 \,}.
\end{equation}


\section{Local Cauchy kernel}

Let $\tilde{V}$ be the simply-connected domain which one gets by cutting the neighborhood 
${V \subset \Sigma}$ along a radial line 
${\{ z= r e^{\ci \theta} \, | \, r \in \mathbb{R}_+ \, , \, \text{fixed} \, \theta \} \,}$. Fix ${p_0 \in \tilde{V} \,}$. Considering \eqref{bidiff-omega-0,2-2} and \eqref{bidiff-omega-0,2-3}, one gets a 
\emph{local Cauchy kernel}
\begin{equation} \label{Cauchy-kernel-1}
G(z, \zeta) = \int_{p_0}^{\zeta} \omega_{0,2} (z , \tau )
\end{equation}
by integrating the 1-form ${\omega_{0,2} (\cdot , \tau )}$ on $\tilde{V} $ \cite{Borot-Eynard-Orantin-2015}. We have
\begin{equation}
{G(z, \zeta)} \in \Gamma \left( { \Sigma \times V \, , {\Omega \boxtimes {\mathscr{O} (\Delta)}}} \right)
\, ,
\end{equation}
and 
\begin{equation} \label{Cauchy-kernel-2}
{G(z, \zeta)} =  \Big( \frac{1}{z- \zeta} \,+ \,O (1) \Big) \, dz  \, , 
\quad \text{as} \ z \to \zeta \, .
\end{equation}

Denote by ${{[\phi]}_p \in \mathscr{Q}_p /  \Omega_p}$ the image of the germ of a meromorphic 1-form $\phi \in \mathscr{Q} (\Sigma)$ at a point ${p \in \Sigma}$ under the projection map 
${\pi_p :  \mathscr{Q}_p \to  \mathscr{Q}_p /  \Omega_p} \,$. Denote the set of poles of a meromorphic 1-form ${\phi \in \mathscr{Q} (\Sigma)}$ by ${\mathfrak{P} (\phi)} \,$. Let ${ \tilde{{\mathrm{F}}} \subset \mathscr{Q} (V)}$ be the subspace of meromorphic 1-forms which have finitely many poles on ${V \subset \Sigma \,}$. Consider the operator 
${\mathcal{K} :  \tilde{{\mathrm{F}}} \to \mathscr{Q} (\Sigma)}$ given by
\begin{equation} \label{operator-K-1}
\mathcal{K} \phi (z) =
\sum_{p \in V} \underset{\zeta=p}{\res} \ G(z, \zeta) \, \phi (\zeta)
= \frac{1}{2 \pi \ci} \, \sum_{p \in V} \, \oint_{| \zeta-p | = \epsilon} G(z, \zeta) \, \phi (\zeta) \, .
\end{equation}
Note that, for each ${\phi \in \tilde{{\mathrm{F}}} \,}$, a point ${p \in V}$ contributes to the summation in \eqref{operator-K-1} iff ${p \in \mathfrak{P}(\phi) \,}$. In addition, by \eqref{Cauchy-kernel-2}, we have
\begin{equation} \label{operator-K-2}
{[\mathcal{K} \phi ]}_p = {[ \phi ]}_p \, , \quad \forall p \in \mathfrak{P}( \mathcal{K}\phi) = \mathfrak{P}(\phi) \, .
\end{equation}
We use the same notation $\mathcal{K}$ to denote the operator 
${r_{\Sigma , V} \, \mathcal{K} :  \tilde{{\mathrm{F}}} \to \tilde{{\mathrm{F}}} \,}$, where 
\begin{equation}
{r_{\Sigma , V} : \mathscr{Q} (\Sigma) \to \mathscr{Q} (V)}
\end{equation}
is the restriction map. By \eqref{operator-K-2}, we have 
\begin{equation}
{(\mathds{1} - \mathcal{K}) (\tilde{\mathrm{F}}) \subset \Omega (V) \,}.
\end{equation}
Therefore, ${\mathcal{K} :  \tilde{{\mathrm{F}}} \to \tilde{{\mathrm{F}}}}$ is an idempotent operator, and we have the decomposition
\begin{equation}
\tilde{\mathrm{F}} = \mathcal{K} (\tilde{\mathrm{F}}) \oplus (\mathds{1} - \mathcal{K}) (\tilde{\mathrm{F}}) \, .
\end{equation}

Consider the operator ${\tilde{\mathcal{K}} :  \tilde{{\mathrm{F}}} \to \tilde{{\mathrm{F}}}}$ given by
\begin{equation}
\tilde{\mathcal{K}} \phi (z) =
\sum_{p \in V} \underset{\zeta=p}{\res} \ G(z, \iota(\zeta)) \, \phi (\zeta) \, .
\end{equation}
The following simple lemma will be used several times later.
\begin{lemma} \label{decomposition-lemma}
The operator ${\mathcal{K} :  \tilde{{\mathrm{F}}} \to \tilde{{\mathrm{F}}}}$ can be decomposed as
\begin{equation}
\mathcal{K} = \frac{1}{2} \, {(\mathcal{K} + \tilde{\mathcal{K}}) \,\mathcal{P}_+} 
\, + \, \frac{1}{2} \, {(\mathcal{K} - \tilde{\mathcal{K}}) \, \mathcal{P}_-} \, ,
\end{equation}
where the orthogonal idempotents 
${\mathcal{P}_{\pm} : {\tilde{{\mathrm{F}}}} \to {\tilde{{\mathrm{F}}}}_{\pm}}$ are given by \eqref{operator-P-1}.
\end{lemma}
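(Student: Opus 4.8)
The plan is to reduce the asserted decomposition to the single operator identity $\mathcal{K} = \tilde{\mathcal{K}} \circ \iota^*$ on $\tilde{\mathrm{F}}$, and then to verify the latter by the change of variables $\eta = \iota(\zeta)$ inside each residue. First note that, since $\iota : V \to V$ is a biholomorphism, $\iota^*$ maps $\mathscr{Q}(V)$ to itself and preserves the property of having only finitely many poles in $V$; hence $\iota^*(\tilde{\mathrm{F}}) \subseteq \tilde{\mathrm{F}}$, so that $\tilde{\mathcal{K}} \circ \iota^*$ is a well-defined operator on $\tilde{\mathrm{F}}$. Expanding the right-hand side of the claimed identity with $\mathcal{P}_{\pm} = \tfrac12(\mathds{1} \pm \iota^*)$ gives
\[
\tfrac12 (\mathcal{K} + \tilde{\mathcal{K}}) \, \mathcal{P}_+ + \tfrac12 (\mathcal{K} - \tilde{\mathcal{K}}) \, \mathcal{P}_- = \tfrac14 \big[ (\mathcal{K} + \tilde{\mathcal{K}})(\mathds{1} + \iota^*) + (\mathcal{K} - \tilde{\mathcal{K}})(\mathds{1} - \iota^*) \big] = \tfrac12 \big( \mathcal{K} + \tilde{\mathcal{K}} \circ \iota^* \big) \, ,
\]
so the lemma is equivalent to the identity $\mathcal{K} = \tilde{\mathcal{K}} \circ \iota^*$.

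\textbf{Main step.} To prove $\mathcal{K} = \tilde{\mathcal{K}} \circ \iota^*$, fix $\phi \in \tilde{\mathrm{F}}$ and $z \in \Sigma$. Adopting for $\tilde{\mathcal{K}}$ the same convention as for $\mathcal{K}$ (the residue sum ranges over the poles in $V$ of the argument), the argument $\iota^*\phi$ has pole set $\iota(\mathfrak{P}(\phi)) \subset V$, so $\tilde{\mathcal{K}}(\iota^*\phi)(z) = \sum_{p \in \iota(\mathfrak{P}(\phi))} \res_{\zeta = p} G(z, \iota(\zeta)) \, (\iota^*\phi)(\zeta)$. I would write each residue as a contour integral $\tfrac{1}{2\pi\ci}\oint_{|\zeta - p| = \epsilon}$ and substitute $\eta = \iota(\zeta)$. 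Since $\iota$ is an involution, $\zeta = \iota(\eta)$ and $(\iota^*\phi)(\zeta)$ pulls back to $\phi(\eta)$, while $G(z, \iota(\zeta))$ becomes $G(z, \eta)$; because $\iota$ is holomorphic it preserves the complex orientation, so the circle $|\zeta - p| = \epsilon$ is carried to a small positively oriented loop about $\iota(p)$. Hence $\res_{\zeta = p} G(z, \iota(\zeta))(\iota^*\phi)(\zeta) = \res_{\eta = \iota(p)} G(z, \eta) \, \phi(\eta)$, and summing over $p \in \iota(\mathfrak{P}(\phi))$ while using that $\iota$ restricts to a bijection of $V$ sending $\iota(\mathfrak{P}(\phi))$ onto $\mathfrak{P}(\phi)$ yields $\tilde{\mathcal{K}}(\iota^*\phi)(z) = \sum_{q \in \mathfrak{P}(\phi)} \res_{\eta = q} G(z, \eta) \, \phi(\eta) = \mathcal{K}\phi(z)$. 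Substituting $\tilde{\mathcal{K}} \circ \iota^* = \mathcal{K}$ into the displayed expression above gives $\tfrac12(\mathcal{K} + \mathcal{K}) = \mathcal{K}$, which is the lemma.

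\textbf{Expected obstacle.} There is no genuine analytic difficulty here; the only point demanding care is the bookkeeping of residues, namely that each residue sum is taken over the poles of the operator's argument (not over the diagonal pole of $G(z,\cdot)$ at $\cdot = z$), and that the substitution $\eta = \iota(\zeta)$ matches these finite pole sets bijectively and preserves the orientation of the small contours. Both are immediate consequences of $\iota(z) = 1/z$ being a biholomorphic involution of $V$.
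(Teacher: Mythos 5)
Your proof is correct and follows essentially the same route as the paper: both reduce the decomposition, via the idempotent algebra of $\mathcal{P}_{\pm}$ and $(\iota^*)^2 = \mathds{1}$, to the single relation $\tilde{\mathcal{K}} = \mathcal{K}\,\iota^*$ (your $\mathcal{K} = \tilde{\mathcal{K}}\,\iota^*$ is the same identity composed with $\iota^*$). The only difference is that the paper asserts this relation without justification, whereas you verify it by the substitution $\eta = \iota(\zeta)$ in the residue sums — a useful elaboration, but not a different argument.
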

\begin{proof}
We rewrite the operator $\mathcal{K}$ in the following form
\begin{equation}
\mathcal{K} = \frac{1}{2} \left[ { {(\mathcal{K} + \tilde{\mathcal{K}}) } 
+ {(\mathcal{K} - \tilde{\mathcal{K}}) }} \right] 
\left[ {{\mathcal{P}_+} + {\mathcal{P}_-}} \right] \, .
\end{equation}
Since ${\tilde{\mathcal{K}} = \mathcal{K} \, \iota^* \,}$, we have
\begin{equation}
{(\mathcal{K} \pm \tilde{\mathcal{K}}) } {{\mathcal{P}}_{\mp}} = 0 \, .
\end{equation}
\end{proof}

In the remaining part of this section, we try to explain some elementary aspects of the blobbed topological recursion formula, which will be discussed in the next section, in a simpler setup. Consider the operator 
\begin{equation} \label{operator-T-1}
\mathcal{T} = \hat{\mathcal{P}}_+ + \hat{\mathcal{O}}  \, , \quad 
{\mathcal{T} : \hat{\mathrm{F}} \to  \mathscr{Q}(V)} \, ,
\end{equation}
where the operators ${\hat{\mathcal{P}}_+}$ and $\hat{\mathcal{O}}$ are given by 
\eqref{operator-P-2} and \eqref{operator-O-hat}, respectively. Let ${P \subset V}$ be a fixed finite set of points in ${V \subset \Sigma}$ such that ${P \cap \gamma = \emptyset \,}$. Suppose we are given a set of germs ${s_i \in {\mathscr{Q}_{p_i}}/{\Omega_{p_i}} \,}$, at the points ${p_i \in P \,}$, ${i=1 , \cdots , |P| \,}$, and an $\iota$-invariant holomorphic 1-form ${\psi \in {\Omega}_\mathrm{inv} (V) \,}$. Denote by ${\mathrm{E}\subset H^0 (\Sigma , K_{\Sigma})}$ the subspace of holomorphic 1-forms 
${\tilde{\eta}}$ on $\Sigma$ whose restriction to ${V \subset \Sigma}$ satisfies the homogeneous equation
\begin{equation} \label{operator-T-2}
\mathcal{T} {\tilde{\eta}} = 0 \, , \quad {\tilde{\eta}} \in \mathrm{E} \, .
\end{equation}
Let ${\mathrm{A} \subset \mathscr{Q} (\Sigma)}$ be the set of meromorphic 1-forms $\phi$ on $\Sigma$ whose restriction to ${V \subset \Sigma}$ satisfies the inhomogeneous equation
\begin{equation} \label{operator-T-3}
\mathcal{T} \phi = \psi \, , \quad \phi \in \mathrm{A} \, , 
\end{equation}
and their singularities satisfy
\begin{equation} \label{operator-T-4}
\mathfrak{P} (\phi) = P \, ,  
\end{equation}
and
\begin{equation} \label{operator-T-5}
{[ \phi ]}_{p_i} = s_i \in {\mathscr{Q}_{p_i}}/{\Omega_{p_i}} \, , \quad \forall p_i \in P \, .
\end{equation}
The set ${\mathrm{A} \subset \mathscr{Q} (\Sigma)}$ is an affine space over the vector space $\mathrm{E} \,$. In the following, we investigate the space $\mathrm{A} \,$. 

The operator $\mathcal{K}$ maps the space $\mathrm{A}$ to the meromorphic 1-form 
${\phi_0 \in \mathscr{Q} (\Sigma) \,}$, given by
\begin{equation} \label{phi-0-form-3}
\phi_0 (z) = \sum_{p_i \in P} \underset{\zeta=p_i}{\res} \ G(z, \zeta) \, \tilde{s}_i (\zeta)  \, ,
\end{equation}
where each ${\tilde{s}_i \in \mathscr{Q}_{p_i}}$ is a representative of 
${s_i \in {\mathscr{Q}_{p_i}}/{\Omega_{p_i}}  \,}$, ${\forall p_i \in P \,}$. Using 
\eqref{operator-T-3}, we get 
\begin{equation} \label{phi-0-form-2}
\mathcal{P}_+ \, \phi  \in {\Omega}_\mathrm{inv} (V) \, , \quad \forall \phi \in \mathrm{A} \, . 
\end{equation}
Thus, by Lemma \ref{decomposition-lemma}, we have
\begin{equation} \label{phi-0-form-1}
\phi_0  =  \mathcal{K} \phi  = \frac{1}{2} \, {(\mathcal{K} - \tilde{\mathcal{K}}) \, \mathcal{P}_-} \, \phi \, , 
\quad \forall \phi \in \mathrm{A} \, .
\end{equation}
\begin{proposition} \label{prop-T-K}
The restriction of the 1-form ${\phi_0 \in \mathscr{Q} (\Sigma)}$ to ${V \subset \Sigma}$ satisfies
\begin{equation}
\mathcal{T} \phi_0 = 0 \, .
\end{equation}
\end{proposition}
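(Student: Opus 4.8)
The plan is to reduce $\mathcal{T}\phi_0 = 0$ to the operator identity $\mathcal{T}\mathcal{K} = \mathcal{T}\tilde{\mathcal{K}}$ on $\tilde{\mathrm{F}}$, and to deduce the latter from the fact that, for fixed $z$, the $1$-form $(\mathcal{T}_z G)(z,\cdot)$ in the first slot depends on its second argument $\zeta$ only through $x(\zeta)$ — equivalently, that it is invariant under $\zeta\mapsto\iota(\zeta)$; here $\mathcal{T}_z$ denotes $\mathcal{T} = \hat{\mathcal{P}}_+ + \hat{\mathcal{O}}$ acting on the $z$-variable. Granting this, by \eqref{phi-0-form-1} we have $\phi_0 = \tfrac12(\mathcal{K}-\tilde{\mathcal{K}})\rho$ with $\rho := \mathcal{P}_-\phi\in\tilde{\mathrm{F}}$ (whose poles lie in $P\cup\iota(P)$), so $\mathcal{T}\phi_0 = \tfrac12(\mathcal{T}\mathcal{K}\rho - \mathcal{T}\tilde{\mathcal{K}}\rho)$; expanding $\mathcal{K}$, $\tilde{\mathcal{K}}$ by \eqref{operator-K-1} and pushing $\mathcal{T}_z$ inside the finite sums of residues yields
\[
\mathcal{T}\mathcal{K}\rho(z) = \sum_{p\in V}\,\underset{\zeta=p}{\res}\,(\mathcal{T}_z G)(z,\zeta)\,\rho(\zeta),\qquad
\mathcal{T}\tilde{\mathcal{K}}\rho(z) = \sum_{p\in V}\,\underset{\zeta=p}{\res}\,(\mathcal{T}_z G)(z,\iota(\zeta))\,\rho(\zeta),
\]
where the second formula uses that substituting $\iota(\zeta)$ into the second slot of $G$ commutes with $\mathcal{T}_z$. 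Thus $\mathcal{T}\mathcal{K}\rho = \mathcal{T}\tilde{\mathcal{K}}\rho$, hence $\mathcal{T}\phi_0 = 0$, as soon as $(\mathcal{T}_z G)(z,\iota(\zeta)) = (\mathcal{T}_z G)(z,\zeta)$; note that it is the antisymmetrized kernel $\tfrac12(\mathcal{K}-\tilde{\mathcal{K}})\mathcal{P}_-$, and not $\mathcal{K}$ alone, that makes this cancellation available.

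To show $\zeta\mapsto(\mathcal{T}_z G)(z,\zeta)$ factors through $x(\zeta)$, split $\omega_{0,2} = \tilde{\omega}_{0,2} + \hat{B}_0$ as in \eqref{bidiff-omega(0,2)-1}, so $G(z,\zeta) = \tilde{G}(z,\zeta) + B_G(z,\zeta)$ with $\tilde{G}(z,\zeta) = \int_{p_0}^{\zeta}\tilde{\omega}_{0,2}(z,\cdot)$ and $B_G(z,\zeta) = \int_{p_0}^{\zeta}\hat{B}_0(z,\cdot)$. Since $\hat{B}_0(z,\tau) = dx(z)\, d_\tau\!\big(\tfrac{1}{x(z)-x(\tau)}\big)$ is exact in $\tau$, one gets $B_G(z,\zeta) = \big(\tfrac{1}{x(z)-x(\zeta)} - \tfrac{1}{x(z)-x(p_0)}\big)\,dx(z)$, a function of $x(\zeta)$ only; hence so is $\mathcal{T}_z B_G(z,\zeta)$, because $\mathcal{T}_z$ acts only in the $z$-slot. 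For the other summand, the monodromy equation \eqref{monodromy-omega_(0,2)-1} is exactly $\mathcal{T}_z\tilde{\omega}_{0,2}(z,\tau) = -\hat{B}_0(z,\tau)$ for $z\in V$; interchanging $\mathcal{T}_z$ with $\int_{p_0}^{\zeta}$ then gives $\mathcal{T}_z\tilde{G}(z,\zeta) = -B_G(z,\zeta)$, again a function of $x(\zeta)$ only. Therefore $\mathcal{T}_z G(z,\zeta) = \mathcal{T}_z B_G(z,\zeta) - B_G(z,\zeta)$ depends on $\zeta$ only through $x(\zeta)$, which completes the argument.

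The hard part is the analytic bookkeeping behind the word ``interchange''. One has to keep the circle $\gamma$ defining $\hat{\mathcal{O}}_z$ disjoint from $P$, $\iota(P)$, $\iota(\gamma)$, and from the $\iota$-image of the base path used for $\int_{p_0}^{\zeta}$, so that every Fubini-type exchange of $\oint_{\gamma}$ with the residues at $\zeta=p$ and with $\int_{p_0}^{\zeta}$ is valid and produces no spurious term. Because $\tilde{\omega}_{0,2}(\cdot,\zeta)$ has, in the first slot, its only pole at $\iota(\zeta)$, and $\iota$ maps the annulus $\{1/(1+\epsilon) < |z| < 1+\epsilon\}$ to itself (keeping away from $\gamma = \{|z|=1+\epsilon\}$), this can be arranged for $\epsilon$ small with $p_0$, $\zeta$ and the base path chosen in that annulus — in particular when $P\subseteq\mathfrak{R}$, the case relevant to topological recursion — and the resulting identity then extends to all of $V$ by the identity theorem. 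Finally one records that $\phi_0|_V\in\hat{\mathrm{F}}$, which is immediate from $\mathfrak{P}(\phi_0) = P$ and $P\cap\gamma = \emptyset$, so that $\mathcal{T}\phi_0$ is meaningful.
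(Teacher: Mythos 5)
Your proof is correct and follows essentially the same route as the paper's: both reduce the claim to the identity $\mathcal{T}\mathcal{K}=\mathcal{T}\tilde{\mathcal{K}}$ on the relevant 1-forms via the decomposition $\phi_0=\tfrac{1}{2}(\mathcal{K}-\tilde{\mathcal{K}})\,\mathcal{P}_-\,\phi$, and both derive that identity from the fact that $\mathcal{T}_z G(z,\zeta)$ depends on $\zeta$ only through $x(\zeta)$ and is therefore $\iota$-invariant in the second slot. The only cosmetic difference is that the paper computes $\mathcal{T}_z G(z,\zeta)$ explicitly as the pullback Cauchy kernel $\hat{G}_0(z,\zeta)=dx(z)/\bigl(x(z)-x(\zeta)\bigr)$ and introduces the auxiliary operator $\mathcal{K}_0=\tilde{\mathcal{K}}_0$, whereas you obtain the needed invariance from $\mathcal{T}_z\tilde{\omega}_{0,2}=-\hat{B}_0$ without writing the kernel out.
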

\begin{proof}
Considering \eqref{monodromy-omega_0,2-1} and \eqref{bidiff-omega-0,2-1}, since 
${{\Omega}_\mathrm{inv} (V) \subset \text{ker}(\hat{\mathcal{O}}) \,}$, we have
\begin{equation} 
\mathcal{T}  {\omega}_{0,2} (z , \zeta) 
=  \hat{B}_0  (z , \zeta ) \, ,
\end{equation}
for each fixed ${\zeta \in \tilde{V}}$. Thus, 
\begin{equation} \label{prop-T-K-1}
\mathcal{T} G(z, \zeta) = \hat{G}_0 (z, \zeta) \, ,
\end{equation}
where 
\begin{equation}
\hat{G}_0 (z, \zeta) = \frac{d x(z)}{x(z) - x(\zeta)}
\end{equation}
is the pullback of the canonical Cauchy kernel $\frac{dx}{x-y}$ over $\mathbb{C}\mathbb{P}^1$ under the map 
\begin{equation}
{(x,x) : \Sigma \times \Sigma \to \mathbb{C} \mathbb{P}^1 \times \mathbb{C} \mathbb{P}^1 \,} .
\end{equation}
Consider the operator ${\mathcal{K}_0 :  \tilde{{\mathrm{F}}} \to \mathscr{Q} (\Sigma)}$ given by
\begin{equation} 
\mathcal{K}_0 \, \varphi (z) =
\sum_{p \in V} \underset{\zeta=p}{\res} \ \hat{G}_0 (z, \zeta) \, \varphi (\zeta) \, .
\end{equation}
By \eqref{prop-T-K-1}, We have
\begin{equation} \label{prop-T-K-2}
\mathcal{T} \mathcal{K} = \mathcal{K}_0
\end{equation}
on the subspace ${{\mathrm{F}} = \tilde{{\mathrm{F}}} \cap \hat{{\mathrm{F}}} \subset  \mathscr{Q} (V)}$ of meromorphic 1-forms which have finitely many poles on ${V \subset \Sigma \,}$, and their poles are not on the contour $\gamma \,$. 

Let ${\tilde{\mathrm{A}} = r_{\Sigma , V} (\mathrm{A}) \,}$. Considering \eqref{phi-0-form-1} 
and \eqref{prop-T-K-2}, it suffices to show that
\begin{equation} \label{prop-T-K-3}
\tilde{\mathrm{A}} \subset \text{ker}(\mathcal{K}_0) \, .
\end{equation}
By decomposing the operator ${\mathcal{K}_0}$ in the same way as in Lemma \ref{decomposition-lemma}, using \eqref{phi-0-form-2}, we get
\begin{equation}
\mathcal{K}_0 \, \phi = 
\frac{1}{2} \, {(\mathcal{K}_0 - \tilde{\mathcal{K}}_0) \, \mathcal{P}_-} \, \phi \, ,
\quad  \forall \phi \in \tilde{\mathrm{A}} \, . 
\end{equation}
On the other hand, we have ${\mathcal{K}_0 = \tilde{\mathcal{K}}_0}$ because 
${\hat{G}_0 (z, \zeta) = \hat{G}_0 (z, \iota (\zeta)) \,}$. Thus, we get \eqref{prop-T-K-3}.

\end{proof}

Let 
\begin{equation}
\mathrm{A}_h = (\mathds{1} - \mathcal{K}) (\mathrm{A}) \, \subset H^0 (\Sigma , K_{\Sigma})
\end{equation}
be the image of $\mathrm{A}$ under the operator ${\mathds{1} - \mathcal{K} \,}$. Considering Proposition \ref{prop-T-K}, the space ${\mathrm{A}_h}$ consists of holomorphic 1-forms 
${{\eta}}$ on $\Sigma$ whose restriction to ${V \subset \Sigma}$ satisfies 
\begin{equation} 
\mathcal{T} {\eta} = \psi \, , \quad {\eta} \in \mathrm{A}_h \, . 
\end{equation}
Therefore, ${\mathrm{A}_h}$ is an affine subspace of ${H^0 (\Sigma , K_{\Sigma})}$ over the vector space $\mathrm{E} \,$.

\begin{lemma} \label{lemma-form-eta-0}
The holomorphic 1-form 
\begin{equation} \label{lemma-form-eta-0-1}
\eta_0 (z) = - \frac{1}{2 \pi \ci} \oint_{\partial \Sigma} G(z, \zeta) \, \psi(\zeta)
\end{equation}
is an element of the affine subspace ${\mathrm{A}_h \subset {H^0 (\Sigma , K_{\Sigma})}\,}$,\footnote{We assume that ${G(\cdot , \zeta)}$ and ${\psi (\zeta)}$ have a continuous extension to 
${\overline{V} \subset \Sigma \,}$.} where the boundary ${\partial \Sigma}$ of the spectral curve $\Sigma$ is assumed to have the induced orientation from $\Sigma$, i.e. it is clock-wise oriented.
\end{lemma}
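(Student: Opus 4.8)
The plan is to check directly the two properties characterizing membership in $\mathrm{A}_h$ (recorded in the paper just before the statement): that $\eta_0$ is a global holomorphic $1$-form on $\Sigma$, and that its restriction to $V$ satisfies $\mathcal{T}\eta_0=\psi$. For the first, recall (see the lines following \eqref{Cauchy-kernel-1}) that for each fixed $\zeta$ the section $G(\,\cdot\,,\zeta)$ is a $1$-form on $\Sigma$ holomorphic away from a simple pole at $z=\zeta$. Since the contour $\partial\Sigma$ is a compact curve lying in $\overline{V^i}$ and hence disjoint from the interior of $\Sigma$, the family $z\mapsto G(z,\zeta)$ is holomorphic near every interior point of $\Sigma$, uniformly in $\zeta\in\partial\Sigma$, so $\eta_0(z)=-\frac{1}{2\pi\ci}\oint_{\partial\Sigma}G(z,\zeta)\,\psi(\zeta)$ is holomorphic on $\Sigma$, including at $z=\infty$.

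For the equation $\mathcal{T}\eta_0=\psi$, note first that $\mathcal{T}=\hat{\mathcal{P}}_++\hat{\mathcal{O}}$ acts only in the variable $z$ and, through $\hat{\mathcal{O}}$, involves the contour $\gamma\subset V^e$, which is disjoint from $\partial\Sigma$; hence $\mathcal{T}$ commutes with $\oint_{\partial\Sigma}$. Applying $\mathcal{T}$ under the integral sign and invoking \eqref{prop-T-K-1}, $\mathcal{T}_z G(z,\zeta)=\hat{G}_0(z,\zeta)=\frac{dx(z)}{x(z)-x(\zeta)}$ for $\zeta\in\tilde{V}$ (the point(s) where the radial slit meets $\partial\Sigma$ being immaterial to the contour integral), I obtain
\[
\mathcal{T}\eta_0(z)=-\frac{1}{2\pi\ci}\oint_{\partial\Sigma}\hat{G}_0(z,\zeta)\,\psi(\zeta)\,.
\]
Next I use that $\psi$ is $\iota$-invariant: writing $\psi=h(z)\,dz$ on $V$, the relation $\iota^*\psi=\psi$ forces $h(\pm 1)=0$, so $h/x'$ is holomorphic on $V$ and $\iota$-invariant and therefore descends through the two-sheeted branched covering $x\colon V\to U$ to a $1$-form $\psi_U\in\Omega(U)$ with $x^*\psi_U=\psi$. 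Since $\hat{G}_0$ is the pullback under $(x,x)$ of the Cauchy kernel $\frac{dx(z)}{x(z)-y}$ on $\mathbb{C}\mathbb{P}^1\times\mathbb{C}\mathbb{P}^1$ and $x$ restricts to a homeomorphism of $\partial\Sigma$ (the inner boundary of $V^i$) onto $\partial U$, the substitution $y=x(\zeta)$ gives, for $z\in V^e$ so that $x(z)\in U$,
\[
\mathcal{T}\eta_0(z)=-\frac{1}{2\pi\ci}\oint_{\partial U}\frac{dx(z)}{x(z)-y}\,\psi_U(y)=\psi(z)\,,
\]
the last equality being the Cauchy integral formula (with $\partial U$ positively oriented, see below). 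By the identity theorem on the connected set $V$, the equality $\mathcal{T}\eta_0=\psi$ holds throughout $V$; together with holomorphy this yields $\eta_0\in\mathrm{A}_h$.

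The step I expect to be the main nuisance is the orientation bookkeeping hidden in the last two displays: one must verify that the clockwise orientation of $\partial\Sigma$ (induced from $\Sigma\subset\mathbb{C}\mathbb{P}^1$), together with the minus sign in the definition of $\eta_0$, produces $+\psi$ rather than $-\psi$. The key point is that the Joukowski map $x$ is orientation-reversing near $z=0$ (there $x(z)\sim\frac{\mathfrak{b}-\mathfrak{a}}{4z}$), so the clockwise loop $\partial\Sigma=\partial\Upsilon$ around the excised neighborhood $\Upsilon$ of $z=0$ is carried onto a positively oriented $\partial U$, which is exactly the orientation for which the Cauchy formula returns $+\psi_U(x(z))=+\psi(z)$. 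Apart from this sign check, the argument is a routine application of Cauchy's theorem and of the identity $\mathcal{T}_z G=\hat{G}_0$ already established in Proposition \ref{prop-T-K}.
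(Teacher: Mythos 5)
Your proof is correct, and the core computation is carried out by a genuinely different (though ultimately equivalent) route from the paper's. Both arguments first use $\mathcal{T}_z\, G(z,\zeta)=\hat{G}_0(z,\zeta)$ from Proposition \ref{prop-T-K} and the interchange of $\mathcal{T}$ with $\oint_{\partial\Sigma}$ to reduce the claim to evaluating $-\frac{1}{2\pi\ci}\oint_{\partial\Sigma}\hat{G}_0(z,\zeta)\,\psi(\zeta)$. The paper then stays upstairs on $\Sigma$: it writes twice this integral as the sum of the integrals over $\partial\Sigma$ and over $\iota(\partial\Sigma)$ (using $\hat{G}_0(z,\zeta)=\hat{G}_0(z,\iota(\zeta))$ and $\iota^*\psi=\psi$), and collapses the resulting cycle onto the two poles $\zeta=z$ and $\zeta=\iota(z)$ of $\hat{G}_0(z,\cdot)\,\psi$, each contributing a residue $-\psi(z)$. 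You instead push the integral down through the branched covering $x\colon V\to U$: your observation that $\iota$-invariance forces the coefficient of $\psi$ to vanish at the ramification points, so that $\psi=x^*\psi_U$ for a holomorphic $\psi_U\in\Omega(U)$, converts the integral into a classical Cauchy integral over $\partial U$. Your descent lemma is precisely the quotient-space counterpart of the paper's symmetrization over the deck group; what your version buys is that the sign bookkeeping is isolated in a single application of Cauchy's formula, at the cost of the (correctly verified) descent step. Your explicit check of global holomorphy of $\eta_0$ on $\Sigma$ is a point the paper leaves implicit, and is welcome. One phrasing should be corrected: the Joukowski map is holomorphic, hence orientation-preserving; what actually happens near $z=0$ is that $z\mapsto c/z$ exchanges inside and outside, so the clockwise boundary of the excised neighbourhood of $0$ is carried to a positively oriented contour around $U$. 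Your conclusion and the resulting sign are right, but ``orientation-reversing'' is not the correct description of the mechanism.
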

\begin{proof}
By \eqref{prop-T-K-1}, the restriction of ${\eta_0}$ to ${V \subset \Sigma}$ satisfies
\begin{equation}
\mathcal{T}\eta_0 (z) = - \frac{1}{2 \pi \ci} \oint_{\partial \Sigma} {\hat{G}_0}(z, \zeta) \, \psi(\zeta) \, .
\end{equation}
In addition, we have
\begin{align}
\frac{2}{2  \pi \ci} \oint_{\partial \Sigma} {\hat{G}_0}(z, \zeta) \, \psi(\zeta)
&=\frac{1}{2  \pi \ci} \oint_{\partial \Sigma} {\hat{G}_0}(z, \zeta) \, \psi(\zeta)
+ \frac{1}{2  \pi \ci}  \oint_{\iota (\partial \Sigma)} {\hat{G}_0}(z, \zeta) \, \psi(\zeta) 
\nonumber \\[1.2em]
&= \underset{ \zeta = z}{\res} \, {\hat{G}_0}(z, \zeta) \, \psi(\zeta)
+ \underset{\zeta = \iota (z)}{\res} \, {\hat{G}_0}(z, \zeta) \, \psi(\zeta)
\nonumber \\[0.5em]
&= - 2 \, \psi (z) 
\end{align}
for each ${z \in V \subset \Sigma \,}$. Therefore, we get
\begin{equation}
\mathcal{T} \eta_0 = \psi \, .
\end{equation}

\end{proof}

\begin{lemma} \label{lemma-kernel-T}
In general, the subspace ${\mathrm{E} \subset {H^0 (\Sigma , K_{\Sigma})}}$ is non-trivial, and its dimension depends on the given values to the Boltzmann weights ${\mathfrak{t}^{(0)}_{\vec{\ell}}}$ (or, equivalently, the formal parameters ${\alpha_n \, ,}$ ${3 \les n \les d}$) of the model. 
\end{lemma}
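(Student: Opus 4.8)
The plan is to reduce the homogeneous equation $\mathcal{T}\tilde{\eta}=0$, imposed on a holomorphic $1$-form $\tilde{\eta}$ on $\Sigma$, to a linear problem on a single fixed finite-dimensional space, after which the dependence of $\dim\mathrm{E}$ on the parameters is transparent. First I would record two structural properties of $\mathcal{T}=\hat{\mathcal{P}}_{+}+\hat{\mathcal{O}}$, both already implicit in the proof of Proposition~\ref{prop-T-K}. (i) $\hat{\mathcal{O}}$ annihilates every $\iota$-invariant holomorphic $1$-form on $V$: its kernel $\hat{R}(z,\zeta)=d_{z}T_{0,2}(x(z),x(\zeta))$ is, for fixed $z$, a polynomial in $x(\zeta)$, hence $\iota$-invariant and holomorphic in $\zeta$ on the annulus $V$, so multiplying it by an $\iota$-invariant holomorphic form produces an $\iota$-invariant holomorphic form whose $\gamma$-period vanishes. (ii) Using Hypothesis~\ref{Hypo-1}(ii) and the expansion \eqref{T_(0,2)-2}, one gets $\hat{\mathcal{O}}\phi(z)=\sum_{k=1}^{d-1}\lambda_{k}(\phi)\,(1-z^{-2})(z+z^{-1})^{k-1}\,dz$ with linear functionals $\lambda_{k}$, so $\operatorname{im}(\hat{\mathcal{O}})$ is contained in the $(d-1)$-dimensional space $S\coloneqq\operatorname{span}\{(1-z^{-2})(z+z^{-1})^{k-1}dz:1\le k\le d-1\}$, which is precisely the space of $\iota$-invariant Laurent-polynomial $1$-forms with monomials $z^{j}dz$, $-d\le j\le d-2$.

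Next, since a holomorphic $1$-form on $\Sigma=\mathbb{C}\mathbb{P}^{1}\setminus\Upsilon$ — holomorphic in particular at $\infty\in\Sigma$ — has the form $\tilde{\eta}=\sum_{n\ge2}c_{n}z^{-n}\,dz$ on $V$, and $\iota^{*}(z^{-n}dz)=-z^{n-2}dz$, the equation $\mathcal{T}\tilde{\eta}=0$ reads $\hat{\mathcal{P}}_{+}\tilde{\eta}=\sum_{n\ge2}c_{n}(z^{-n}-z^{n-2})\,dz=-\hat{\mathcal{O}}\tilde{\eta}\in S$; comparing Laurent coefficients on the annulus $V$ forces $c_{n}=0$ for $n\ge d+1$. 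Thus every $\tilde{\eta}\in\mathrm{E}$ is a Laurent-polynomial $1$-form $\sum_{n=2}^{d}c_{n}z^{-n}dz$, so $\mathrm{E}$ embeds into $\mathbb{C}^{d-1}$ via $\tilde{\eta}\mapsto(c_{2},\dots,c_{d})$, and on this $(d-1)$-dimensional space $\mathcal{T}$ acts as $\Phi+\hat{\mathcal{O}}$, where $\Phi(c)\coloneqq\tilde{\eta}+\iota^{*}\tilde{\eta}\in S$ is a linear isomorphism that does not depend on any parameter of the model (the monomials $z^{-n}$ and $z^{n-2}$, $2\le n\le d$, occupy disjoint ranges, so $\Phi$ is injective, hence onto $S$). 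Consequently $\mathrm{E}\cong\ker(\Phi+\hat{\mathcal{O}})$ and
\[
\dim\mathrm{E}=(d-1)-\operatorname{rank}\bigl(\Phi+\hat{\mathcal{O}}\bigr),
\]
where the matrix of $\Phi+\hat{\mathcal{O}}$ depends on $\mathfrak{b}$ and on the cylinder weights $\rt^{(0)}_{\ell_{1},\ell_{2}}$ through the $\nu_{k,m}$.

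It remains to see that this rank genuinely varies. For generic parameters it is maximal, so $\mathrm{E}=0$ there; e.g. when all $\alpha_{n}=0$ the operator $\hat{\mathcal{O}}$ has rank one, $\mathfrak{b}=2\sqrt{t}$, and $\det(\Phi+\hat{\mathcal{O}})\neq0$. To exhibit non-triviality I would compute $\det(\Phi+\hat{\mathcal{O}})$ in the smallest nontrivial case $d=3$ (cubic potential): writing $\beta=\mathfrak{b}/2$, one finds in the basis above
\[
\Phi+\hat{\mathcal{O}}=\begin{pmatrix}-1-\beta^{2}/t & \alpha_{3}\beta^{3}/t\\[2pt] 2\alpha_{3}\beta^{3}/t & -1\end{pmatrix},\qquad \det\bigl(\Phi+\hat{\mathcal{O}}\bigr)=1+\frac{\beta^{2}}{t}-\frac{2\alpha_{3}^{2}\beta^{6}}{t^{2}},
\]
a non-constant function of the parameters which vanishes on the nonempty locus $t^{2}+t\beta^{2}=2\alpha_{3}^{2}\beta^{6}$; there $\dim\mathrm{E}\ge1$. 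More generally the rank of $\Phi+\hat{\mathcal{O}}$ drops further on higher-codimension strata of parameter space, giving $\dim\mathrm{E}\ge2$, etc. This shows that $\mathrm{E}$ is non-trivial for suitable Boltzmann weights and that $\dim\mathrm{E}$ is not constant.

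I expect the reduction step to be the main obstacle, in particular the Laurent-coefficient argument bounding the degree of $\tilde{\eta}$: it requires care with the function spaces involved (holomorphy of $\tilde{\eta}$ on all of $\Sigma$, including at $\infty$, versus $\iota^{*}\tilde{\eta}$ being a priori defined only on the annular neighborhood $V$, and the convergence on $V$ of the two one-sided Laurent series) and with the exact description of $\operatorname{im}(\hat{\mathcal{O}})$. A secondary point is to make precise that $\{\det(\Phi+\hat{\mathcal{O}})=0\}$ actually meets the tame parameter range; this is cleanest with a confining even potential (e.g. degree four with $\alpha_{4}<0$, $\alpha_{3}=0$), for which the model is genuinely convergent and one checks by an intermediate-value argument, as the quartic coupling grows, that $\det(\Phi+\hat{\mathcal{O}})$ changes sign.
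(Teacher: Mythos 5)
Your reduction is exactly the paper's: expand $\tilde{\eta}=\sum_{n\gre 2}a_{-n}z^{-n}\,dz$ on the annulus, note that $\hat{\mathcal{P}}_{+}\tilde{\eta}=-\hat{\mathcal{O}}\tilde{\eta}$ lands in the $(d-1)$-dimensional space of $\iota$-invariant Laurent-polynomial forms, which kills $a_{-n}$ for $n\gre d+1$ and turns $\mathcal{T}\tilde{\eta}=0$ into the homogeneous system $\sum_{j}\mathrm{C}_{ij}\mathrm{x}_{j}=0$ with $\mathrm{x}_i=a_{-i-1}$, so that $\dim\mathrm{E}=\dim\ker(\mathrm{C}_{ij})$ --- this is precisely where the paper's proof stops. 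Your $d=3$ computation is correct (with $\nu_{1,1}=-1/t$, $\nu_{1,2}=\nu_{2,1}=\alpha_{3}/t$ one gets $b_{1}=-\beta^{2}a_{-2}/t+\alpha_{3}\beta^{3}a_{-3}/t$, $b_{2}=2\alpha_{3}\beta^{3}a_{-2}/t$ and hence $\det=1+\beta^{2}/t-2\alpha_{3}^{2}\beta^{6}/t^{2}$), and it actually supplies the non-triviality assertion that the paper leaves implicit in the phrase ``in general.'' The only caveat is the one you flag yourself: for $\alpha_{3}\neq 0$ the support is generically not symmetric, so Hypothesis~\ref{Hypo-1}(ii) fails and $\mathfrak{b}$ is a constrained function of the couplings rather than a free parameter; the even-potential, intermediate-value route you sketch is the right way to certify that the vanishing locus of the determinant meets the admissible (tame) parameter range.
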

\begin{proof}
For simplicity, we give a proof in the case where the support $\Gamma$ of the large-$N$ spectral distribution $\mu$ of the model is of the form 
${\Gamma = [ -\mathfrak{b} ,\mathfrak{b}] \subset \mathbb{R} \,}$. Consider a holomorphic 1-form 
${{\tilde{\eta}} = f(z) \, dz \in \mathrm{E} \,}$. We have 
\begin{equation} \label{lemma-kernel-T-1}
f(z) = \sum_{n=2}^{\infty} \frac{a_{-n}}{z^n} \, ,  
\end{equation}
where 
\begin{equation}
a_{-n} = \frac{1}{2 \pi \ci} \oint_\gamma f(z) \, z^{n-1} \, \di z \, , \quad n \gre 2 \, ,
\end{equation}
are the Fourier coefficients of the restriction of $f$ to $\gamma \,$. Since 
\begin{equation}
\mathcal{T} {\tilde{\eta}} = 0 \, , \quad \text{on} \ V \subset \Sigma \, ,
\end{equation}
considering \eqref{T_0,2-2}, we get
\begin{equation} \label{lemma-kernel-T-2}
f(z) - \frac{1}{z^2} \, f (\frac{1}{z}) + {\big( 1 - \frac{1}{z^2} \big)} \sum_{k=1}^{d-1} b_k  {\Big( z+ \frac{1}{z} \Big)}^{k-1} = 0 \, ,
\end{equation}
on ${V \subset \Sigma \,}$, where, for each ${1 \les k \les {d-1} \,}$,
\begin{equation}
b_k  = \frac{k}{2 \pi \ci} \, \sum_{m=1}^{d-k} \nu_{k,m} (\mathfrak{b}/2)^{k+m}
\oint_\gamma {\Big( \tau+ \frac{1}{\tau} \Big)}^{m} f(\tau) \di \tau \, .
\end{equation}
By substituting \eqref{lemma-kernel-T-1} into \eqref{lemma-kernel-T-2}, we get ${d-1}$ linear equations of the form
\begin{equation}
\sum_{j=1}^{d-1} \mathrm{C}_{ij} \, \mathrm{x}_j = 0 \, ,
\end{equation}
satisfied by ${\mathrm{x}_i = a_{-i-1} \,}$, ${1 \les i \les {d-1} \,}$, where the coefficients ${\mathrm{C}_{ij}}$ depend only on $\mathfrak{b} \,$, and ${\rt^{(0)}_{\ell_1 , \ell_2} \,}$, ${ (\ell_1 , \ell_2) \in {\mathfrak{L}}_{\scaleto{\mathrm{cylinder}}{4pt}} \,}$. Thus, the dimension of the subspace ${\mathrm{E} \subset {H^0 (\Sigma , K_{\Sigma})}}$ equals the dimension of the kernel of the matrix ${(\mathrm{C}_{ij}) \, }$.

\end{proof}

Considering Proposition \ref{prop-T-K}, Lemma \ref{lemma-form-eta-0} and Lemma \ref{lemma-kernel-T}, in general, a 1-form ${\phi \in \mathrm{A}}$ can be written as
\begin{equation}
\phi = \phi_0 + \eta_0 + {\tilde{\eta}}
\end{equation}
for some ${{\tilde{\eta}} \in \mathrm{E}\, }$. From now on, we consider the following assumption:
\begin{hypothesis} \label{Hypo-2}
The given values to the Boltzmann weights ${\mathfrak{t}^{(0)}_{\vec{\ell}}}$ (or, equivalently, the formal parameters ${\alpha_n \, ,}$ ${3 \les n \les d}$) of the model are such that the subspace $\mathrm{E} \subset {H^0 (\Sigma , K_{\Sigma})}$ of global holomorohic 1-forms ${\tilde{\eta} \,}$, whose restriction to ${V \subset \Sigma}$ satisfies the homogeneous equation
\begin{equation}
\mathcal{T} {\tilde{\eta}} = 0 \, ,
\end{equation}
is trivial.
\end{hypothesis}
\noindent
Therefore, a 1-form 
$\phi \in \mathscr{Q} (\Sigma) \,$, satisfying 
\eqref{operator-T-3}-\eqref{operator-T-5}, is uniquely given by 
\begin{equation}
\phi = \phi_0 + \eta_0 \, ,
\end{equation}
where the 1-forms ${\phi_0}$ and ${\eta_0}$ are defined by \eqref{phi-0-form-3} and \eqref{lemma-form-eta-0-1}, respectively.


\section{Blobbed topological recursion formula}

In this section we show that all the stable coefficients ${W_{g,n} (x_1 , \cdots , x_n)}$ of the large $N$ expansion of the correlators of our model can be computed recursively using the blobbed topological recursion formula \cite{Borot-2014}. In the following, without further explicit mention, a couple ${(g,n)}$ is assumed to be stable, i.e. ${(g,n) \neq (0,1) , (0,2) \,}$.

Let ${f (x) \in \mathscr{O} (\mathbb{C} \backslash \Gamma)}$ be a holomorphic function on 
${\mathbb{C} \backslash \Gamma}$ with a jump discontinuity on $\Gamma$. We denote by 
${\delta f(s)}$ and ${\sigma f (s)}$ the functions given by
\begin{equation} 
\delta f(s) = \lim_{\epsilon \to 0^+} [f (s + \ci \epsilon) - f (s - \ci \epsilon)] \, , 
\quad s \in {\Gamma^{\mathrm{o}}} \subset \mathbb{R} \, ,
\end{equation}
and
\begin{equation} 
\sigma f (s) = \lim_{\epsilon \to 0^+} [f (s + \ci \epsilon) + f (s - \ci \epsilon)] \, ,
\quad s \in {\Gamma^{\mathrm{o}}} \subset \mathbb{R} \, ,
\end{equation}
respectively. By considering the rank $n$ Schwinger-Dyson equation to order 
${N^{3-2g-n}}$, given by \eqref{rank n SDE-N exp}, as ${x \to s \pm \ci \epsilon \,}$, ${s \in {\Gamma^{\mathrm{o}}}}$, we get
\begin{align} \label{jump-discon-rank-n-SDE}
&\quad \ \delta W_{0,1} (s) \, \big\{ {\sigma W_{g,n} (s, x_I) + \mathcal{O} W_{g,n} (s, x_I)
+ \partial_s V_{g,n} (s; x_I)} \big\} 
\nonumber \\[0.5em]
&+ \, \delta W_{g,n} (s, x_I) \, \big\{ {\sigma W_{0,1} (s) + \mathcal{O} W_{0,1} (s)
+ \partial_s T_{0,1} (s)} \big\}  
\nonumber \\[0.5em]
&+ \, \sum_{i \in I} \delta W_{g, \, n-1} (s, x_{I \backslash \{i \}}) \, \Big\{ {\sigma W_{0,2} (s , x_i) + \mathcal{O} W_{0,2} (s, x_i)
+ \frac{1}{(s-x_i)^2}} \Big\} 
\nonumber \\[0.3em]
&+ \, \sum_{\substack{J \subseteq I \, , \, 0 \leqslant f \leqslant g  \\  (J,f) \neq (\emptyset , 0) , (I,g) , 
\\ \quad \ (I \backslash \{ i \} , g )}}
\delta W_{f, \, {|J| +1}} (s,x_J) \,
\Big\{ \sigma W_{{g-f} ,  \, {n - |J|}} (s, x_{I \backslash J}) \, +
\nonumber \\
&\hspace{4.3cm} + \mathcal{O} W_{{g-f} , \,  {n - |J|}} (s, x_{I \backslash J})
+ { \partial_s V_{{g-f} , \,  {n - |J|}} (s; x_{I \backslash J}) } \Big\} 
\nonumber \\[0.3em]
&+ \, \delta_{s,2} \, \big\{ { \sigma_{s,1} W_{{g-1},  \, {n+1}} (s, s , x_I) 
+ \mathcal{O} W_{{g-1}, \,  {n+1}} (s, s, x_I)
+ \partial_{s} V_{{g-1},  \, {n+1}} (s; s, x_I) } \big\} 
\nonumber \\[0.3em]
&= \, 0 \, ,
\end{align}
where ${\delta_{s,2}}$ (resp. ${\sigma_{s,1}}$) means that the operator $\delta$ (resp. $\sigma$) is acting on the second (resp. first) argument. In \eqref{jump-discon-rank-n-SDE}, the functions ${V_{g,n} (x;x_I) \,}$, called the \emph{potentials for higher topologies} \cite{Borot-2014},\footnote{We consider a slightly modified version of Equation (4.13) in \cite{Borot-2014} to get \eqref{higher potential}.} are given by
\begin{align} \label{higher potential}
V_{g,n} (x ; x_I) = \, &\delta_{g,2} \, \delta_{n,1} \, T_{2,1} (x)
\nonumber \\
&+ \sum_{\substack{2 \les k \les 2 \mathfrak{g}  \\[0.1em]  0 < h}} \
\sum_{\substack{ K  \vdash \llbracket 2,k \rrbracket \\[0.1em]  
0 \les f_1 , \cdots , f _{[K]}  \\[0.1em]
 h+ {(k-1)} - [K] + \sum_i f_i = g \\[0.1em]
J_1 \sqcup \cdots \sqcup J_{[K]} = I }}
\oint_{C_{{\Gamma}}} \Bigg\{  \Big[ \, {\prod_{r=2}^k  \frac{\di \xi_r}{2 \pi \ci }} \, \Big] \times
\nonumber \\[0.5em]
&\hspace{1.5cm} \times \Big[ \, \frac{T_{h,k} (x, \xi_2 , \cdots , \xi_k)}{(k-1)!}
\, \prod_{i=1}^{[K]} W_{f_i, \, {|K_i | + |J_i |}} (\xi_{K_i} , x_{J_i}) \Big] \Bigg\} \, .
\end{align}

Let ${\mathcal{V}_{g,n} (z ; z_I)}$ be the differential of degree ${n-1}$ over ${\Sigma^n}$ given by
\begin{equation}
 \mathcal{V}_{g,n} (z ; z_I) =  V_{g,n} (x(z); x(z_I))  \prod_{i \in I} dx(z_i) \, .
\end{equation}
Consider the differential
\begin{equation}
d_z \mathcal{V}_{g,n} (z ; z_I) = \big[ \partial_{x(z)} V_{g,n} (x(z); x(z_I)) \big]  \, dx(z) \prod_{i \in I} dx(z_i) 
\end{equation}
of degree ${n}$ over ${\Sigma^n \,}$. From now on, we consider a \emph{fixed} ${z_I \in \Sigma^{n-1}\,}$. The restriction of the exact 1-form 
\begin{equation}
d_z \mathcal{V}_{g,n} (z ; z_I) = \big[ \partial_{x(z)} V_{g,n} (x(z); x(z_I)) \big] \, dx(z)
\end{equation}
to ${V \subset \Sigma}$ is in ${{\Omega}_{\mathrm{inv}} (V) \,}$. Using 
\eqref{jump-discon-rank-n-SDE}, by induction on ${2g+n-2 \,}$, it can be shown \cite{Borot-2014} that each 1-form ${\omega_{g,n} (z,z_I)}$ satisfies
\begin{equation}
\mathcal{T} {\omega_{g,n} (z,z_I)} = - {d_z\mathcal{V}_{g,n} (z ; z_I)}
\end{equation}
on ${V \subset \Sigma \,}$.

By recasting the Schwinger-Dyson equations, one can show \cite{Borot-2014} that
\begin{align} \label{quadratic differential-1}
\omega_{{g-1}, \, {n+1}} (z , \iota (z) , z_I) 
\, + \sum_{  J \subseteq I  \, , \  0 \les f \les g }
\omega_{f, \, {|J| +1}} (z , z_J) \, \omega_{{g-f} , \, {n - |J|}} (\iota (z) , z_{I \backslash J} &)
\nonumber \\
= \mathcal{Q}_{g,n} (z ; z_I &) \, ,
\end{align}
where ${\mathcal{Q}_{g,n} (z ; z_I)}$ is a \emph{local} holomorphic quadratic differential, i.e. a local section of ${{K_{\Sigma}^{\otimes 2}} \to \Sigma}$ over ${V \subset \Sigma}$, with double zeros at the ramification points $\mathfrak{R} \,$. One can rewrite \eqref{quadratic differential-1} in the following form
\begin{equation} \label{quadratic differential-2}
\mathcal{P}_- \, \omega_{g,n} (z, z_I) 
= \frac{1}{  \hat{\mathcal{P}}_- \,\omega_{0,1} (z)} \, \big[ \mathcal{E}_{g,n} (z , \iota(z) ; z_I) + \tilde{\mathcal{Q}}_{g,n} (z ; z_I) \big] \, ,
\end{equation}
where 
\begin{align} \label{quadratic differential-3}
\mathcal{E}_{g,n} (z , \iota(z) ; z_I) = \ &\omega_{{g-1}, \, {n+1}} (z , \iota (z) , z_I) 
\nonumber \\
&+ \sum_{ \substack{ J \subseteq I  \, , \  0 \les f \les g \\[0.2em] (J,f) \neq (\emptyset ,0) \, , \, (I ,g)} }
\omega_{f, \, {|J| +1}} (z , z_J) \, \omega_{{g-f} , \, {n - |J|}} (\iota (z) , z_{I \backslash J}) \, ,
\end{align}
and 
\begin{equation} \label{quadratic differential-4}
\tilde{\mathcal{Q}}_{g,n} (z ; z_I) = 
\mathcal{P}_+ \, \omega_{g,n} (z, z_I) \ \hat{\mathcal{P}}_+ \,\omega_{0,1} (z)
- \mathcal{Q}_{g,n} (z ; z_I) \, .
\end{equation}

Considering \eqref{W01-sqrt}, the zeroes of the 1-from ${  \hat{\mathcal{P}}_- \,\omega_{0,1} (z)}$ in 
${V \subset \Sigma}$ occur only at the ramification points $\mathfrak{R} \,$, and their order is exactly two. Since the quadratic differential ${\tilde{\mathcal{Q}}_{g,n} (z ; z_I)}$ has double zeroes at $\mathfrak{R} \,$, the 1-form
\begin{equation}
\frac{1}{  \hat{\mathcal{P}}_- \,\omega_{0,1} (z)} \, \tilde{\mathcal{Q}}_{g,n} (z ; z_I) 
\end{equation}
is holomorphic on $V$. Therefore, the singularities of ${\omega_{g,n} (z, z_I)}$ in $V$ is the same as the 1-form
\begin{equation} 
\frac{1}{  \hat{\mathcal{P}}_- \,\omega_{0,1} (z)} \, \mathcal{E}_{g,n} (z , \iota(z) ; z_I) \, . 
\end{equation}
Using \eqref{quadratic differential-2}, by induction on ${2g+n-2 \,}$, it can be shown that the poles of the 1-form ${\omega_{g,n} (z, z_I)}$ occur only at the ramification points $\mathfrak{R} \,$. 

Considering \eqref{quadratic differential-2} and \eqref{phi-0-form-1}, we can use the local Cauchy kernel 
${G(z , \zeta)}$ to construct a meromorphic 1-form
\begin{equation}
{\tilde{\omega}}_{g,n} (z, z_I) = \frac{1}{2} \, {(\mathcal{K} - \tilde{\mathcal{K}})}
\Big[ \frac{1}{  \hat{\mathcal{P}}_- \,\omega_{0,1} (z)} \, \mathcal{E}_{g,n} (z , \iota(z) ; z_I) \Big]
\end{equation}
such that 
\begin{equation}
\Phi_{g,n} (z ; z_I)  = {\omega}_{g,n} (z, z_I) - {\tilde{\omega}}_{g,n} (z, z_I) 
\end{equation}
is a holomorphic 1-form on $\Sigma \,$, satisfying 
\begin{equation}
\mathcal{T} \Phi_{g,n} (z ; z_I)  = - {d_z\mathcal{V}_{g,n} (z ; z_I)} \, , \quad \text{on} \ V \subset \Sigma \, .
\end{equation}
The 1-form ${{\tilde{\omega}}_{g,n} (z, z_I)}$ can be expressed in terms of the \emph{recursion kernel} (see \cite{Eynard-Orantin-2007}, \cite{Borot-Eynard-Orantin-2015}) 
\begin{equation}
K(z , \zeta) =  \frac{1}{2} \, \frac{ \int_{\iota (\zeta)}^{\zeta}  \omega_{0,2} (z , \tau)}
{{\omega_{0,1} (\zeta) - \omega_{0,1} (\iota (\zeta))}} 
\ \in \Gamma (\Sigma \times V , \Omega \boxtimes \Omega^{-1})
\end{equation}
in the following form
\begin{equation} \label{BTR-1}
{\tilde{\omega}}_{g,n} (z, z_I) = \sum_{p \in \mathfrak{R}} \underset{\zeta=p}{\res} \
K(z , \zeta) \, \mathcal{E}_{g,n} (\zeta , \iota(\zeta) ; z_I) \, .
\end{equation}
Considering Lemma \ref{lemma-form-eta-0} and Hypothesis \ref{Hypo-2}, the 1-form ${\Phi_{g,n} (z ; z_I)}$ is uniquely given by
\begin{align} \label{BTR-2}
\Phi_{g,n} (z ; z_I) 
&= \frac{1}{2 \pi \ci} \oint_{\partial \Sigma} G(z, \zeta) \, {d_{\zeta}\mathcal{V}_{g,n} (\zeta ; z_I)} 
\nonumber \\[0.5em]
&= - \frac{1}{2 \pi \ci} \oint_{\partial \Sigma} \omega_{0,2}(z, \zeta) \, {\mathcal{V}_{g,n} (\zeta ; z_I)} \, .
\end{align}
Note that, for each stable ${(g,n)}$, the right-hand side of \eqref{BTR-1} and \eqref{BTR-2} involves only ${\omega_{g' , n'}}$ with ${2g'+ n' -2 < 2g+n -2 \,}$. Therefore, we have
\begin{theorem}
For the random matrix geometries of type ${(1,0)}$ with the distribution ${\di \rho = e^{- \mathcal{S} (D)} \di D \,}$, all the stable ${\omega_{g,n} \,}$, ${2g+n-2 >0 \,}$, can be computed recursively, using the blobbed topological recursion formula given by
\begin{equation}\label{BTR-3}
{\omega}_{g,n} (z, z_I) 
= - \frac{1}{2 \pi \ci} \oint_{\partial \Sigma} \omega_{0,2}(z, \zeta) \, {\mathcal{V}_{g,n} (\zeta ; z_I)}
+ \sum_{p \in \mathfrak{R}} \underset{\zeta=p}{\mathrm{Res}} \
K(z , \zeta) \, \mathcal{E}_{g,n} (\zeta , \iota(\zeta) ; z_I) \, .
\end{equation}
The initial data for the recursion relation \eqref{BTR-3} is the 1-form ${\omega_{0,1} (z)}$ and the fundamental bidifferential ${\omega_{0,2} (z, \zeta) \,}$.
\end{theorem}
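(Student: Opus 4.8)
The plan is to assemble the pieces already established in Sections 8--10 into a genuine recursion, verifying three things: (1) that the right-hand side of \eqref{BTR-3} depends only on $\omega_{g',n'}$ with $2g'+n'-2 < 2g+n-2$, so that the recursion is well-founded; (2) that $\omega_{g,n}$ indeed equals that right-hand side; and (3) that knowledge of $\omega_{0,1}$ and $\omega_{0,2}$ suffices to initialize everything that appears. Step (3) is essentially bookkeeping: the recursion kernel $K(z,\zeta)$ is built from $\omega_{0,1}$ and $\omega_{0,2}$; the local Cauchy kernel $G(z,\zeta)$ and hence $\omega_{0,2}(z,\zeta)$ in the boundary integral are the $\omega_{0,2}$ data; the potentials $\mathcal{V}_{g,n}$ are defined in \eqref{higher potential} purely in terms of the $T_{h,k}$ (explicit from the action) and lower $W_{f_i,n_i}$; and $\mathcal{E}_{g,n}$ in \eqref{quadratic differential-3} excludes the terms $(J,f)=(\emptyset,0),(I,g)$, which are precisely the ones that would reintroduce $\omega_{g,n}$ itself.

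For step (1) I would check the index count carefully. In \eqref{quadratic differential-3} each summand $\omega_{f,|J|+1}(z,z_J)\,\omega_{g-f,n-|J|}(\iota(z),z_{I\setminus J})$ with $(J,f)\notin\{(\emptyset,0),(I,g)\}$ has both factors of lower ``complexity'' $2g'+n'-2$, since removing at least one marked point or lowering the genus strictly decreases the invariant while keeping stability; and the term $\omega_{g-1,n+1}(z,\iota(z),z_I)$ has complexity $2(g-1)+(n+1)-2 = 2g+n-3 < 2g+n-2$. In the potential \eqref{higher potential} the constraint $h+(k-1)-[K]+\sum_i f_i = g$ with $h\ge 1$ and $k\ge 2$ forces each $f_i < g$ (or fewer marked points), again lowering complexity. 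So the recursion terminates, with base cases $\omega_{0,1}$ and $\omega_{0,2}$.

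For step (2), the core identity to establish is the decomposition $\omega_{g,n} = \tilde\omega_{g,n} + \Phi_{g,n}$, where $\tilde\omega_{g,n} = \mathcal{P}_-$-part (the ``polar'' piece supported at the ramification points $\mathfrak{R}$) and $\Phi_{g,n}$ is the holomorphic piece. The $\mathcal{P}_-$-part is pinned down by \eqref{quadratic differential-2}: since $\hat{\mathcal{P}}_-\omega_{0,1}$ has exactly double zeros at $\mathfrak{R}$ (from \eqref{W01-sqrt}) while $\tilde{\mathcal{Q}}_{g,n}$ vanishes to order two there, the only singularities of $\omega_{g,n}$ in $V$ sit at $\mathfrak{R}$, and applying $\tfrac12(\mathcal{K}-\tilde{\mathcal{K}})$ to $\tfrac{1}{\hat{\mathcal{P}}_-\omega_{0,1}}\mathcal{E}_{g,n}$ reproduces those principal parts; the residue form \eqref{BTR-1} with the kernel $K(z,\zeta)$ is the standard rewriting of this (Borot). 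For $\Phi_{g,n}$ one uses $\mathcal{T}\Phi_{g,n} = -d_z\mathcal{V}_{g,n}$ on $V$ together with Hypothesis~\ref{Hypo-2} (triviality of $\mathrm{E}$) and Lemma~\ref{lemma-form-eta-0}: these say the inhomogeneous equation $\mathcal{T}\phi = -d_z\mathcal{V}_{g,n}$ with holomorphicity on $\Sigma$ has the unique solution $\eta_0$ of \eqref{lemma-form-eta-0-1}, which is exactly the boundary integral $-\tfrac{1}{2\pi\ci}\oint_{\partial\Sigma}\omega_{0,2}(z,\zeta)\mathcal{V}_{g,n}(\zeta;z_I)$ after integrating by parts (using $d_\zeta$ on $\mathcal{V}$ versus $\omega_{0,2}$ as $d_\zeta G$). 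Summing the two pieces gives \eqref{BTR-3}.

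The main obstacle is step (2)'s inductive input: establishing that $\mathcal{T}\omega_{g,n} = -d_z\mathcal{V}_{g,n}$ on $V$ and the quadratic relation \eqref{quadratic differential-1} genuinely for \emph{all} stable $(g,n)$ by induction on $2g+n-2$, taking the jump-discontinuity form \eqref{jump-discon-rank-n-SDE} of the rank-$n$ SDE, applying the identity theorem and Riemann's removable-singularity theorem to promote an identity on $\Gamma^{\mathrm o}$ to one on all of $V\subset\Sigma$ (as was done for $\omega_{0,2}$ around \eqref{monodromy-W_(0,2)}), and checking that the holomorphy of $\omega_{g,n}(z,z_I)$ away from $\mathfrak R$ in the $z$-variable is preserved at each inductive step when one analytically continues in the remaining arguments $z_i$. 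That bootstrap — verifying the poles never escape $\mathfrak R$ and the $\mathcal{P}_+$-part stays in $\Omega_{\mathrm{inv}}(V)$ — is where the real work lies; everything after it is the formal Cauchy-kernel manipulation encapsulated in Proposition~\ref{prop-T-K} and Lemma~\ref{lemma-form-eta-0}.
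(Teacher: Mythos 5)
Your proposal follows essentially the same route as the paper: the splitting $\omega_{g,n}=\tilde{\omega}_{g,n}+\Phi_{g,n}$ with the polar part pinned down at $\mathfrak{R}$ via $\tfrac12(\mathcal{K}-\tilde{\mathcal{K}})$ acting on $\mathcal{E}_{g,n}/\hat{\mathcal{P}}_-\omega_{0,1}$ and rewritten through the recursion kernel, the holomorphic part determined uniquely by $\mathcal{T}\Phi_{g,n}=-d_z\mathcal{V}_{g,n}$ together with Hypothesis \ref{Hypo-2} and Lemma \ref{lemma-form-eta-0}, and the complexity count $2g'+n'-2<2g+n-2$ guaranteeing well-foundedness. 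The inductive inputs you flag as the remaining work (the identities $\mathcal{T}\omega_{g,n}=-d_z\mathcal{V}_{g,n}$ and \eqref{quadratic differential-1}) are exactly the steps the paper itself delegates to \cite{Borot-2014}, so your treatment matches the paper's level of detail.
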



\section*{Acknowledgment}
\addcontentsline{toc}{section}{Acknowledgment}
We would like to thank Ga\"etan Borot for illuminating discussions and sharing his insights  at an early stage of this work. We also thank Matilde Marcolli for bringing \cite{Gesteau-2019} to our attention. We would like to thank the referee for constructive remarks and suggestions for further development of the line of research we initiated in this paper which we hope to come back to them soon.

\addcontentsline{toc}{section}{References}

\begin{thebibliography}{10}

\bibitem{Anderson-Guionnet-Zeitouni-RMT}
G.~W. Anderson, A.~Guionnet, and O.~Zeitouni.
\newblock {\em An introduction to random matrices}, volume 118 of {\em Camb. Stud. Adv. Math.}
\newblock Cambridge University Press, Cambridge, 2010.

\bibitem{Barrett-2015}
J.~W. Barrett.
\newblock Matrix geometries and fuzzy spaces as finite spectral triples.
\newblock {\em J. Math. Phys.}, 56(8):082301, 25, 2015.

\bibitem{Barrett-Glaser-2016}
J.~W. Barrett and L.~Glaser.
\newblock Monte {C}arlo simulations of random non-commutative geometries.
\newblock {\em J. Phys. A}, 49(24):245001, 27, 2016.

\bibitem{Borot-2014}
G.~Borot.
\newblock Formal multidimensional integrals, stuffed maps, and topological recursion.
\newblock {\em Ann. Inst. Henri Poincar\'e D}, 1(2):225--264, 2014.

\bibitem{Borot-Bouttier-Guitter-2012}
G.~Borot, J.~Bouttier, and E.~Guitter.
\newblock More on the {$O(n)$} model on random maps via nested loops: loops with bending energy.
\newblock {\em J. Phys. A}, 45(27):275206, 32, 2012.

\bibitem{Borot-Eynard-Orantin-2015}
G.~Borot, B.~Eynard, and N.~Orantin.
\newblock Abstract loop equations, topological recursion and new applications.
\newblock {\em Commun. Number Theory Phys.}, 9(1):51--187, 2015.

\bibitem{Borot-Guionnet-2013}
G.~Borot and A.~Guionnet.
\newblock Asymptotic expansion of {$\beta$} matrix models in the one-cut regime.
\newblock {\em Comm. Math. Phys.}, 317(2):447--483, 2013.

\bibitem{Borot-Guionnet-Kozlowski-2015}
G.~Borot, A.~Guionnet, and K.~K. Kozlowski.
\newblock Large-{$N$} asymptotic expansion for mean field models with {C}oulomb gas interaction.
\newblock {\em Int. Math. Res. Not. IMRN}, (20):10451--10524, 2015.

\bibitem{Borot-Shadrin-2017}
G.~Borot and S.~Shadrin.
\newblock Blobbed topological recursion: properties and applications.
\newblock {\em Math. Proc. Cambridge Philos. Soc.}, 162(1):39--87, 2017.

\bibitem{Brezin-Itzykson-Parisi-Zuber-1987}
E.~Br\'ezin, C.~Itzykson, G.~Parisi, and J.~B. Zuber.
\newblock Planar diagrams.
\newblock {\em Comm. Math. Phys.}, 59(1):35--51, 1978.

\bibitem{Connes-Chamseddine-1997}
A.~H. Chamseddine and A.~Connes.
\newblock The spectral action principle.
\newblock {\em Comm. Math. Phys.}, 186(3):731--750, 1997.

\bibitem{Connes-Chamseddine-Marcolli-2007}
A.~H. Chamseddine, A.~Connes, and M.~Marcolli.
\newblock Gravity and the standard model with neutrino mixing.
\newblock {\em Adv. Theor. Math. Phys.}, 11(6):991--1089, 2007.

\bibitem{Chekhov-Eynard-Orantin-2006}
L.~Chekhov, B.~Eynard, and N.~Orantin.
\newblock Free energy topological expansion for the 2-matrix model.
\newblock {\em J. High Energy Phys.}, (12), 2006.

\bibitem{Connes-1989}
A.~Connes.
\newblock Compact metric spaces, {F}redholm modules, and hyperfiniteness.
\newblock {\em Ergodic Theory Dynam. Systems}, 9(2):207--220, 1989.

\bibitem{Connes-1995}
A.~Connes.
\newblock Geometry from the spectral point of view.
\newblock {\em Lett. Math. Phys.}, 34(3):203--238, 1995.

\bibitem{Connes-1996}
A.~Connes.
\newblock Gravity coupled with matter and the foundation of non-commutative geometry.
\newblock {\em Comm. Math. Phys.}, 182(1):155--176, 1996.

\bibitem{Connes-2006}
A.~Connes.
\newblock Noncommutative geometry and the standard model with neutrino mixing.
\newblock {\em J. High Energy Phys.}, (11):081, 19, 2006.

\bibitem{Connes-2013}
A.~Connes.
\newblock On the spectral characterization of manifolds.
\newblock {\em J. Noncommut. Geom.}, 7(1):1--82, 2013.

\bibitem{Connes-Marcolli-NCG-QFT}
A.~Connes and M.~Marcolli.
\newblock {\em Noncommutative geometry, quantum fields and motives}, volume~55 of {\em Colloq. Publ., Am. Math. Soc.}
\newblock American Mathematical Society, Providence, RI; Hindustan Book Agency, New Delhi, 2008.

\bibitem{Eynard-CountingSurfaces}
B.~Eynard.
\newblock {\em Counting surfaces}, volume~70 of {\em Prog. Math. Phys.}
\newblock Birkh\"auser/Springer, [Cham], 2016.
\newblock CRM Aisenstadt chair lectures.

\bibitem{Eynard-Orantin-2007}
B.~Eynard and N.~Orantin.
\newblock Invariants of algebraic curves and topological expansion.
\newblock {\em Commun. Number Theory Phys.}, 1(2):347--452, 2007.

\bibitem{Eynard-Orantin-2009}
B.~Eynard and N.~Orantin.
\newblock Topological recursion in enumerative geometry and random matrices.
\newblock {\em J. Phys. A}, 42(29):293001, 117, 2009.

\bibitem{Fay-ThetaFun-RiemannSurfaces}
J.~D. Fay.
\newblock {\em Theta functions on Riemann surfaces}.
\newblock Lect. Notes Math., Vol. 352. Springer-Verlag, Berlin-New York, 1973.

\bibitem{Gesteau-2019}
E.~Gesteau.
\newblock Renormalizing {Y}ukawa interactions in the standard model with matrices and noncommutative geometry.
\newblock {\em J. Phys. A, Math. Theor.}, 54(3):035203, 2020.

\bibitem{hessam2022noncommutative}
H.~Hessam, M.~Khalkhali, N.~Pagliaroli, and L.~S. Verhoeven.
\newblock From noncommutative geometry to random matrix theory.
\newblock {\em J. Phys. A, Math. Theor.}, 55(41):413002, 2022.

\bibitem{Lawson-SpinGeo}
H.~B. Lawson, Jr. and M.-L. Michelsohn.
\newblock {\em Spin geometry}, volume~38 of {\em Princeton Math. Ser.}
\newblock Princeton University Press, Princeton, NJ, 1989.

\bibitem{Marcolli-NoncomCosmology}
M.~Marcolli.
\newblock {\em Noncommutative cosmology}.
\newblock World Scientific Publishing Co. Pte. Ltd., Hackensack, NJ, 2018.

\bibitem{Migdal-1983}
A.~A. Migdal.
\newblock {Loop Equations and 1/N Expansion}.
\newblock {\em Phys. Rept.}, 102:199--290, 1983.

\bibitem{tHooft-1974}
G.~'t~Hooft.
\newblock {A Planar Diagram Theory for Strong Interactions}.
\newblock {\em Nucl. Phys.}, B72:461, 1974.
\newblock [,337(1973)].

\bibitem{Takhtajan-2001}
L.~A. Takhtajan.
\newblock Free bosons and tau-functions for compact {R}iemann surfaces and closed smooth {J}ordan curves. {C}urrent correlation functions.
\newblock {\em Lett. Math. Phys.}, 56(3):181--228, 2001.
\newblock EuroConf\'erence Mosh\'e Flato 2000, Part III (Dijon).

\bibitem{Tyurin-1978}
A.~N. Tyurin.
\newblock On periods of quadratic differentials.
\newblock {\em Russ. Math. Surv.}, 33(6):169--221, 1978.

\end{thebibliography}

\end{document}